\newtheorem{theorem}{Theorem}
\newtheorem{definition}[theorem]{Definition}
\newtheorem{lemma}[theorem]{Lemma}
\newtheorem{corollary}[theorem]{Corollary}
\newtheorem{proposition}[theorem]{Proposition}
\newtheorem{example}{Example}
\newtheorem{fact}[theorem]{Fact}
\newtheorem{claim}[theorem]{Claim}
\newtheorem{remark}[theorem]{Remark}
\newcounter{step}
\newenvironment{step}{%
\addtocounter{step}{1}%
\begin{leftbar}\noindent\textbf {STEP \thestep :}\it%
}{\end{leftbar}}
\newenvironment{proof}{%
  \noindent{\it Proof\ }}{%
  \hspace*{\fill}$\square$
  \vspace{2ex}\\}
\newenvironment{proofof}[1]{%
  \vspace{2ex}
  \noindent{\it Proof of #1.\ }}{%
  \hspace*{\fill}$\square$
  \vspace{2ex}}
\newcommand{\true}{\mathsf{true}}
\newcommand{\false}{\mathsf{false}}
\newcommand{\calD}{\mathcal{D}}
 \newcommand{\calS}{\mathcal{S}}
 \newcommand{\calI}{\mathcal{I}}
 \newcommand{\calP}{\mathcal{P}}
 \newcommand{\Id}{\mathbf{I}} 
\newcommand{\abs}[1]{\vert #1 \vert}
\newcommand{\Abs}[1]{\left\vert #1 \right\vert}
\newcommand{\bra}[1]{\langle #1 |}
\newcommand{\ceil}[1]{\lceil #1 \rceil}
\newcommand{\Closedset}[1]{\left[ #1 \right]}
\newcommand{\Cset}{\Closedset}
\newcommand{\deq}{:=}
\newcommand{\ket}[1]{| #1 \rangle}
\newcommand{\Ket}[1]{\left| #1 \right\rangle}
\newcommand{\Openset}[1]{\left( #1 \right)}
\newcommand{\Oset}{\Openset}
\newcommand{\set}[1]{\{#1\}}
\newcommand{\Set}[1]{\left\{#1\right\}}
\newcommand{\Complex}{\mathbb{C}}
\newcommand{\Integer}{\mathbb{Z}}
\newcommand{\Natural}{\mathbb{N}}
\newcommand{\reg}[1]{\mathsf{#1}}
\newcommand{\e}{\mathrm{e}}
\newcommand{\im}{\mathbf{i}}
\newcommand{\LE}{\mathsf{LE}}
\newcommand{\SV}{\mathsf{SV}}
\newcommand{\Qhm}{$\mathrm{Q}_{h,m}$\xspace}
\newcommand{\consistent}{\mathsf{consistent}}
\newcommand{\inconsistent}{\mathsf{inconsistent}}
\newcommand{\vect}[1]{\mathbf{#1}}
\newcommand{\COLORCOUNT}{{\textsc{colorcount}}}
\newcommand{\CONSISTENCY}{{\textsc{consistency}}}
\newcommand{\QCONSISTENCY}{{\textsc{qconsistency}}}
\newcommand{\GHZSCALEDOWN}{{\textsc{ghz-scaledown}}}
\newcommand{\CNOT}{\textsc{cnot}}
\begin{document}

\title{A Fast Exact Quantum Algorithm for Solitude Verification}
\author{Seiichiro Tani\\
NTT Communication Science Laboratories \\
NTT Corporation, Japan.\\
{\tt tani.seiichiro@lab.ntt.co.jp}\\
}
\date{}
\maketitle

\begin{abstract}
Solitude verification
is arguably one of the simplest fundamental problems in distributed computing,
where the goal is to verify that there is a unique contender
in a network.
This paper devises a quantum algorithm that exactly solves the problem
on an anonymous network, which is known as a network model
with minimal assumptions [Angluin, STOC'80].
The algorithm runs
in $O(N)$ rounds if every party initially has the common knowledge
of an upper bound $N$ on the number of parties.
This implies that 
all solvable problems can be solved in $O(N)$ rounds on average
without error (i.e., with zero-sided error)
on the network.
As a generalization, a quantum algorithm that works in 
$O(N \log_2 (\max\set{k,2}))$ 
rounds
is obtained for the problem of exactly computing any symmetric Boolean function, 
over $n$ distributed input bits,
which is constant over all the $n$ bits whose sum is larger than $k$
for $k\in \set{0,1,\dots, N-1}$.
All these algorithms work with 
the bit complexities bounded by
a polynomial in $N$.
\end{abstract}

\pagestyle{plain}

\section{Introduction}
\subsection{Background}
In synchronous distributed models of computation, 
the number of rounds (also called the round complexity) 
is one of the most important complexity measures,
especially when we want to design fast distributed algorithms.
From a complexity-theoretic point of view,
seeking low-round complexity leads
to clarifying how much 
parallelism the problem inherently has.
This would be reminiscent of 
the study of shallow circuit classes (e.g.,  $\mathsf{NC}$),
in which
the depth of a circuit solving a problem
corresponds to the inherent parallelism of the problem.
In this paper, we study distributed algorithms with low-round complexity.

The round complexity is closely related to the diameter of the underlying graph of a given network.
This is because, when computing global properties of the network,
it necessarily 
takes at least as many rounds for message exchanges
as the value of the diameter for some party to get information from
the farthest party.
Therefore, 
when every party's initial knowledge includes
an upper bound $\Delta$ on the diameter,
the ultimate goal is
to achieve a round complexity
close to 
(typically, linear in) 
$\Delta$.
In particular, 
we are interested in whether the round complexity $O(n)$ 
(resp., $O(N)$)
can be achieved
when every party's initial knowledge
includes the number $n$ of parties
(resp.,  an upper bound $N$ on $n$).
This can actually be achieved in a straightforward manner
if there is a unique party (called the leader) distinguishable from the others
 or, almost equivalently, if every party has its own identity:
The unique leader, which can be elected
in $O(\Delta)$ rounds
if every party has its own identity,
can gather all the distributed inputs, solve the problem,
and distribute the solution to every party
in $O(\Delta)$ rounds.
However, it is not a simple task to bound the achievable round complexity
for networks where
no party has its own identity
(namely, all parties with the same number of communication links are identical).
Such a network is called 
an \emph{anonymous network}, 
which was first introduced by Angluin~\cite{Ang80STOC}
to examine how much each party in a network needs to know about its own identity and other parties'
(e.g., Refs.~\cite{ItaRod81FOCS,ItaRod90InfoComp,
AfeMat94InfoComp,KraKriBer94InfoComp,BolShaVigCodGemSim96ISTCS,YamKam96IEEETPDS-1,YamKam96IEEETPDS-2,BolVig02DM}),
and thereby understand the fundamental properties of distributed computing.
It has been revealed in the literature that anonymous networks make it 
highly non-trivial or even impossible to exactly solve many distributed problems, including the leader election one,
that are easy to solve on non-anonymous networks (i.e., the networks in which every party has its own identity).
Here, by ``exactly solve'', we mean ``solve without error within a bounded time''. 
The good news 
is that
if the number $n$ of parties is provided to each party,
all solvable problems can be solved exactly in $O(n)$ rounds%
\footnote{If the diameter $\delta$ is given,
all solvable problems can be solved in $O(\delta)$ rounds
~\cite{Hen14IEEETPDS}.}
for any unknown underlying graph
by constructing tree-shaped data structures,
called universal covers~\cite{Ang80STOC}
or views~\cite{YamKam96IEEETPDS-1}.
Obviously, however, this does not help us deal with the infinitely many
instances of fundamental problems
that are impossible to solve on 
anonymous networks.
The best known among them is
the leader election problem ($\LE_n$),
the problem of electing a unique leader.
There are infinitely many $n$ such that
$\LE_n$
cannot be solved exactly
for anonymous networks
with certain underlying graphs 
over $n$ nodes
even if $n$ is provided to each party~\cite{Ang80STOC,YamKam96IEEETPDS-1,BolVig02DM}.

The above situation changes drastically if quantum computation and communication are allowed
on anonymous networks (called \emph{anonymous quantum networks}):
$\LE_n$ can be solved exactly for 
any unknown underlying graph,
even when only an upper bound $N$ on $n$ is provided to 
every party~\cite{TanKobMat12TOCT}.
This implies that,
if a problem is solvable in non-anonymous networks,
then it is also solvable in anonymous quantum networks.
For the round complexity, however,
the known quantum algorithms for electing a unique leader 
require super-linear rounds in $N$ (or $n$ when $n$ is provided)
\cite{TanKobMat05STACS,TanKobMat12TOCT}.

Motivated by this situation, 
we study the linear-round exact solvability
of another fundamental problem,
the solitude verification problem ($\SV_n$)~\cite{AbrAdlHigKir86PODC,AbrAdlHigKir94JACM,HigKirAbrAdl97JALG},
in anonymous quantum networks.
The goal of $\SV_n$  is to verify that there is a unique contender
in a network with an unknown set of contenders (which may be empty) among the $n$ parties.
Although the final target is to clarify 
whether a unique leader can be \emph{elected} in linear rounds or not,
$\SV_n$ would be a natural choice as the first step.
This is because
$\SV_n$ is a subproblem of many common problems,
including the leader election problem:
a unique leader can be elected by repeating attrition and solitude verification
as observed in Ref.~\cite{AbrAdlHigKir86PODC}.
Another reason
is that
$\SV_n$ is one of the  simplest nontrivial problems concerned with the global properties of a given network,
as pointed out in Ref.~\cite{AbrAdlHigKir94JACM}.
Indeed, $\SV_n$ is not always solvable in the classical case:
One can easily show, 
by modifying the proof of Theorem 4.5 in Ref.~\cite{Ang80STOC},
that 
it is impossible to exactly solve $\SV_n$
on any anonymous classical network
whose underlying graph is not a tree if only an upper bound $N$ is provided
to each party
(the problem can be solved exactly in $O(n)$ rounds if $n$ is provided).
In the quantum setting,  the only quantum algorithms for $\SV_n$
are the straightforward ones that first elect a unique leader
(with a super-linear round complexity), 
who then verifies that there is a unique contender.

Recently,
Kobayashi et al.~\cite{KobMatTan14CJTCS}
proposed an $O(N)$-round quantum algorithm
when each communication link in the network is bidirectional, i.e.,
the underlying graph is undirected.
However, their algorithm \emph{cannot} work in the more general case
where the underlying graph is directed.
This is due to 
a technicality
that is distinctive in quantum computing:
Their algorithm uses a standard quantum technique to erase
``garbage'' information
generated in the course of computation.
The technique inverts some operations that have been performed
and thus involves
sending back messages
via bidirectional communication links
in the distributed computing setting.

\subsection{Our Results}
Let $\calD_n$ be the set of all  strongly connected digraphs with $n$ nodes.
Our main result is an $O(N)$-round quantum algorithm
that exactly solves $\SV_n$,
where the input to each party is 
a binary value indicating
whether the party is a contender or not
(see Sec.~\ref{sec:DefofSVandLE} for a more formal definition).

\begin{theorem}
\label{th:SV}
There exists a quantum algorithm that,
if an upper bound $N$ on the number $n$ of parties is provided
to each party,
exactly solves
$\SV_{n}$  in $O(N)$ rounds with bit complexity 
$\tilde{O}(N^8)$
on an anonymous network
with any unknown underlying graph in $\calD_n$.
\end{theorem}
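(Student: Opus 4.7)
The plan is to reduce $\SV_n$ to an exact distributed quantum count of the number of contenders, and then have every party verify the same answer. Since each party only knows an upper bound $N\ge n$ but not $n$ itself, I would first have every party spend $O(N)$ rounds of classical message exchange along the directed edges building a depth-$N$ view (universal cover) of its surroundings. This supplies a combinatorial object that every symmetric party shares and that encodes enough information about the digraph to size all subsequent quantum registers uniformly.

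Second, I would invoke a \GHZSCALEDOWN{}-style subroutine to prepare, across the $n$ parties, a shared entangled resource of the form $\frac{1}{\sqrt{M}}\sum_{k=0}^{M-1}\ket{k}^{\otimes n}$ for some fixed $M>N$, using $O(N)$ rounds over the digraph. Each party $p$ then applies the purely local phase $\ket{k}\mapsto e^{2\pi\im k x_p/M}\ket{k}$, where $x_p\in\{0,1\}$ is its contender bit, producing the global state $\frac{1}{\sqrt{M}}\sum_k e^{2\pi\im k s/M}\ket{k}^{\otimes n}$ with $s=\sum_p x_p$. A local inverse quantum Fourier transform on the register of any single party collapses its share to $\ket{s\bmod M}$, and because $M>N\ge s$, the integer $s$ is recovered exactly. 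The verdict for $\SV_n$ is then the predicate $[s=1]$.

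Third, anonymity means there is no designated output party, so every party must independently hold the same bit of output. I would run a \QCONSISTENCY-style subroutine for $O(N)$ more rounds to certify, with zero error, that the locally-derived values of $s$ agree at all parties; combined with the exactness of the counting step this yields the desired zero-sided-error guarantee. The round count is $O(N)+O(N)+O(N)=O(N)$, and a careful accounting of the register sizes used inside each subroutine yields the claimed bit complexity $\tilde{O}(N^8)$.

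The main obstacle is that the underlying graph is only guaranteed to be a strongly connected digraph, so no message may be sent against the orientation of an edge. This is exactly what defeated the earlier algorithm of Kobayashi et al.\ in \cite{KobMatTan14CJTCS}: their disposal of quantum ``garbage'' registers relied on inverting each local interaction by sending a reply back along a bidirectional link. I therefore expect the bulk of the technical work to be in designing the \GHZSCALEDOWN{} procedure (and its accompanying counting and consistency checks) so that every entangled intermediate register is disentangled by purely local operations together with additional \emph{forward} passes along the digraph, while still producing an exact answer within $O(N)$ rounds; once that is achieved, the theorem follows by composing these subroutines in sequence.
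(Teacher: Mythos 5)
There is a genuine gap, and it sits at the heart of your counting step. The state $\frac{1}{\sqrt{M}}\sum_{k=0}^{M-1}e^{2\pi\im k s/M}\ket{k}^{\otimes n}$ is entangled across all $n$ parties: the phase $e^{2\pi\im k s/M}$ is attached to the whole branch $\ket{k}^{\otimes n}$, not to any single party's register. Applying an inverse Fourier transform locally to one party's register does \emph{not} collapse it to $\ket{s\bmod M}$; since the remaining $n-1$ registers still record $k$, the reduced state of any single party is maximally mixed and carries no information about $s$. Extracting $s$ would require first disentangling the other $n-1$ registers (e.g., measuring them in the Fourier basis and applying a phase correction that depends on the outcomes and on quantities such as $n$ and the number of active parties), and this is precisely the circular difficulty the problem poses: the correction depends on unknowns ($n$ is not given, only $N\ge n$), and on a directed anonymous network no party can be designated to collect the outcomes. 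The paper's resolution is not to count at all: it runs a one-bit symmetry-breaking subroutine \Qhm{} in parallel over \emph{all} guesses $(h,m)$ at $(\abs{\vect{x}},n)$, and proves a one-sided guarantee (Lemma~\ref{lm:Q{h,m}}: correct when the guess is right, never falsely ``$\false$'' when $\abs{\vect{x}}\le 1$), so that the conjunction over all guesses is exact even though no single run is trustworthy. Your proposal has no analogous mechanism for wrong guesses, and the consistency check you append only certifies agreement of (possibly wrong) local values, not correctness.

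Two further problems compound this. First, the exact preparation of the shared resource $\frac{1}{\sqrt{M}}\sum_k\ket{k}^{\otimes n}$ over an anonymous, strongly connected \emph{directed} network with unknown $n$ in $O(N)$ rounds is asserted, not argued; nothing like it is provided by the paper's \GHZSCALEDOWN{}, which only shrinks the two-branch state $\frac{1}{\sqrt 2}(\ket{\hat 0}^{\otimes p}+\ket{\hat 1}^{\otimes p})$ already produced among active parties by \QCONSISTENCY, and even that step needs the correct $(h,m)$ to be exact. Second, if your scheme worked it would let every party learn $\abs{\vect{x}}$ exactly in $O(N)$ rounds, hence compute every symmetric function in $O(N)$ rounds; the paper only achieves $O(N\log_2\max\{k,2\})$ for the class $\calS_n(k)$ (Theorem~\ref{th:symmetric}) and explicitly leaves linear-round solvability beyond $\SV_n$ open, which is a strong signal that the claimed counting primitive cannot be obtained this way. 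Deferring the garbage-disposal issue on directed graphs to ``additional forward passes'' likewise restates the obstacle rather than overcoming it; in the paper this is overcome by replacing the amplitude-amplification-style inversion with measurements followed by purely classical, forward-only postprocessing (\COLORCOUNT{}/\CONSISTENCY{} and the view-based symmetric-function algorithm).
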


As described previously, we are most interested in whether
$\LE_n$ can be solved exactly in $O(N)$ rounds.
For the present, we do not have an answer to this question.
We can, however, obtain a partial answer as  a corollary of 
Theorem~\ref{th:SV}:
There exists an $O(N)$-round \emph{zero-error} quantum algorithm for $\LE_n$.
Here, we say that a problem is solved with zero error
if there exists an algorithm that outputs a correct answer
with probability at least $1-\epsilon$
and gives up with probability at most $\epsilon$,
where $\epsilon$ is some non-negative constant less than $1$.
We can assume without loss of generality that $\epsilon$
is an arbitrarily small constant,
since a constant number of repetitions reduce the 
``give-up'' probability to an arbitrary small constant,
which changes the complexity by at most a constant factor.
\begin{corollary}
\label{cr:LE}
There exists a zero-error quantum algorithm that,
if an upper bound $N$ on the number $n$ of parties is 
provided to each party,
solves
$\LE_{n}$ 
in $O(N)$ rounds
with bit complexity 
$\tilde{O}(N^9)$
on an anonymous network
with any unknown underlying graph in $\calD_n$.
\end{corollary}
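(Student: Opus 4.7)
The strategy is to invoke the exact solitude verification algorithm of Theorem~\ref{th:SV} as a subroutine in a random-elimination leader election, in the spirit of the classical ``attrition then verify'' paradigm of~\cite{AbrAdlHigKir86PODC}. Because a single call to $\SV_n$ already consumes $\Theta(N)$ rounds, we cannot afford even $\omega(1)$ rounds of sequential attrition; instead, we run all candidate scales of attrition in parallel.

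Each party $i$ first locally generates $\ell=\lceil\log_2 N\rceil$ uniform independent random bits (obtained quantumly by preparing and measuring single qubits in the Hadamard basis). For each $k\in\set{0,1,\ldots,\ell}$, party $i$ is declared a \emph{$k$-candidate} iff its first $k$ random bits are all $1$; thus each party is independently a $k$-candidate with probability $2^{-k}$. In parallel for every $k$, we invoke the algorithm of Theorem~\ref{th:SV} with input bit ``am I a $k$-candidate?''; in $O(N)$ total rounds every party learns the common, exact answer to ``does exactly one $k$-candidate exist?'' for every $k$. If some $k$ answers yes, every party agrees on the smallest such $k$: the unique $k$-candidate outputs \emph{leader}, all others output \emph{non-leader}. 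Otherwise, every party outputs \emph{give up}.

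Correctness and the zero-error guarantee are immediate from the exactness of $\SV_n$: whenever a leader is declared, the choice is uniquely and commonly determined, so the algorithm never outputs an incorrect answer. For the success probability, set $k^\ast=\lfloor\log_2 n\rfloor$, which lies in $\set{0,\ldots,\ell}$ since $n\leq N$. The number of $k^\ast$-candidates is a Binomial$(n,2^{-k^\ast})$ variable with mean in $[1,2)$, and an elementary calculation shows it equals exactly $1$ with probability at least some absolute constant $c>0$ (for instance $c\geq 1/e^2$). The give-up probability is therefore at most $1-c<1$, matching the zero-error definition, and can be driven below any fixed $\epsilon$ by a constant number of independent repetitions.

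The main obstacle is not conceptual but a bookkeeping one: one must verify that the $\ell+1=O(\log N)$ parallel invocations of the $\SV_n$ algorithm genuinely share the same $O(N)$ rounds (each edge carrying the concatenated messages of all instances in each round), and that the aggregate bit cost across the parallel copies, together with the minor overhead of local random-bit generation, fits within the advertised $\tilde{O}(N^9)$ bound given the $\tilde{O}(N^8)$ per-copy bound from Theorem~\ref{th:SV}.
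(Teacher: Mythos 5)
Your proposal is correct and takes essentially the same route as the paper's own proof: run the exact QSV algorithm of Theorem~\ref{th:SV} in parallel on randomly generated candidate bits at all guessed scales, rely on its exactness so that a declared leader is never wrong, and note that at the scale matching $n$ there is exactly one candidate with constant probability, so the give-up probability is a constant that can be driven below any fixed $\epsilon$ by constantly many repetitions. The only difference is that you use the $O(\log N)$ dyadic probabilities $2^{-k}$ where the paper uses the $N-1$ probabilities $1/s$ for $s\in[2..N]$ (succeeding at $s=n$ with probability $>1/e$); your variant is equally valid and in fact yields bit complexity $\tilde{O}(N^8)\cdot O(\log N)=\tilde{O}(N^8)$, comfortably within the stated $\tilde{O}(N^9)$.
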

This implies that in the quantum setting, anonymous networks
can be converted to the corresponding non-anonymous ones
without error
in $O(N)$ rounds on average,
since a unique leader can assign a unique number to each party 
in $O(N)$ rounds (in the worst case).
In the special case of $N=O(n)$, 
if a classical problem is solvable 
in a non-anonymous classical/quantum network
in $O(n)$ rounds with a polynomial bit complexity,
then in the corresponding anonymous networks the problem is still solvable 
without error
in $O(n)$ rounds on average
with a polynomial bit complexity
whenever quantum computation
and communication are allowed.

We next consider a generalization of Theorem~\ref{th:SV}.
Note that we can think of $\SV_n$ as the problem of deciding whether
the Hamming weight (i.e., the sum) of the $n$ input bits is exactly one or not,
which is equivalent to computing the corresponding symmetric function.
As generalizations of this function, let us consider a
collection $\calS_{n}(k)$,
for $k\in \set{0,1,\dots, N-1}$,
of all  symmetric  Boolean functions $f\colon\set{0,1}^{n}\to \set{0,1}$
such that
$f(\vect{x})$ is constant over all $\vect{x}\deq (x_1,\dots, x_n)\in \set{0,1}^n$ with
$\sum_{i=1}^n x_i>k$.
Note that $\calS_{n}(k)\subset \calS_{n}(k+1)$ for each $k\in [0..n-2]$,
and $\calS_{n}(k)$ for any $k\ge n-1$ represents the set of all symmetric functions
over $n$ bits.
In particular, the function corresponding to $\SV_n$ 
belongs to $\calS_{n}(1)$.
We then have the following theorem.
\begin{theorem}
\label{th:symmetric}
Suppose that there are $n$ parties on an anonymous network 
with any unknown underlying graph in $\calD_n$
in which an upper bound $N$ on $n$ is provided to each party.
For every $f\in \calS_{n}(k)$ with $k\in \set{0,1,\dots, N-1}$,
there exists a quantum algorithm that exactly computes
$f(\vect{x})$
over distributed input $\vect{x}\in\set{0,1}^n$
on the network
in $O(N \log_2 (\max\set{k,2}))$  rounds
with a bit complexity bounded by some polynomial in $N$.
\end{theorem}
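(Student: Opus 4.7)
The plan is to reduce the computation of any $f\in\calS_{n}(k)$ to $O(\log_2 \max\set{k,2})$ invocations of a threshold-verification subroutine that, analogously to Theorem~\ref{th:SV}, runs in $O(N)$ rounds with polynomial-in-$N$ bit complexity. Since $f$ is symmetric, $f(\vect{x})$ depends only on the Hamming weight $|\vect{x}|\deq \sum_{i=1}^{n} x_i$, and since $f\in\calS_{n}(k)$ the value is fully determined by the truncated weight $w(\vect{x})\deq \min\set{|\vect{x}|,\,k+1}$. Once every party learns $w(\vect{x})$, each can locally output $f(\vect{x})$, so the task reduces to computing $w(\vect{x})$ on the anonymous network.

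To compute $w(\vect{x})$ I would perform a binary search over the range $\set{0,1,\dots,k+1}$. At each stage, for a current midpoint $t$, the parties jointly decide whether $|\vect{x}|\ge t$ or $|\vect{x}|< t$. After $\lceil \log_2 (k+2)\rceil = O(\log_2 \max\set{k,2})$ stages they either pin down $|\vect{x}|$ exactly (when it is at most $k$) or certify that $|\vect{x}|>k$; in either case $w(\vect{x})$ is determined. Note that for $k\in\set{0,1}$ the bound $O(\log_2 \max\set{k,2})=O(1)$ is achieved with one or two tests, consistent with Theorem~\ref{th:SV}.

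The crux is implementing one threshold test exactly in $O(N)$ rounds. I would obtain this by reparameterizing the $\SV_{n}$ algorithm underlying Theorem~\ref{th:SV}: the internal subroutines used there (such as \QCOLORCOUNT and \QCONSISTENCY), which count contenders via exact quantum amplitude manipulations tuned to the target count~$1$, admit the target count as a parameter. Setting this parameter to~$t$ rather than to~$1$ yields an exact algorithm for deciding $|\vect{x}|=t$, and running it on an appropriately shifted marker bit (e.g., contenders declare themselves in $t$ fresh copies) gives an exact decision of $|\vect{x}|\ge t$ with the same $O(N)$-round and polynomial bit complexity bounds.

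The main obstacle I anticipate is composability under the exactness requirement: because the overall algorithm must be error-free, no residual quantum ``garbage'' may be left at the end of one test that could disturb the next. Each threshold test must therefore be performed reversibly, its single classical bit of outcome broadcast to every party (itself an $O(N)$-round symmetric operation), and the quantum workspace uncomputed before the next stage begins, mirroring the uncomputation strategy used in the proof of Theorem~\ref{th:SV}. Once this clean reusability is verified, the round complexity is at most $O(N)\cdot O(\log_2 \max\set{k,2})$, and summing the polynomial-in-$N$ bit complexity of Theorem~\ref{th:SV} over $O(\log N)$ stages keeps the total bit complexity polynomial in $N$, as claimed.
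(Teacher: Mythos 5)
There is a genuine gap, and it is the central one: your argument presupposes an exact, $O(N)$-round subroutine deciding $\abs{\vect{x}}\ge t$ (or $\abs{\vect{x}}=t$) for an arbitrary threshold $t\le k+1$, and you justify it only by saying the $\SV_n$ machinery ``admits the target count as a parameter.'' It does not. The exactness of Algorithm~QSV rests on item~3 of Lemma~\ref{lm:Q{h,m}}: for \emph{every} guess $(h,m)$, \Qhm outputs ``$\true$'' with certainty whenever $\abs{\vect{x}}\le 1$, and this holds for a reason special to threshold one --- with at most one active party there is at most one measurement outcome in the whole network, so the consistency test of STEP~7 (and STEP~3) passes no matter which possibly wrong unitaries $W_h$ or wrong guesses $m$ were used. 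For a threshold $t\ge 2$ there is no analogous structural certificate: with, say, two active parties and a wrong pair $(h,m)$, the induced string may be consistent or inconsistent, so the one-sided guarantee that makes the ``run for all guesses and look for a single $\false$'' framework exact simply disappears. Moreover, the primitives you name have no count parameter to retune: $\QCONSISTENCY$/$\CONSISTENCY$ only distinguish ``all active parties hold the same value'' from ``at least two distinct values,'' $\COLORCOUNT$ distinguishes $0$, $1$, or $\ge 2$ distinct \emph{colors}, and the operators $W_h$ of Ref.~\cite{TanKobMat12TOCT} are symmetry-breaking maps that make a $\ge 2$-party GHZ state inconsistent when $h=\abs{\vect{x}}$ --- none of them counts up to a threshold $t$. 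The auxiliary trick ``contenders declare themselves in $t$ fresh copies'' is also undefined on an anonymous network. So the binary-search reduction is fine as arithmetic, but its engine is missing and cannot be obtained by reparameterizing Theorem~\ref{th:SV}.

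For contrast, the paper never builds a threshold test. It exploits the \emph{side effect} of a $\false$ answer: when \Qhm rejects, the measured string over the registers $\reg{R}_i$ is inconsistent over the set $S$ of active parties, so the outcomes split $S$ into at least two nonempty equivalence classes. Recursing on each class (running the slightly modified QSV per class, in parallel over all classes) for $\ceil{\log_2 \max\set{k,2}}$ levels, either some class becomes a singleton --- whose member is elected leader and then computes $\abs{\vect{x}}$ exactly by the classical procedure of Claim~\ref{cl:eval|x|}, hence $f(\vect{x})$ --- or no singleton ever appears, which certifies $\abs{\vect{x}}>k$, where $f$ is constant by the definition of $\calS_n(k)$. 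The $\log_2\max\set{k,2}$ factor thus comes from the depth of this partition refinement, not from a binary search over weights; if you want to salvage your route, you would need to actually construct an exact distributed threshold verifier for general $t$ with only an upper bound $N$ known, which is precisely what the paper's restriction to functions constant above $k$ lets it avoid.
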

Note  that 
an $O(N)$-round quantum algorithm for $\LE_n$ 
would imply that all solvable problems, including computing $\calS_n (k)$,
can be solved in $O(N)$ rounds 
(since the leader can convert the anonymous network
into the corresponding non-anonymous one).
Computing $\calS_n(k)$ is thus something lying between $\SV_n$ and $\LE_n$
with respect to  linear-round solvability.

%%%%%%%%%%%%%%%%%%%%%%%%%
\subsection{Technical Outline}
Recall that
the reason the $O(N)$-round leader election algorithm in 
Ref.~\cite{KobMatTan14CJTCS}
does not work on directed graphs is that
it sends back 
messages
via bidirectional communication links
to erase ``garbage'' information produced in the course of computation.
This seems inevitable as it uses
(a version of) the quantum amplitude amplification~\cite{ChiKim98QCQC}
(or a special case of the general quantum amplitude amplification~\cite{BraHoyMosTap02AMS}).
It is a critical issue, however,  when the underlying graph is directed,
since,
although strong connectivity ensures at least one directed path
on which the message could be sent back, 
parties cannot identify the path
in the anonymous network (since the original sender of the message cannot be identified).
In the classical setting, such an issue cannot arise since
the message need not be sent back
(the sender has only to keep a copy of the message if it needs to).

Our idea for resolving this issue is to employ
the symmetry-breaking procedure introduced
in \cite[Sec.~4]{TanKobMat12TOCT}.
The procedure
was used to solve the leader election problem as follows:
Initially, all parties are candidates for the leader
and they 
repeatedly perform
a certain distributed procedure
that reduces the set $S$ of the candidates by at least 1.
More concretely, the procedure partitions $S$ into at least two subsets
if $\abs{S}\ge 2$ and removes one of them from $S$.
A simple but effective way of viewing this
is that
it
not only reduces $S$ but 
decides
whether $\abs{S}$ is at least two or not,
since it can partition $S$
only when there are at least two candidates.
This observation
would exactly solve $\SV_n$ by regarding $S$ as the set of contenders
if the procedure outputs the correct answer with certainty.
However, the procedure
heavily depends on 
the following unknowns:
the cardinality of $S$ and the number $n$ of parties.
In Ref.~\cite{KobMatTan14CJTCS},
a similar problem arises 
when deciding whether an $n$-bit string $\vect{x}$ is of Hamming weight at most $1$,
and it is resolved
by running a base algorithm in parallel
for all possible guesses at $\abs{\vect{x}}$
and making the decision based on 
the set of all outputs
(the ``base algorithm'' uses amplitude amplification
and is totally different from the symmetry-breaking approach).
Together with a simple algorithm for testing whether $\vect{x}$ is the all-zero string,
the parallel execution of the base algorithm is used in the leader election algorithm~\cite{KobMatTan14CJTCS} to verify that a random $\vect{x}$ is of  Hamming weight exactly one.
This verification framework actually works in our case, and
it underlies the entire structure of our algorithm.
Namely, we
replace the base algorithm in the framework
with
a subroutine
constructed 
by carefully combining the symmetry-breaking procedure introduced in Ref.~\cite{TanKobMat12TOCT}
with classical techniques related to
the view~\cite{YamKam96IEEETPDS-1,Nor95DAM,BolVig02DM,Tan12IEEETPDS}.
This means that all parties collaborate to perform this subroutine
in parallel 
for all possible pairs of guesses at $(n,\abs{S})$.
The round complexity is thus equal to 
the number of rounds required to perform the subroutine once, i.e.,
$O(N)$ rounds.
To show the correctness,
we prove that
the set of the outputs 
over all possible pairs of the guesses yields the correct answer to  any $\SV_n$ instance with certainty.
This needs an in-depth and careful analysis
of  
all operations of which our algorithm consists
for every pair not necessarily equal to $(n,\abs{S})$.

Before the present work, it has seemed as if
the symmetry-breaking approach introduced in 
Refs.~\cite{TanKobMat05STACS,TanKobMat12TOCT}
is entirely different from the amplitude amplification approach
used in Ref.~\cite{KobMatTan14CJTCS}.
Our algorithm first demonstrates that 
these approaches  are quite compatible, and, indeed,
the technical core of 
Refs.~\cite{TanKobMat05STACS,TanKobMat12TOCT}
can effectively function
in the algorithmic framework proposed in Ref.~\cite{KobMatTan14CJTCS}.
This would contribute to a better understanding of distributed quantum computing and would be very helpful for future studies of quantum algorithms.

Our algorithm can be
 generalized to the case of computing a family $\calS_n(k)$ of more general symmetric functions as follows:
All parties collaborate to partition $S$ 
into subsets
by recursively applying the procedure 
up to $\ceil{\log_2 \max\set{k,2}}$ levels. 
If there is a singleton set among the subsets at a certain recursion level,
then the algorithm stops and all parties elect the only member of the subset as a leader,
who can compute $\abs{S}$
and thus compute the value of the given function in $\calS_k$.
If no singleton set appears even after 
the$\ceil{\log_2 \max\set{k,2}}$-th recursion level,
there must be more than $k$ parties
in $S$, in which case any function in $\calS_k$ is constant by the definition.

\subsection{Related Work}
Pal, Singh, and Kumar~\cite{PalSinKum03ARXIV}
and D'Hondt and Panangaden~\cite{DHoPan06QIC} dealt with
$\LE_n$ and the GHZ-state sharing problem
in a different setting, where pre-shared entanglement is assumed
but only classical communication is allowed.
The relation between several network models that differ in available quantum resources
has been discussed 
by Gavoille, Kosowski, and Markiewicz~\cite{GavKosMar09DISC}.
Recently, 
Elkin et al.~\cite{ElkKlaNanPan14PODC}
proved that quantum communication cannot substantially speed up
algorithms for some fundamental problems, such as the minimum spanning tree,
compared to the classical setting.
For fault-tolerant distributed quantum computing,
the Byzantine agreement problem
and the consensus problem were studied
by Ben-Or and Hassidim~\cite{BenHas05STOC} 
and
Chlebus, Kowalski, and Strojnowski~\cite{ChlKowStr10DISC}, respectively.
In the cryptographic context where there are cheating parties,
Refs.~\cite{AhaSil10NJP,Gan09ARXIV} devises quantum algorithms
that elect a unique leader with a small bias.
Some quantum distributed protocols were experimentally demonstrated
by Gaertner et al.~\cite{GaeBouKurCabWei08PRL}
and Okubo et al.~\cite{OkuWanJiaTanTom08PRA}.

See  the surveys~\cite{BuhRoh03MFCS,DenPan06SIGACT,BroTap08SIGACTNews}
and the references therein for more  work on distributed quantum computing.

\subsection{Organization}
Section 2 defines the network model and the problems considered in this paper. It then mentions several known facts employed in the subsequent sections.
Section 3 provides the structure of our algorithm and then proves Theorem~\ref{th:SV} assuming several properties of the key subroutine \Qhm.
Section 4 describes \Qhm step by step
and 
presents numerous claims and propositions
to show 
how each step takes effect.
Section 5 proves all the claims and propositions appearing in Section~4
and then completes the proof that \Qhm has the properties assumed in Section~3.
Section 6 proves Corollary~\ref{cr:LE},
and then generalizes Theorem~\ref{th:SV} for proving Theorem~\ref{th:symmetric}.

\section{Preliminaries}
Let $\Complex$ be the set of all complex numbers,
$\Natural$ the set of all positive integers,
and $\Integer^+$ the set of all non-negative integers.
For any $m,n\in \Integer^+$ with $m<n$,
 $[m..n]$ denotes the set $\set{m,m+1,\dots ,n}$,
and $[n]$ represents $[1..n]$.

\subsection{\bf The Distributed Computing Model}
\label{subsec:model}
We first define a classical model
and then adapt it to the quantum model,
where every party can perform quantum computation and communication.

A classical \emph{distributed network} consists of multiple parties and
\emph{unidirectional} communication links, each of which connects a pair of parties.
By regarding the
parties and links as nodes and edges, respectively, in a graph,
the topology of the distributed network can be represented by a 
strongly connected digraph (i.e., directed graph),
which may have multiple edges or self-loops.

A natural assumption is that 
every party can distinguish one link from another
among all communication links incident to the party;
namely, it can assign a unique label  to  every such link.
We associate  these labels with \emph{communication ports}.
Since every party has \emph{incoming} and \emph{outgoing} communication
links
(although a self-loop is a single communication link,
it looks like a pair of incoming and outgoing links for the party), 
it has two kinds of \emph{communication ports} accordingly: \emph{in-ports} and \emph{out-ports}.
For a more formal definition in graph theory terms, we modify the 
definition given in  Ref.~\cite{Ang80STOC,YamKam96IEEETPDS-1},
which assumes undirected graphs:
the underlying digraph ${G\deq (V,E)}$ of the distributed network has a \emph{port-numbering},
 which is a set 
 of paired functions 
 $\{ (\sigma_{v}^{\text{in}},\sigma_{v}^{\text{out}}) \colon v\in V\}$
 such that for each node $v$ with in-degree $d_v^{\text{in}}$ and
 out-degree $d_v^{\text{out}}$,
 the function
$\sigma_{v}^{\text{in}}$ (resp., $\sigma_{v}^{\text{out}}$) 
 is a bijective map from the set of incoming edges (resp., outgoing edges) incident to $v$
 to
 the set
 $[d_{v}^{\text{in}}]$ (resp., $[d_{v}^{\text{out}}]$). 
 It is stressed that
 each function pair $(\sigma_{v}^{\text{in}},\sigma_{v}^{\text{out}})$ is defined 
 independently of any other pair $(\sigma_{v'}^{\text{in}},\sigma_{v'}^{\text{out}})$ with $v'\neq v$.
For the sake of convenience, we assume that
each edge $e\deq (u,v)\in E$ is labeled with 
the pair of the associated out-port 
 and in-port (of the two different parties);
namely,
 $(\sigma_u^{\text{out}}[e],\sigma_v^{\text{in}}[e])$
 (each party can know the labels of edges incident to it 
 by only a one-round message exchange as described in Example~\ref{example}).
In our model,
each party knows the number of its in-ports 
and out-ports
and can choose one of its in-ports
or one of its out-ports in any way
whenever it sends or receives a message.

In distributed computing, 
what information each party initially possesses
has a great impact on complexity.
Let $\calI_l$
be the information that only party $l$ initially knows,
such as its local state and the number of its ports.
Let  $\calI_G$ be
the information initially shared by all parties.
We may call $\calI_l$ and  $\calI_G$ \emph{local} and \emph{global} information, respectively.%
\footnote{
In this paper, we do not consider 
the case where 
only a subset of parties share
some information,
since
an upper bound on 
the complexity for that case
can be obtained by
regarding such information as local information
(when dealing with non-cryptographic/non-fault-tolerant problems, 
which is our case).}

Without loss of generality,
we assume that
every party $l$ runs the same algorithm
with $(\calI_l, \calI_G)$ as its arguments (or the input to the algorithm),
in addition to the instance $x_l$ of a problem to solve.
We will not explicitly write $(\calI_l, \calI_G)$ as input to algorithms
when it is clear from the context.
Note that 
$(\calI_l, \calI_G)$ is not part of the problem instance
but part of the model.
Also note that the algorithm may invoke subroutines
with part of $(x_l, \calI_l, \calI_G)$ as \emph{input to the subroutine}.
If all parties in a network have the same local information except
for the number of their ports,
the network is said to be \emph{anonymous}, and
the parties in the anonymous network are said to be anonymous.
In this paper, 
we assume that 
for each party $l$,
$\calI_l$
consists of a common initial state,
a common description of the same algorithm,
 and the numbers $(d_l^\text{in}, d_l^\text{out})$ of in-ports and out-ports.
An extreme case of networks is a regular graph, such as a directed ring,
in which case each party is identical to any other party;
that is, effectively,
every party has the same identifier.
Obviously, the difficulty of solving a problem depends on the underlying graph.
Moreover, it may also depend on port-numberings.
This
can be intuitively understood from Example~\ref{example}.
When solving problems on distributed networks,
we do not assume a particular port-numbering;
in other words, we say that a problem can be solved
if  there is an algorithm that solves the problem
for \emph{any} port-numbering.

This paper deals with only anonymous networks but
may refer to a party with its index (e.g., party $i$)
only for the purpose of clear descriptions.
Our goal is to construct an algorithm
that works 
for \emph{any} port-numbering on \emph{any} digraph in $\calD_n$,
where $\calD_n$ denotes the set of all $n$-node strongly connected digraphs, which may have multiple edges or self-loops, and is used through this paper.

\begin{example}
\label{example}
\emph{Fig.}~$\ref{fig:portnumbering}$ shows two anonymous networks,
$(a)$ and $(b)$,
on the same four-node regular graph with different port-numberings 
$\sigma $ and $\tau$, respectively,
where
$(1)$ each party has two in-ports and two out-ports, and
$(2)$ each directed edge $e\deq (u,v)$ is labeled with 
$(\sigma_{u}^{\emph{out}}(e),\sigma_{v}^{\emph{in}}(e))$
and 
$(\tau_{u}^{\emph{out}}(e),\tau_{v}^{\emph{in}}(e))$
on networks $(a)$ and $(b)$,
respectively,
where $\sigma_{u}^{\emph{out}}(e)$
$($resp., $\tau_{u}^{\emph{out}}(e)$$)$
 is put on the source side and 
 $\sigma_{v}^{\emph{in}}(e)$ 
$($resp., $\tau_{v}^{\emph{in}}(e)$$)$
 is put on the destination side.
Observe that each party can know the label of each incoming edge
incident to the party by exchanging a message$:$
Each party sends a message
``$i$" out via every out-port $i$, 
and if another party receives this message via in-port  $j$,
the receiver concludes that it has an incoming edge
with label $(i,j)$.
To elect a unique leader on network $(a)$ in \emph{Fig.}~$\ref{fig:portnumbering}$,
consider the following game: 
$(1)$ If a party has an incoming edge
with label $(i,j)$, then 
it scores 1 point if $i>j$ $($win$)$,
 0 points if $i=j$ $($draw$)$,
 -1 point if $i<j$ $($lose$)$$;$
$(2)$ each party earns the the sum of points over all its incoming 
edge$;$
and $(3)$ a party wins the game if it earns the largest sum of points 
among all parties.
It is easy to see that each party can compute
the sum of points as we observed.
In the case of $(a)$, 
the upper-left party is the unique winner:
it earns 1 point in total,
while the others earn $0$ or $-1$ points.
This fact can be used to elect a unique leader.
In the case of $(b)$, however, all parties earn $0$ points.
Hence, the above game cannot elect a unique leader.
Actually, no deterministic algorithm can elect a unique leader
in the case of $(b)$~\emph{\cite{YamKam96IEEETPDS-1}}.
\begin{figure}[htbp]
   \centering
   \includegraphics[scale=0.6]{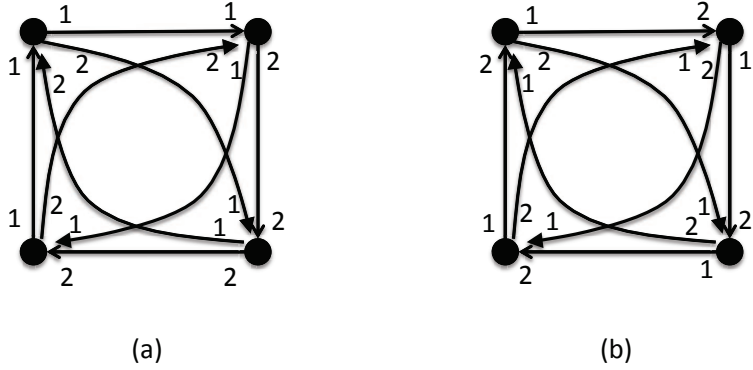}
   \caption{Anonymous networks on the same underlying graph with different port-numberings}
   \label{fig:portnumbering}
\end{figure}
\end{example}

A network is either
\emph{synchronous} or \emph{asynchronous}.
In the synchronous case, message passing is performed synchronously.
The unit interval of synchronization is called a \textit{round},
which consists of the following sequential execution of the two (probabilistic) procedures that are defined in the algorithm invoked by each party~\cite{Lyn96Book}:
one procedure changes the local state of the party 
depending on the current local state and the incoming messages, and then removes the messages from ports;
the other procedure then prepares new messages and decides the ports 
through which the messages 
should be sent, depending on  the current local state, and finally 
the messages are sent out via the ports. 
We do not impose any limit on the number of bits
in a message sent in each round.

A network that is not synchronous is \emph{asynchronous}.
In asynchronous networks,
every party can send messages at any time and 
the time it takes for
a message to go through a communication link 
is finite but not bounded.
This paper deals with synchronous networks
for simplicity, 
but 
our algorithms can be emulated in asynchronous networks without sacrificing the communication cost by just delaying the local operation that would be done in each round in the synchronous setting until a message arrives at every port.

The only difference between the quantum and classical models
is that every party can perform quantum computation
and communication in the former model
[for the basics of quantum computation and communication,
we refer readers to standard textbooks (e.g., Refs.~\cite{NieChu00Book,KitSheVya02Book,KayLafMos07Book})].
More concretely, the two procedures
for producing messages and changing local states
are replaced with physically realizable super-operators
(i.e., a trace-preserving completely positive super-operator)
that act on the registers storing the local quantum state
and quantum messages received to produce 
new quantum messages
and a new local quantum state
and to specify port numbers. 
Accordingly, we assume that every communication 
link can transfer quantum messages.
For sending quantum messages at the end of each round, 
each party sends out one of its quantum registers
through the specified out-port.
The party then receives quantum registers from its neighbors 
at the beginning of the next round
and uses them for local quantum computation.

 This paper focuses on the required number of rounds as 
 the primary
  complexity
 measure (called \emph{round complexity}). This is often used as an
 approximate value of time complexity, which includes the time taken by
 local operations as well as the time taken by message exchanges.
 Although our primary goal is to construct 
 algorithms with low round complexities,
our algorithms all have bit complexities bounded by certain polynomials in the given upper bound on the number of parties
(\emph{or} polynomial bit complexities for short),
where the \emph{bit complexity} of an algorithm is the number of bits or qubits
communicated by the algorithm (a.k.a., communication complexity).

Finally,
we assume that there are no faulty parties and no faulty communication links.

\subsection{Solitude Verification and Leader Election}
\label{sec:DefofSVandLE}
Let $n\in \Natural$.
For any bit string $\vect{x}\in \set{0,1}^n$, let $\abs{\vect{x}}$ be
the Hamming weight of $\vect{x}$, i.e., the number of $1$'s in $\vect{x}$.
For any $G\deq (V,E)\in \calD_n$,
without loss of generality, we assume that $V$ is identified with the set $[n]$.

For any $k\in \Integer^+$,
let $H_k\colon \set{0,1}^n\to \set{\true,\false}$ 
be the symmetric Boolean function 
that is $\true$ if and only if $\abs{\vect{x}}$ is equal to $k$ for input $\vect{x}\in \set{0,1}^n$
distributed over $n$ parties (i.e., each party is in possession of one of the $n$ bits).
Note that $k$ may be larger than $n$, in which case
$H_{k}(\vect{x})$ is $\false$ for all $\vect{x}\in \set{0,1}^n$.
The function $H_{k}$ is hence well-defined even if each party does not know the integer $n$
(since the value of $H_{k}$ depends not on $n$ but on $k$ and $\abs{\vect{x}}$).
We also define
$T_k\colon \set{0,1}^n\to \set{\true,\false}$ 
as the symmetric Boolean function 
such that 
$T_k(\vect{x})$ is $\true$ if and only if $\abs{\vect{x}}\le k$ 
for $\vect{x}\in \set{0,1}^n$;
namely, $T_k(\vect{x})=\vee _{i=0}^k H_i(\vect{x})$.
Note again that $T_{k}$ is well-defined even if $n$ is unknown.
The \emph{solitude verification problem} is equivalent
to computing $H_1$ as can be seen from the following definition.
\begin{definition}[\boldmath{Solitude Verification Problem ($\SV _n$)}]
Suppose that there is 
a distributed network 
with any underlying graph $G\in \calD_n$,
which is unknown to any party.
Suppose further that
each party
$i\in [n]$ in the network is given as input
a Boolean value  $x_i\in \set{0,1}$
and a variable $y_i\in \set{\true, \false}$ initialized to $\true$.
The goal is to set $y_i=H_1(\vect{x})$ 
for every $i\in [n]$,  
where $\vect{x}\deq (x_1,\dots, x_n)$.
\end{definition}
If every party has a unique identifier picked from, say, $[n]$,
this problem can easily be solved by simply gathering all $x_i$'s to
the party numbered $1$ (although the complexity may not be optimal).
On anonymous networks, however, this simple idea can no longer work
since the parties do not have a unique identifier.
If the global information $\calI_G$
includes
the \emph{exact} number $n$ of parties, $\SV_{n}$ can be solved deterministically 
by a non-trivial algorithm,
which runs
in
$O(n)$ rounds with a polynomial bit complexity~\cite{YamKam96IEEETPDS-1,BolVig02DM,Tan12IEEETPDS}. 
In a more general case, however, this is impossible.
\begin{fact}[\cite{Ang80STOC,ItaRod90InfoComp}]
\label{fact:SV}
There are infinitely 
many $n\in \Natural$
such that,
if 
only an upper bound on the number $n$ of the parties
is provided to each party as global information,
$\SV_{n}$ cannot be
solved
in the  zero-error $($i.e., Las Vegas$)$ setting as well as in the exact setting on anonymous classical networks with a certain underlying graph $G\in \calD_n$.
\end{fact}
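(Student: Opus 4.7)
\bigskip

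\noindent\textbf{Proof proposal.}
The plan is to adapt the classical symmetry argument of Angluin and Itai--Rodeh~\cite{Ang80STOC,ItaRod90InfoComp} to the directed setting. I will exhibit, for every composite $n$ admitting a factorization $n=km$ with $k,m\ge 2$, a directed graph $G\in \calD_n$ on which no algorithm given only an upper bound can solve $\SV_n$, even with zero error. This class of $n$ is infinite, which gives the statement.

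First, fix any $n=km$ with $k,m\ge 2$, and let $G\in \calD_n$ be the directed cycle of length $n$ with the unique port-numbering (each party has one in-port and one out-port, both labeled $1$). Let $G'\in \calD_m$ be the directed cycle of length $m$ with the analogous port-numbering, and let $\pi\colon V(G)\to V(G')$ be the $k$-to-$1$ covering map $v\mapsto v\bmod m$, which preserves both in-port and out-port labels. I compare two input instances, both compatible with any upper bound $N\ge n$: on $G'$, exactly one party is a contender, so $\Abs{\vect{x}}=1$ and $H_1(\vect{x})=\true$; on $G$, the pullback inputs along $\pi$, placing contenders at exactly $k$ equally spaced positions, so $\Abs{\vect{x}}=k\ge 2$ and $H_1(\vect{x})=\false$.

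The central technical step will be a straightforward induction on the round number establishing a view-isomorphism: if the $G'$-parties use random bits $(r_1,\dots,r_m)$ and each $G$-party $v$ uses the matched bits $r_{\pi(v)}$, then at every round $t$ the local state of $v$ coincides with that of $\pi(v)$. This follows from port-preservation of $\pi$, matching of the pulled-back inputs, and the fact that each party's one-round transition depends only on its current local state, the messages arriving at its in-ports, and its own randomness. In particular, in the deterministic case the outputs on $G$ and on $G'$ agree vertex by vertex, which immediately contradicts the correctness of any algorithm (since it must output $\true$ on $G'$ and $\false$ on $G$). This rules out exact deterministic algorithms on the bad graph $G$.

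To extend to the zero-error setting, I couple the randomness directly: if a zero-error algorithm existed, then on $G'$ the probability that every party simultaneously outputs the correct answer $\true$ is strictly positive, so there exist concrete random bits $(r_1,\dots,r_m)$ for which every $G'$-party outputs $\true$. On $G$, assign each $v\in V(G)$ the random bits $r_{\pi(v)}$; this specific assignment is a positive-probability atom of the product randomness of $G$, and the view-isomorphism forces every $G$-party also to output $\true$. Since the correct answer on $G$ is $\false$, this violates zero error, completing the argument for every $n=km$ with $k,m\ge 2$. The main obstacle is precisely this coupling: one must verify that the covering identity really does propagate through every round once independent randomness is distributed across the $G$-parties. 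The verification reduces to the port- and input-preservation properties of $\pi$ together with the fact that the two preimages in $\pi^{-1}(u)$ share randomness by construction, but the bookkeeping is the one piece of the argument that requires careful attention.
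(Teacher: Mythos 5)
The paper does not actually prove Fact~\ref{fact:SV} --- it only cites Angluin and Itai--Rodeh --- and your argument (the directed $km$-cycle fibered over the $m$-cycle by a port- and input-preserving covering map, the round-by-round lifting induction, and the zero-error coupling through a positive-probability finite execution on the small cycle lifted to an erroneous execution on the large one) is exactly the standard argument from those references, so you have in effect reconstructed the cited proof and it is correct. The only point you should make explicit is the formalization that ``solvable when only an upper bound $N$ is provided'' means one uniform algorithm that must be correct on \emph{every} network consistent with $N$ (in particular on both the $m$-cycle and the $n$-cycle), since that is what entitles you to invoke its zero-error correctness on the smaller cycle $G'$, and, in the Las Vegas step, that the fixed random bits form a finite cylinder event (coins used up to termination) so that the lifted assignment indeed has positive probability.
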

Our main contribution is a quantum algorithm that
exactly computes the function 
$T_1$ in rounds linear in $N$ 
even if only an upper bound $N$ on $n$ is provided.
This implies that
there exists a quantum algorithm
that exactly computes $H_1=T_1\wedge \neg T_0$ (and thus $\SV_n$)  in rounds linear in $N$,
since there is a simple (classical) deterministic algorithm for computing $T_0$ 
(i.e., the negation of $\mathrm{OR}$ over all input bits).

We next define the leader election problem, which is
closely related to $\SV_n$.
\begin{definition}[\boldmath{Leader Election problem ($\LE _n$)}]
Suppose that there is 
a distributed network 
with any underlying graph $G\in \calD_n$,
which is unknown to any party.
Suppose further that
each party
$i\in[n]$
is given
a variable
$y_i$ initialized to $1$.
The goal is to set ${y_k=1}$ 
for arbitrary but unique $k\in [n]$
and ${y_i=0}$ for every remaining $i\in [n]\setminus \{k\}$.
\end{definition}
$\LE_n$ is a fundamental problem
with a long history of research starting 
from the dawn of distributed computing;
there are a lot of studies on efficient algorithms
for solving it on non-anonymous networks.
On anonymous networks, however,
it is impossible to
exactly solve $\LE_n$.
\begin{fact}[\cite{Ang80STOC,ItaRod90InfoComp,YamKam96IEEETPDS-1,BolVig02DM}]
\label{fact:LE}
There are  infinitely 
many $n\in \Natural$
such that,
even if the number $n$ is provided to each party
as global information,
$\LE_{n}$ 
cannot be solved exactly
on anonymous classical networks with a certain underlying graph $G\in \calD_n$.
Moreover, if 
only an upper bound on $n$ is provided to each party,
it is impossible to solve $\LE_n$ 
even with zero-error.
\end{fact}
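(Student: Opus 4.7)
The plan is to consider the directed ring on $n$ nodes, which I will denote $C_n$, equipped with the trivial port-numbering in which every out-port and every in-port is labeled $1$. This graph lies in $\calD_n$ for every $n\ge 2$, and its key feature is a transitive cyclic automorphism. Both claims in the fact will follow from indistinguishability arguments based on this symmetry, adapted to the randomized setting via coupling of independent coin sequences.

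For the first claim (exact impossibility even when $n$ is known), I would assume toward contradiction that some randomized algorithm $A$ exactly solves $\LE_n$ on $C_n$, halting within some bounded number of rounds $T=T(n)$. Because each party uses only finitely many random bits per round, there is a positive probability $q>0$ that all $n$ parties draw identical coin sequences across all $T$ rounds. Under this event, a straightforward induction on the round number---using the fact that all parties start in the same initial state, apply the same transition rule, and receive identical incoming messages---shows that all parties remain in the same local state throughout and produce the same final output bit. Thus either all $n$ declare themselves leader or none do, and both outcomes are wrong for $n\ge 2$. This contradicts exactness, and since the argument works for every $n\ge 2$, it supplies infinitely many bad $n$.

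For the second claim (zero-error impossibility when only an upper bound $N$ is known), I would fix an alleged zero-error algorithm $A$, pick any $n$ with $2n\le N$, and compare executions of $A$ on $C_n$ and $C_{2n}$. The key observation is that the local state of a party in either ring after $t$ rounds depends recursively only on that party's coin sequence and on the messages received from its unique in-neighbor. Couple the randomness by assigning sequences $\omega_1,\dots,\omega_n$ to the nodes of $C_n$ and giving nodes $i$ and $i+n$ of $C_{2n}$ the same sequence $\omega_i$. An induction on rounds shows that, under this coupling, node $i$ of $C_n$ and each of nodes $i$ and $i+n$ of $C_{2n}$ remain in the same local state at every round. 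Consequently, whenever the coupled run on $C_n$ outputs a valid unique leader $j$, the coupled run on $C_{2n}$ simultaneously outputs leaders $j$ and $j+n$, which is forbidden by zero-error. This forces the give-up probability of $A$ on $C_n$ to equal $1$, contradicting the requirement that the give-up probability be strictly below $1$.

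The main technical obstacle is the careful inductive bookkeeping of the coupling: one must verify that, at every round, corresponding in-neighbors in $C_n$ and $C_{2n}$ indeed pair up correctly under the chosen assignment (in particular at the wrap-around edges $n\to 1$ in $C_n$ and $2n\to 1$, $n\to n{+}1$ in $C_{2n}$), so that the synchronization of local states propagates round by round. Beyond this bookkeeping, the argument is the randomized extension of the classical universal-cover/view argument due to Angluin and Itai--Rodeh.
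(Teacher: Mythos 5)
Your argument is correct, but it takes a different route from the paper. The paper does not reprove these classical results from scratch: for the first part it invokes the characterization theorem of Yamashita--Kameda and Boldi--Vigna (a necessary and sufficient condition on graphs and port-numberings for exact leader election when $n$ is known), of which the claim is a corollary; for the second part it reduces to Fact~\ref{fact:SV}, observing that $\SV_n$ is reducible to $\LE_n$ (an elected leader can assign identifiers and gather all input bits, so zero-error leader election would give zero-error solitude verification, contradicting the already-stated unsolvability of $\SV_n$ with only an upper bound). You instead give direct, self-contained indistinguishability arguments on directed rings: the symmetric-coins argument (all parties flip identical sequences with positive probability within the bounded horizon $T$, so symmetry never breaks and the common output is wrong) for the exact case with known $n$, and the $C_n$ versus $C_{2n}$ lifting/coupling (each node of $C_{2n}$ mirrors its residue-class partner in $C_n$, so a successful election on $C_n$ forces two leaders on $C_{2n}$, which zero-error forbids) for the upper-bound case. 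Both of your arguments are the standard Angluin/Itai--Rodeh-style proofs and go through; the only points worth making explicit are the standard ones you implicitly rely on -- randomness is discrete so the all-identical-coins event and the doubled coin assignment $(\omega,\omega)$ retain positive probability, and a zero-error run that succeeds on $C_n$ halts at a finite round, so the coupled $C_{2n}$ run halts there too with the duplicated leader outputs rather than giving up. What the paper's route buys is brevity and reuse of Fact~\ref{fact:SV} (and it yields unsolvability on the same graphs witnessing the $\SV_n$ lower bound); what yours buys is a self-contained elementary proof pinned to an explicit graph family (directed rings) without appealing to the view-based characterization machinery.
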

Actually, the former part of this fact is a corollary of a more general theorem
proved in 
Ref.~\cite{YamKam96IEEETPDS-1,BolVig02DM},
which provides
a necessary and sufficient condition 
on underlying graphs and port-numbering
for exactly solving $\LE_n$
when every party knows the number $n$ as its global information.
The latter part of Fact~\ref{fact:LE}
(i.e., the zero-error unsolvability of $\LE_n$)
follows from Fact~\ref{fact:SV}
and the fact\footnote{
Once elected,
the leader can verify that $\abs{\vect{x}}$ is one: 
The leader first assigns a unique identifier to each party.
It then gathers all $x_i$ together with
the identifier of the owner of $x_i$ 
(along a spanning tree after setting it up).
}
that $\SV_n$ is reducible to $\LE_n$.

In contrast,  it is \emph{possible} to solve the problems
on anonymous \emph{quantum} networks.
\begin{fact}[\cite{TanKobMat05STACS}]
\label{fact:STACS05}
There exists a quantum algorithm
that,
for every $n\in \Natural$ with $n\ge 2$,
if an upper bound $N$ on $n$ is provided to each party
as global information, 
\emph{exactly} solves $\LE_{n}$  in $\Theta (N\log N)$ rounds
with a polynomially bounded bit complexity
$\tilde{O}(N \cdot p(N))$
on an anonymous network with any unknown underlying graph $G\in \calD_n$,
where $p(N)$ is the bit complexity of constructing 
the view~\emph{\cite{YamKam96IEEETPDS-1,BolVig02DM}}, a tree-like data-structure,
of depth $O(N)$.
The best known bound on
$p(N)$ is $\tilde{O}(N^6)$~\emph{\cite{Tan12IEEETPDS}}.
\end{fact}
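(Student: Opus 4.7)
\emph{Proof sketch.} The plan is to repeatedly shrink a set $S$ of leader candidates, initialized to $S=[n]$, by a quantum symmetry-breaking subroutine that, whenever $|S|\ge 2$, partitions $S$ into at least two non-empty pieces and retains exactly one of them. Because the subroutine shrinks $|S|$ by at least a constant factor per call, $O(\log N)$ calls suffice to reduce $|S|$ to $1$, at which point the sole surviving party is declared leader. The overall correctness is a simple loop invariant: after each call the surviving $S$ is non-empty and strictly smaller than before.

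The technical heart is the distributed scale-down subroutine, implemented via amplitude amplification. Each candidate prepares a qubit in $(\ket{0}+\ket{1})/\sqrt{2}$; jointly this yields a uniform superposition over all subsets of $S$. A distributed oracle flags the two ``bad'' outcomes $\emptyset$ and $S$, and a Grover-style amplification with rotation angle tuned to a guess $m$ at $|S|$ cancels those amplitudes exactly, so that when $m=|S|$ the measurement returns a proper non-empty subset with probability exactly $1$. Since $|S|$ is unknown, the subroutine is run in parallel for every $m\in[2..N]$, and all parties then agree on which branch corresponds to the correct guess via a deterministic consistency check extracted from the view.

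Each call costs $O(N)$ rounds because implementing the oracle and its inverse (needed for the reflection operators) on an unknown anonymous digraph in $\calD_n$ requires every party to locally simulate the global unitary on its depth-$O(N)$ view~\cite{YamKam96IEEETPDS-1,BolVig02DM,Tan12IEEETPDS}; constructing the view has bit complexity $p(N)$, so one call costs $\tilde{O}(p(N))$ bits of communication. Iterating the scale-down $O(\log N)$ times then gives round complexity $\Theta(N\log N)$ and bit complexity $\tilde{O}(N\cdot p(N))$, matching the claim.

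The main obstacle is the demand for \emph{exact} correctness: the amplification angles and iteration counts must be chosen so that the measurement returns a proper non-empty subset with probability exactly $1$ for every $|S|\in[2..N]$, not merely with high probability. This is what forces the parallel exploration over all guesses $m$, and it also forces a careful analysis showing that, from the tuples of outcomes observed across the $N-1$ parallel branches, all anonymous parties can consistently identify the correct branch and agree on the same surviving subset without recourse to identifiers. Proving this anonymity-compatible, zero-error selection rule (and bounding the bit overhead of running and comparing $N-1$ branches in parallel) is the principal technical step.
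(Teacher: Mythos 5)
This statement is a Fact imported from Ref.~\cite{TanKobMat05STACS,TanKobMat12TOCT}; the paper does not prove it, so your sketch must be judged against the approach of the cited work, which is the symmetry-breaking Algorithm~II there: in each of $O(\log N)$ phases, every surviving candidate prepares $(\ket{\hat{0}}+\ket{\hat{1}})/\sqrt{2}$, the candidates run a distributed consistency test, and if a cat state survives they apply the guess-dependent local unitaries $U_k,V_k$ (the same $W_h$-type operators reused in STEP~5 of this paper) so that a measurement provably splits the candidate set; the view-based symmetric-function computation of Fact~\ref{th:view} coordinates the guesses. No amplitude amplification, and in particular no inversion of distributed operations, is used. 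Your proposal instead rests on Grover-style exact amplitude amplification, and this is where it breaks on the model the Fact claims: the reflection operators require applying the \emph{inverse} of the distributed state-preparation/oracle unitary, and on an anonymous network whose underlying graph is a strongly connected \emph{digraph} in $\calD_n$ this cannot be done --- the quantum registers are physically scattered, so uncomputation means sending quantum messages back along links whose senders cannot be identified. ``Locally simulating the global unitary on the depth-$O(N)$ view'' does not help, because the view is classical topological information and does not give a party access to other parties' qubits. This is precisely the obstruction, highlighted in the Introduction and Technical Outline of the present paper, that confines the amplitude-amplification algorithm of Ref.~\cite{KobMatTan14CJTCS} to undirected graphs; a proof of the Fact along your lines would therefore fail exactly where the Fact is strongest.

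A second gap is the complexity claim. You assert that keeping one part of the measured partition shrinks $\abs{S}$ by a constant factor, giving $O(\log N)$ iterations and hence $\Theta(N\log N)$ rounds. But retaining, say, the parties with outcome $0$ (or any anonymity-compatible fixed rule) only guarantees a decrease of at least one candidate per phase, which yields $O(N)$ phases and $O(N^2)$ rounds. To guarantee halving you would have to keep the \emph{minority} side, which requires the anonymous parties to count candidates exactly; with only the upper bound $N$ available, the view-based counting can be wrong for incorrect guesses $m\neq n$, and turning it into a zero-error halving rule is a substantial argument that your sketch does not supply (the cited Algorithm~II is engineered specifically to obtain the per-phase halving with certainty). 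So both the exactness-on-digraphs claim and the $\Theta(N\log N)$ round bound remain unproved in your write-up; the intended proof follows the symmetry-breaking route, not amplitude amplification.
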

\begin{remark}
When an upper bound $\Delta$ on
the diameter of the graph 
is provided to each party,
the round complexity becomes $O(\Delta (\log N)^2)$
by using the recent result on the view in Ref.~\emph{\cite{Hen14IEEETPDS}}.
\end{remark}

We next provide a powerful primitive in classical distributed computing:
a linear-round
algorithm
that deterministically computes any symmetric function
on anonymous networks.
This algorithm is actually obtained from a more generic one
that effectively uses the full-power of deterministic computation
in the anonymous classical network:
during the execution of the generic algorithm, every party constructs a tree-like data structure,
called \emph{view}~\cite{YamKam96IEEETPDS-1,BolVig02DM}, which contains as much information as it can gather
in the anonymous network.
In terms of graph theory,
the view of depth $k$ is defined for each node $v$,
and it is a labeled tree rooted at $v$ that is obtained 
by sharing the maximal common suffix of every pair of
$k$-length directed paths 
to the node $v$.
It is not difficult to see that
the view of depth $k$ can be constructed by 
exchanging messages $k$ times~\cite{YamKam96IEEETPDS-1,BolVig02DM} as follows:
(1) every party creates a $0$-depth view, which is nothing but a single node, $r$, labeled by the input to the party;
(2) for each $j=1,\dots, k$ in this order,
each party sends its $(j-1)$-depth view to every neighbor,
receives a $(j-1)$-depth view from every neighbor,
and then makes a $j$-depth view by connecting the node $r$ with
the roots of received views
by edges
labeled with port numbers.
Since it is proved in Ref.~\cite{Nor95DAM}
that setting $k=2n-1$ is necessary and sufficient to gather all the information in the network,
$(2n-1)$-depth views need to be constructed in general for solving problems.
For $k=2n-1$,
the above na\"{i}ve construction algorithm obviously has an exponential bit complexity in $n$, but
it is actually possible to compress each message so that the total bit complexity is
polynomially bounded in $n$~\cite{Tan12IEEETPDS}.
Since the Hamming weight $\abs{\vect{x}}$ of  $n$ distributed input bits
can be locally computed as a rational function of the number $n$ and the number of non-isomorphic subtrees
of depth $n-1$ in the view~\cite{YamKam96IEEETPDS-1,Nor95DAM,BolVig02DM},
every party can compute any symmetric function on $\vect{x}$
from its view whenever $n$ is given.
This algorithm is summarized as follows.
%%%%%%%%%%%%%%%%%%%%%%%%%%
% Computing a symmetric function
%%%%%%%%%%%%%%%%%%%%%%%%%%
\begin{fact}[Computing Symmetric Functions~\cite{YamKam96IEEETPDS-1,Nor95DAM,BolVig02DM,Tan12IEEETPDS}]
\label{th:view}
For any $n\in \Natural$ with $n\ge 2$,
suppose 
an anonymous network with any unknown underlying graph $G\in \calD_n$,
where the number $n$ of parties is given to each party as global information.
Then,
there exists a deterministic algorithm that computes
any symmetric function over $n$ distributed input bits
in $O(n)$ rounds 
with bit complexity $O(p(n))$,
where $p(\cdot)$ is the function defined in Fact~$\ref{fact:STACS05}$.
\end{fact}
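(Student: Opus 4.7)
The plan is to have every party locally build its view up to depth $2n-1$, then read off the multiset of input bits from that view and evaluate the symmetric function pointwise; the three ingredients are all in the cited literature, so the task is to assemble them. First I would recall Norris's result~\cite{Nor95DAM} that a depth-$(2n-1)$ view captures all information any party can distinguishably obtain through message passing on an $n$-node anonymous network, which in particular determines the multiset of inputs at the $n$ nodes up to graph automorphism. This will ultimately let each party compute the Hamming weight $|\vect{x}|$ locally once it has the view in hand.

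Next I would spell out the round-by-round view construction. In round $j$, each party sends its current depth-$(j-1)$ view out through every out-port, receives depth-$(j-1)$ views through every in-port, and assembles a depth-$j$ view by attaching each received tree at the root via an edge labeled by the corresponding (out-port, in-port) pair. Iterating for $2n-1$ rounds gives the target view, so the round complexity is $O(n)$ immediately. The obvious obstacle is that a na\"{\i}ve implementation makes the message length blow up exponentially in $n$.

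The main work is therefore to argue a polynomial bit complexity, which I would handle by invoking the compression technique of Ref.~\cite{Tan12IEEETPDS}: rather than transmitting full trees, parties exchange succinct encodings that refer to canonical representatives of isomorphism classes of subtrees already seen, together with small incremental bookkeeping per round. A careful amortized count shows that over all $2n-1$ rounds the total number of bits communicated is at most $p(n)=\tilde{O}(n^6)$, while still allowing every party to reconstruct the information content of its depth-$(2n-1)$ view locally. This compression step is the part I expect to be the real obstacle; everything else is bookkeeping once it is in place.

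Finally I would close the loop by extracting the symmetric function. Given $n$ and its completed view, a party enumerates the non-isomorphic subtrees of depth $n-1$ that occur, counts their multiplicities, and plugs these counts into the Yamashita--Kameda/Norris rational formula~\cite{YamKam96IEEETPDS-1,Nor95DAM,BolVig02DM} to recover $|\vect{x}|$ exactly. Since the target function is symmetric, applying it to $|\vect{x}|$ gives the required output with no further communication, yielding the claimed $O(n)$ rounds and $O(p(n))$ total bits and completing the proof.
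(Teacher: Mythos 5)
Your proposal matches the paper's own treatment: the statement is given as a Fact cited to the literature, and the paper's accompanying sketch is exactly your three ingredients — iterated construction of the depth-$(2n-1)$ view (with Norris's bound justifying that depth), the compressed-view technique of Ref.~\cite{Tan12IEEETPDS} to keep the bit complexity at $O(p(n))$, and the Yamashita--Kameda/Norris/Boldi--Vigna rational formula recovering $\abs{\vect{x}}$ from the counts of non-isomorphic depth-$(n-1)$ subtrees, after which the symmetric function is evaluated locally. Like the paper, you delegate the compression details to the cited reference rather than re-deriving them, so the argument is correct and essentially identical in structure.
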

To exactly compute the symmetric function with this algorithm,
every party needs to know
the exact number $n$ of the parties.
Nevertheless,
even when a wrong $m\in \Natural$  is provided instead of $n$,
the algorithm can still run through and output some value.
Namely, it constructs the view of depth $2m-1$ for every party 
and outputs the value 
of the rational function over the number $m$ and the number of isomorphic subtrees in the view, as can be seen from the above sketch of the algorithm.
This requires
$O(m)$ rounds and 
$O(\max\set{p(m),p(n)})$ bits of communication.
In fact, we will use this algorithm for a 
guess $m$ at $n$ (this $m$ is not necessarily equal to $n$).
Although the output may be wrong, 
the set of the outputs over all possible guesses $m$'s
contains useful information
as will be described in the following sections.

Finally, we define some terms.
Suppose that each party $l$ has 
a
$c$-bit string {$z_l \in \set{0,1}^{c}$}
(i.e., the $n$ parties share a $cn$-bit string ${\vect{z} \deq (z_1,z_2, \cdots ,z_n)}$).
Given a set ${S \subseteq [n]}$,
the string $\vect{z}$ is said to be \emph{consistent} over $S$
if $z_l$ 
has
the same value for all $l$ in $S$.
Otherwise, $\vect{z}$ is said to be \emph{inconsistent} over $S$.
In particular, if $S$ is the empty set, then any string $\vect{z}$ is
consistent over $S$.
We also say that a $cn$-qubit pure state
${\ket{\psi} \deq \sum_{\vect{z}\in \set{0,1}^{cn}} \alpha_\vect{z} \ket{\vect{z}}}$
shared by the $n$ parties is \emph{consistent (inconsistent)} over $S$
if ${\alpha_\vect{z} \neq 0}$ only for $\vect{z}$'s that are consistent
(inconsistent) over $S$. 
Note that there are pure states that are neither consistent nor
inconsistent over $S$
(i.e., {superpositions of}
both consistent string(s) and inconsistent string(s) over $S$).
We may simply say ``consistent/inconsistent strings/states''
if the associated set $S$ is clear from 
the context.
We say that a quantum state $\ket{\psi}$
is an \emph{$m$-partite GHZ-state} if $\ket{\psi}$ is
of the form 
$\frac{1}{\sqrt{2}}(\ket{0}^{\otimes m}+\ket{1}^{\otimes m})$ for some
natural number $m$.
When $m$ is clear from the context, we may simply call the state $\ket{\psi}$ 
a \emph{GHZ-state}.

\section{Solitude Verification (Proof of Theorem~\ref{th:SV})}
\label{sec:SVAlgorithm}
This section proves Theorem~\ref{th:SV}
by showing
a quantum algorithm
that exactly computes $H_{1}$
on $n$ bits distributed 
over an anonymous quantum network with $n$ parties
when an upper bound $N$ on $n$ 
is provided as  global information.
Suppose that a bit $x_i$ is provided to 
each party $i\in [n]$ as input.
We say that any party $i$ with $x_{i}=1$ is \emph{active} and any party $j$ with 
$x_{j}=0$ is \emph{inactive}.
Let $\vect{x}\deq (x_{1},\dots, x_{n})$.

The algorithm  actually computes the functions
$T_{1}$ and  $T_{0}$ on input $\vect{x}$ in parallel
and then outputs $H_{1}(\vect{x})=\neg T_{0}(\vect{x})\wedge T_{1}(\vect{x})$.
The function $T_0$ can be computed in any anonymous network
by a simple and standard deterministic algorithm
in $O(N)$ rounds if any upper bound $N$ on $n$ is given,
as stated in Proposition~\ref{pr:T_{0}} (the proof is provided in Sec.~\ref{sec:proofs} for completeness).
\begin{proposition}[Computing {\boldmath{$T_0$}}]
\label{pr:T_{0}}
Suppose that  there are $n$ parties
on an anonymous 
 network with any underlying graph $G\deq (V,E)\in \calD_n$,
 in which 
 an upper bound $\Delta$ on $n$
is given as  global information.
 Then, there exists a deterministic algorithm
 that computes $\neg\bigvee_{i\in [n]}x_i$
 on the network
 in $O(\Delta)$ rounds with  bit complexity
$O(m\Delta)$, where $m\deq|E|$,
if every party $i\in V$ gives a bit $x_{i}\in \set{0,1}$ as the input to the algorithm.
\end{proposition}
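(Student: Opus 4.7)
The plan is to run a standard flooding of the logical OR over all incoming bits, for $\Delta$ rounds. Concretely, each party $i$ maintains a single-bit flag $f_i$ initialized to $x_i$; in every round each party sends $f_i$ out through all its out-ports, receives one bit from every in-port, and updates $f_i \leftarrow f_i \vee \bigvee_{p} b_{p}$, where $b_{p}$ ranges over the bits received in this round. After $\Delta$ rounds every party outputs $\neg f_i$.

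For correctness, observe that $f_i$ is monotone in rounds and, by a straightforward induction on $t$, after $t$ rounds $f_i = 1$ holds at party $i$ if and only if there exists some party $j$ with $x_j = 1$ and a directed path of length at most $t$ from $j$ to $i$ in $G$. If $\bigvee_{k} x_k = 0$, then all $f_i$ stay $0$, and every party correctly outputs $\true$. If some $x_j = 1$, then since $G$ is strongly connected on $n$ nodes its diameter is at most $n-1 < \Delta$, so there is a directed path of length at most $\Delta$ from $j$ to every $i \in V$; hence $f_i = 1$ for all $i$ at the end, and every party correctly outputs $\false$. The update rule, the sending schedule, and the initialization are identical at every party and depend only on $(d_i^{\text{in}}, d_i^{\text{out}})$ together with the input $x_i$, so the algorithm is a valid anonymous-network deterministic algorithm.

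For the complexity bound, the algorithm runs for exactly $\Delta$ rounds, so the round complexity is $O(\Delta)$. In each round each directed edge $e \in E$ carries a single bit (sent through the out-port associated with the tail of $e$ and received through the in-port associated with the head), so at most $|E| = m$ bits are transmitted per round. Over $\Delta$ rounds this totals $O(m\Delta)$ bits, as required.

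There is essentially no technical obstacle here; the only thing worth double-checking is that $\Delta \ge n$ implies $\Delta$ rounds suffice even though $\Delta$ is an upper bound on $n$ and not on the diameter, which is immediate because the diameter of any strongly connected digraph on $n$ nodes is at most $n-1$. The argument is entirely classical, uses only local knowledge of the port counts, and transmits one bit per edge per round, matching both the round and bit complexities claimed in the statement.
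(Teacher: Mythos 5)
Your proof is correct and takes essentially the same approach as the paper: the paper obtains this proposition as the special case $C=\set{1}$ of the color-counting primitive (Proposition~\ref{pr:ColorCounting}), whose implementation is exactly your flooding for $\Delta$ rounds with an OR/union update and one (constant-size) message per directed edge per round. Your explicit induction and the remark that a strongly connected $n$-node digraph has diameter at most $n-1\le\Delta-1$ are precisely the ingredients implicit in that reduction, so nothing is missing.
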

The difficult part is computing $T_1$.
It runs another quantum algorithm
\Qhm
as a subroutine for all pairs of 
$(h,m)\in [0..m]\times [2..N]$.
The integers $h$ and $m$ mean
guesses at the number of active parties
(i.e., $\abs{\vect{x}}$) and the number of parties (i.e., $n$), respectively, which are unknown to any party.
The following lemma 
states the properties of \Qhm.
\begin{lemma}[Main]
\label{lm:Q{h,m}}
There exists a set of distributed quantum algorithms $\set{\mbox{\Qhm} \colon (h,m)\in [0..m]\times [2..N]}$ such that,
if every party $i$ $(i=1,\dots, n)$
performs
the algorithm \Qhm
with
a bit $x_i\in \set{0,1}$ and
an upper bound $N$ on the number $n$ of parties,
then \Qhm always reaches a halting state%
\footnote{
This property is crucial
when one emulates our algorithm in asynchronous
(anonymous) networks, 
since deadlocks that parties cannot detect may occur.}
in $O(N)$ rounds
with $O(N^3)$ qubits and $\tilde{O}(N^6)$ classical bits of communication
for
every $m\in [2..N]$ and
every $h\in [0..m]$
and satisfies the following properties$:$
\begin{enumerate}
\item For each $(h,m)$, \Qhm outputs ``$\true$'' or ``$\false$''
at each party, where these outputs agree over all parties$;$
\item
for $(h,m)=(\abs{\vect{x}},n)$,
\Qhm
 computes $T_1(\vect{x})$ with certainty
for every $\vect{x}$$;$ and
\item 
for every $m\in [2..N]$ and
every $h\in [0..m]$,
\Qhm 
outputs ``$\true$''
with certainty
whenever $\abs{\vect{x}}\in \set{0,1}$,
\end{enumerate}
where $\vect{x}\deq(x_{1},\dots, x_{n})$.
\end{lemma}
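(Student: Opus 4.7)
The plan is to construct $\mbox{\Qhm}$ by carefully combining the symmetry-breaking primitive of Refs.~\cite{TanKobMat05STACS,TanKobMat12TOCT} with the view-based consistency machinery underlying Fact~\ref{th:view}. At the high level, $\mbox{\Qhm}$ will proceed in three conceptual stages. First, each active party (i.e., each party with $x_i = 1$) locally prepares a quantum register whose role is to contribute to a joint superposition over ``splitting patterns'' of the set $S\deq \set{i : x_i = 1}$; the parameter $h$ tells the active parties \emph{how many} of them to expect, so that when $h = \abs{S}$ they collectively obtain an $h$-partite GHZ-like state from which a measurement yields a partition of $S$ into two nonempty subsets. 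Second, each party broadcasts classical information through the view of depth $O(m)$ and uses it, via the rational-function evaluation of Fact~\ref{th:view}, to test whether the post-measurement assignment is globally consistent with ``$\abs{S}\le 1$''; when $m = n$ this test is exact. Third, every party locally computes a Boolean predicate on the collected data and returns ``$\true$'' or ``$\false$''.

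The steps I would carry out in order are: (i) specify the state-preparation subcircuit parameterized by $h$ and prove that, when exactly $h$ active parties execute it in concert, the resulting joint state is a genuine superposition that symmetry-breaks $S$ into two nonempty subsets upon measurement; (ii) describe the view-based broadcast and show that if $m = n$ then every party reconstructs the same classical post-measurement assignment, so all parties' outputs agree (property 1); (iii) deduce property 2 by combining (i) and (ii) in the branch $(h,m) = (\abs{\vect{x}},n)$, where the splitting test succeeds iff $\abs{\vect{x}} \ge 2$; (iv) bound the round, qubit, and bit complexities by inspecting each stage, invoking $p(N) = \tilde{O}(N^6)$ from Fact~\ref{fact:STACS05} to account for the $\tilde{O}(N^6)$ classical bits; (v) prove unconditional halting by observing that every stage has a syntactically fixed number of rounds as a function of $N$.

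The main obstacle will be property (3): for \emph{every} guess $(h,m)\in [0..m]\times [2..N]$, whenever $\abs{\vect{x}}\in \set{0,1}$, the algorithm must output ``$\true$'' with certainty. The case $\abs{\vect{x}} = 0$ is easy because the state-preparation stage is a no-op and the consistency check trivially succeeds for any $(h,m)$. The nontrivial case is $\abs{\vect{x}} = 1$: a lone active party cannot produce any multi-party entanglement, so no partition of $S$ into two nonempty subsets can arise, and hence every measurement branch must leave the subsequent view-based predicate evaluating to ``$\true$'' at every party, regardless of how wrong the guesses $h \ge 2$ or $m \neq n$ are. Ensuring this requires a branch-by-branch analysis of the quantum circuit and of the view computation, and it is precisely here that the in-depth invariants promised in the technical outline (and proved in Sections~4 and 5 of the paper) will do the work. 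I would formulate these invariants as a sequence of claims asserting that, in the $\abs{\vect{x}}\le 1$ regime, the state-preparation subcircuit produces a deterministic ``no-split'' marker, that the view's rational-function evaluation is robust to wrong $m$ in this marker's presence, and that the final predicate is monotone in these markers.

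The rest of the argument is relatively routine: given correctness of the symmetry-breaking primitive for $(h,m) = (\abs{\vect{x}},n)$ and the universality property (3) for all other $(h,m)$, properties 1 and 2 follow directly, and the complexity bounds reduce to counting view construction costs. Thus the entire weight of the lemma rests on (a) a correct definition of the symmetry-breaking state preparation so that its success is detectable purely via a view-based consistency test, and (b) the invariant analysis for the $\abs{\vect{x}}\le 1$ regime over all guesses $(h,m)$.
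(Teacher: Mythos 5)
Your outline follows the same general route as the paper (a symmetry-breaking primitive parameterized by the guess $h$, view-based classical computation parameterized by the guess $m$, and consistency tests to neutralize wrong guesses), but it stops short at exactly the two points where the lemma has content, and the mechanisms you sketch for them would not work as stated. First, for property~1 you only argue agreement of outputs when $m=n$, whereas the lemma requires agreement for \emph{every} $(h,m)$. In the paper this is obtained because every decision is ultimately made by the classical color-counting/consistency primitive of Proposition~\ref{pr:ColorCounting} run with the \emph{correct} bound $N$ on the diameter, so the verdict is common to all parties no matter how wrong $h$ and $m$ are; moreover, the possibly divergent values $\chi$ produced by the view computation under a wrong $m$ are themselves fed through such a consistency test inside $\GHZSCALEDOWN$, with a bail-out output of ``$\true$'' when they disagree or are malformed. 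You provide no device of this kind. Second, for property~3 with $\abs{\vect{x}}=1$ you appeal to a ``deterministic no-split marker'' and to the view's rational-function evaluation being ``robust to wrong $m$''. No such robustness holds: under a wrong guess $m$ the evaluation of Fact~\ref{th:view} can output arbitrary, mutually inconsistent, even non-integer values. The paper's argument is different and more elementary: a wrong-but-consistent $\chi$ merely introduces a relative phase $e^{i\theta_{h,m}}$ on the two-term GHZ state (Claim~\ref{claim:step4}), and when $\abs{\vect{x}}=1$ the register $\reg{R}_S$ belongs to a single party, so whatever unitary $W_h$ and phase were applied, STEP~6 produces exactly one outcome in the whole network and the final $\CONSISTENCY$ test necessarily returns ``$\true$''. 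You gesture at this single-outcome observation but then route the argument through an unproved robustness property instead of the bail-out-plus-phase-insensitivity structure that actually closes the case.

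A further gap concerns exactness for $(h,m)=(\abs{\vect{x}},n)$ with $\abs{\vect{x}}\ge 2$. You assert that the $h$-parameterized preparation ``yields a partition of $S$ into two nonempty subsets'' upon measurement, but the zero-error claim hinges on the precise algebraic fact (Lemmas~3.3 and 3.5 of the cited reference, used here in Claim~\ref{claim:step5}) that after $W_h^{\otimes\abs{\vect{x}}}$ the amplitudes of all four constant strings vanish exactly when $h=\abs{\vect{x}}$; this must be invoked or proved, not assumed, since for $h\neq\abs{\vect{x}}$ it fails. Relatedly, your three-stage outline omits the initial quantum consistency projection (STEPs~2--3), which may already return ``$\false$'' and is what leaves a genuine GHZ state over the active parties' registers; without it neither the GHZ scaledown nor the $W_h$ analysis applies. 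As written, the proposal is a plan that explicitly defers the branch-by-branch analysis over all $(h,m)$ — but that analysis \emph{is} the proof of the lemma, so the essential work remains to be done.
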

Section~\ref{sec:keysub}
demonstrates how \Qhm works,
and Sec.~\ref{sec:proofs}
proves that it works as stated in the lemma.

After
all parties collaborate to run \Qhm (in parallel)
for all pairs $(h,m)$,
every party obtains 
$(N-1)(m+1)$ 
classical results:
\[\set{(h,m,q_{h,m}) \colon (h,m)\in [0..m]\times [2..N],q_{h,m}\in \set{\true,\false} },\]
where $q_{h,m}$ is the output of  \Qhm for input $\vect{x}$.
Note that this set of results agree over all parties by property~1
in Lemma~\ref{lm:Q{h,m}}.
From properties 2 and 3,
 we can observe that
there is at least one $\false$ 
in  $\set{q_{h,m}}$ 
if and only if $T_1(\vect{x})$ is ``$\false$''
(since 
$q_{h,m}$ is equal to $T_1(\vect{x})$
whenever $(h, m)$ equals $(\abs{\vect{x}},n)$).
Thus, every party can locally compute $T_1(\vect{x})$,
once it obtains all the classical results $\set{q_{h,m}}$.
Therefore, together with the algorithm in 
Proposition~\ref{pr:T_{0}},
$H_{1}(\vect{x})=\neg T_{0}(\vect{x})\wedge T_{1}(\vect{x})$
can be computed exactly.
Moreover, 
Proposition~\ref{pr:T_{0}} and Lemma~\ref{lm:Q{h,m}}
imply that
the entire solitude verification algorithm runs 
in $O(N)$ rounds and
with $O(N^5)$ qubits 
and $\tilde{O}(N^8)$ classical bits of communication.
The pseudo code, 
Algorithm QSV,
provided below 
summarizes the above operations.
\begin{algorithm}[ht]
\small
\SetAlgoLined
\SetKw{Notation}{Notation}
\SetKw{Global}{Global Information:}

\KwIn{a classical variable $x_{i}\in \set{0,1}$ and $N\in \Natural$. }

\KwOut{$y\in \set{\true,\false}$.}
\BlankLine
\Begin{

\ForEach{$(h, m)\in  [0..m]\times [2..N]$}
{perform in parallel \Qhm with $(x_{i},N)$ to obtain an output $q_{h,m}$\;}

\lIf{$q_{h,m}=\false$ for at least one of the pairs $(h,m)$}{set $y_{1}:=\false$}

\lElse{set $y_{1}:=\true$}\;

compute $y_0:=T_0(\vect{x})$ with input $x_i$ and $N$\;

\Return{$y:=\lnot y_{0}\land y_{1}$}.
}
\SetAlgoRefName{\bf QSV}
\caption{Every party $i$ performs the following operations.} \label{QSV}
\end{algorithm}

\section{Algorithm $\mbox{\boldmath Q$_{h,m}$}$}
\label{sec:keysub}
To demonstrate simply
how 
\Qhm
works,
we mainly consider the case where $(h,m)$ equals $(\abs{\vect{x}},n)$
(and defer the analysis in the other cases until the following sections).
In fact,
\Qhm originates from 
the following simple idea.
Every active party flips a coin,  broadcasts the outcome, and
receives the outcomes of all parties.
Obviously, 
those outcomes cannot include both heads and tails
if there is at most \emph{one} active party.
Otherwise, they include both with high probability.
This implies that, with high probability,
one can tell whether 
the number of active parties
is at most one or not,
based on
whether the outcomes include both heads and tails.
One can thus obtain the correct answer to $\SV_n$ with high probability,
since one can easily check whether there exists at least one active party.
However, this still yields the wrong answer with positive probability.
Since we want to solve the problem exactly, we need to suppress this error.
For this purpose,
we carefully combine several quantum tools 
with classical techniques
based on
the idea sketched in \cite[Sec.~4]{TanKobMat12TOCT}.

Suppose that every party $i$ has a two-qubit register $\reg{R}_i$.
Let $\set{\ket{\emptyset},\ket{\set{0}},\ket{\set{1}},\ket{\set{0,1}}}$
be an orthonormal basis
of $\Complex^{4}$
(``$\emptyset$'' represents the empty set).
For notational simplicity, we also use $\hat{0}, \hat{1}, \times$ to
denote $\set{0}, \set{1}, \set{0,1}$, respectively
(``$\times$'' intuitively means \emph{inconsistency} since we have both $0$ and $1$ in the corresponding set).
Without loss of generality, we identify
$\ket{\hat{0}}, \ket{\hat{1}}, \ket{\emptyset},\ket{\times}$
with $\ket{00},\ket{01},\ket{10},\ket{11}$, respectively.
We now start to describe \Qhm  step by step
and explain the effect of each step by showing claims and propositions.
We defer their proofs to the next section.

\begin{step}
Every active party $i$
prepares 
$(\ket{\hat{0}}+\ket{\hat{1}})/ {\sqrt{2}}$
in  register $\reg{R}_i$,
while every inactive party $j$ prepares $\ket{\emptyset}$ 
in $\reg{R}_j$.
\end{step}

Let $\reg{R}_S$ be the set of $\reg{R}_i$ over all $i\in S$,
where $S$ is the set of active parties,
and let $\reg{R}_{\overline{S}}$ be the set of  $\reg{R}_i$ over all $i\in [n]\setminus S$.
We then denote by $\reg{R}$ the pair $(\reg{R}_{\overline{S}},\reg{R}_{S})$.
The quantum state  $\ket{\psi^{(0)}_{\vect{x}}}$ over $\reg{R}$
can be written as 
\[
\ket{\psi^{(0)}_{\vect{x}}}_{\reg{R}}=
 \Cset{
 \ket{\emptyset}^{\otimes n-\abs{\vect{x}}}
 }
 _{\reg{R}_{\overline{S}}}
\otimes
 \Cset{
\Oset{
   \frac{\ket{\hat{0}}+\ket{\hat{1}}}{\sqrt{2}}
 }^{\otimes \abs{\vect{x}}}
 }
 _{\reg{R}_S}.
\]
\begin{step}
All parties attempt to project $\ket{\psi^{(0)}_{\vect{x}}}$
onto the space spanned by 
$\set{\ket{\vect{z}}\colon \vect{z}\in A}$
or 
$\set{\ket{\vect{z}}\colon \vect{z}\in A^c}$,
where
$A$ is the set of all strings $\vect{z}$ in 
$\set{\hat{0},\hat{1},\emptyset}^{n}$
such that
$\vect{z}$ does not contain both $\hat{0}$ and $\hat{1}$ simultaneously,
i.e., $A\deq\set{\hat{0},\emptyset}^{n}\cup \set{\hat{1},\emptyset}^{n}$,
and $A^c\deq \set{\emptyset,\hat{0},\hat{1},\times}^{n}\setminus A$
is the complement of $A$.
\end{step}

For STEP~2, 
every party applies the operator $\Phi_\Delta$ with $\Delta \deq N$
to $\ket{\psi^{(0)}_{\vect{x}}}$,
where $\Phi_\Delta$ is defined in the following claim.
The resulting state 
$\ket{\psi^{(1)}_{\vect{x}}}_{(\reg{R},\reg{Y},\reg{G})}\deq \Phi_{N}\ket{\psi^{(0)}_{\vect{x}}}_{\reg{R}}$
is $\frac{1}{\sqrt{2^{\abs{\vect{x}}}}}$ times

\begin{equation}\label{eq:state1}
\sum_{\vect{z}\in A}\ket{\vect{z}}_{\reg{R}}\otimes(\ket{\consistent}^{\otimes n})_{\reg{Y}}\otimes \ket{g_N(\vect{z})}_{\reg{G}}+
\sum_{\vect{z}\in A^c\cap \set{\hat{0},\hat{1},\emptyset}^{n}}
\ket{\vect{z}}_{\reg{R}}\otimes(\ket{\inconsistent}^{\otimes n})_{\reg{Y}}\otimes \ket{g_N(\vect{z})}_{\reg{G}}.
\end{equation}

\begin{claim}
\label{claim:step2}
There exists a distributed quantum algorithm 
$\QCONSISTENCY$
that,
for given upper bound $\Delta$ on the diameter of the underlying graph $G\deq (V,E)\in \calD _n$,
implements the following operator%
\footnote{
More formally, this operator should be written as a unitary one
acting on ancillary registers initialized to $\ket{0}$ as well as
$\reg{R}_i.$}$:$
\begin{equation}
\Phi_{\Delta}\deq \sum_{\vect{z}\in A}\ket{\vect{z}}\bra{\vect{z}}_{\reg{R}}\otimes(\ket{\consistent}^{\otimes n})_{\reg{Y}}\otimes \ket{g_{\Delta}(\vect{z})}_{\reg{G}}+
\sum_{\vect{z}\in A^c}\ket{\vect{z}}\bra{\vect{z}}_{\reg{R}}\otimes(\ket{\inconsistent}^{\otimes n})_{\reg{Y}}\otimes \ket{g_{\Delta}(\vect{z})}_{\reg{G}},
\end{equation}
where
$\set{\ket{\consistent},\ket{\inconsistent}}$ is an orthonormal basis of $\Complex^2$,
$\reg{Y}\deq (\reg{Y_1},\ldots,\reg{Y_n})$ denotes a set of ancillary 
single-qubit registers $\reg{Y}_i$ possessed by party $i$ for all $i\in [n]$,
and $\reg{G}$ denotes a set of all the other ancillary registers,
the content $g_{\Delta}(\vect{z})$ of which is a bit string
 uniquely determined%
\footnote{
Actually, $g_{\Delta}(\vect{z})$ also depends on the underlying graph, 
but we assume without loss of generality
that
the graph is fixed.}
by $\Delta$ and $\vect{z}$.
Moreover, $\QCONSISTENCY$ runs in $O(\Delta)$ rounds
and communicates $O(\Delta \abs{E})$ qubits.
In particular, any upper bound $N$ of $n$ can be used as $\Delta$.
\end{claim}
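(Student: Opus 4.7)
\noindent The plan is to build \QCONSISTENCY{} by coherently executing a classical deterministic distributed algorithm that, in $O(\Delta)$ rounds, computes at every party a flag indicating whether $\vect{z}\in A$ or $\vect{z}\in A^{c}$, and then observing that the basis-state action of the resulting unitary coincides with $\Phi_{\Delta}$.

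First, I would design the classical algorithm. Each party $i$ maintains a local variable $b_{i}$ taking values in the subsets of $\set{0,1}$, identified with $\set{\emptyset,\hat{0},\hat{1},\times}$, and initializes it to the content of $\reg{R}_{i}$. In each of $\Delta$ synchronous rounds, every party transmits the current value of $b_{i}$ through each of its out-ports, receives the corresponding messages through its in-ports, and then replaces $b_{i}$ with the set-union of the old value and all incoming messages. Because the underlying graph is strongly connected on $n\le \Delta+1$ nodes, its diameter is at most $\Delta$; after $\Delta$ rounds of monotone union propagation, $b_{i}$ therefore equals the union of the initial values across the entire network. Each party then declares $\consistent$ if this union is contained in $\set{0}$ or in $\set{1}$, and $\inconsistent$ otherwise. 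By the definition of $A$, the flag equals $\consistent$ iff $\vect{z}\in A$, and it agrees at every party.

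Next, I would quantize this procedure. Each elementary step is a permutation on the computational basis, so it can be realized unitarily with the help of ancillas. To ``send'' the value of $b_{i}$ through an out-port, party $i$ uses CNOT gates to copy $b_{i}$ into a freshly initialized two-qubit register and transmits the latter; this is legitimate because $\reg{R}_{i}$ and every accumulated value derived from it lies in the computational basis. Each local union update is performed reversibly by writing the new value into a fresh pair of qubits while retaining the old $b_{i}$ and all received messages as garbage. Finally, each party coherently writes the classical flag into $\reg{Y}_{i}$ via a controlled operation. Because the entire classical flow is deterministic on basis inputs, the resulting unitary sends $\ket{\vect{z}}_{\reg{R}}\otimes\ket{0}_{\reg{Y}}\otimes\ket{0}_{\reg{G}}$ to $\ket{\vect{z}}_{\reg{R}}\otimes\ket{f(\vect{z})}^{\otimes n}_{\reg{Y}}\otimes\ket{g_{\Delta}(\vect{z})}_{\reg{G}}$ for the $f(\vect{z})\in\set{\consistent,\inconsistent}$ determined by membership in $A$, which is exactly $\Phi_{\Delta}$.

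The round complexity is $O(\Delta)$ by construction, and since each round transmits $O(1)$ qubits along each of the $|E|$ directed edges, the total qubit communication is $O(\Delta|E|)$. The point that needs care rather than mere bookkeeping is justifying the tensor-product form $\ket{\consistent}^{\otimes n}$ and $\ket{\inconsistent}^{\otimes n}$ in the $\reg{Y}$-register: this reduces to the classical fact that after $\Delta$ rounds of union propagation every party holds the same value of $b_{i}$, which in turn follows from the diameter bound and the monotonicity of set union. The remaining verification is routine ancilla bookkeeping, standard once each local step is expressed as a CNOT/Toffoli circuit with explicit garbage outputs.
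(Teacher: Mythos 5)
Your proposal is correct and takes essentially the same route as the paper: the paper also obtains $\QCONSISTENCY$ by quantizing the classical union-propagation consistency check of Proposition~\ref{pr:ColorCounting} (the algorithm $\CONSISTENCY_d$ of Ref.~\cite{TanKobMat12TOCT}), keeping every intermediate and received register as the garbage $\reg{G}$ and the final flag as $\reg{Y}_i$, and you merely spell out the reversible-simulation details that the paper delegates to that reference. One minor remark: the bound $\mathrm{diam}(G)\le \Delta$ is a hypothesis of the claim (and for $\Delta=N$ it follows from $\mathrm{diam}(G)\le n-1\le N$), so it need not be derived from ``$n\le\Delta+1$'', which is not in general true when $\Delta$ is only a bound on the diameter.
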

Note that the operator $\Phi_{\Delta}$ has an effect similar to the measurement
$\set{\Pi_{0},\Pi_{1}}$
on $\ket{\psi^{(0)}_{\vect{x}}}$,
where $\Pi_{0}\deq \sum_{\vect{z}\in A}\ket{\vect{z}}\bra{\vect{z}}$
and $\Pi_{1}\deq \Id_{4^n}-\Pi_{0}$ for the identity operator $\Id_{4^n}$ over
the space $\Complex^{4^n}$.
The difference  is that 
$\Phi_{\Delta}$ leaves the garbage part 
$\ket{g_{\Delta}(\vect{z})}$, 
which is due to the ``distributed execution'' of the measurement.

\begin{step}
Every party $i$ measures the register $\reg{Y}_i$ of $ \ket{\psi^{(1)}_{\vect{x}}}_{(\reg{R},\reg{Y},\reg{G})}$ in the basis $\set{\ket{\consistent},\ket{\inconsistent}}$.
If the outcome is ``$\inconsistent$'',  then \Qhm halts and returns ``$\false$''.
\end{step}
To understand the consequence of each possible outcome,
we first provide an easy proposition.
\begin{claim}
\label{claim:step3-1}
$\ket{\psi^{(0)}_{\vect{x}}}_{\reg{R}}$ is in 
the space spanned by
$\set{\ket{\vect{z}}\colon \vect{z}\in A}$
if and only if $\abs{\vect{x}}\le 1$.
\end{claim}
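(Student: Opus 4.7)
The plan is to prove both directions by directly expanding $\ket{\psi^{(0)}_{\vect{x}}}$ in the computational basis and comparing the support of the resulting superposition with the set $A$. First I would note that, up to reordering of the registers, the state factors as a tensor product in which every inactive party contributes $\ket{\emptyset}$ and every active party contributes $(\ket{\hat{0}}+\ket{\hat{1}})/\sqrt{2}$; expanding the $\abs{\vect{x}}$-fold tensor product on the active side gives a uniform superposition over $2^{\abs{\vect{x}}}$ basis strings $\ket{\vect{z}}$, one for each choice of $\hat{0}$ or $\hat{1}$ at each active position, with $\emptyset$ at every inactive position.

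For the "if" direction, I would handle $\abs{\vect{x}}\in\set{0,1}$ directly. If $\abs{\vect{x}}=0$, the superposition collapses to the single basis vector $\ket{\emptyset}^{\otimes n}\in \set{\hat 0,\emptyset}^n\subseteq A$. If $\abs{\vect{x}}=1$, the only basis strings appearing in the expansion have a single non-$\emptyset$ symbol (either $\hat{0}$ or $\hat{1}$) in the unique active position; such a string lies in $\set{\hat 0,\emptyset}^n$ or $\set{\hat 1,\emptyset}^n$, hence in $A$. In both subcases every basis vector with nonzero amplitude is in $\set{\ket{\vect{z}}:\vect{z}\in A}$, so $\ket{\psi^{(0)}_{\vect{x}}}$ lies in its span.

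For the "only if" direction I would argue the contrapositive: if $\abs{\vect{x}}\ge 2$, pick two distinct active positions $i_1,i_2$ and consider the basis string $\vect{z}^\star$ that assigns $\hat{0}$ to $i_1$, $\hat{1}$ to $i_2$, any fixed choice in $\set{\hat 0,\hat 1}$ at the remaining active positions, and $\emptyset$ at all inactive positions. Because $\vect{z}^\star$ contains both $\hat{0}$ and $\hat{1}$, it fails to belong to either $\set{\hat 0,\emptyset}^n$ or $\set{\hat 1,\emptyset}^n$, so $\vect{z}^\star\in A^c$. The expansion above shows that $\vect{z}^\star$ appears with amplitude $2^{-\abs{\vect{x}}/2}\neq 0$, and since distinct computational basis vectors are orthogonal, the component of $\ket{\psi^{(0)}_{\vect{x}}}$ orthogonal to $\mathrm{span}\set{\ket{\vect{z}}:\vect{z}\in A}$ is nonzero, proving that $\ket{\psi^{(0)}_{\vect{x}}}$ does not lie in that span.

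The argument is essentially a bookkeeping exercise, so I do not expect any real obstacle; the only subtlety is remembering that $A$ is the \emph{union} of the all-$\hat 0$/$\emptyset$ and all-$\hat 1$/$\emptyset$ strings rather than their intersection, which is exactly why a single active party ($\abs{\vect{x}}=1$) is harmless but two active parties inevitably generate a basis term with mixed $\hat 0$ and $\hat 1$.
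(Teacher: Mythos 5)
Your proof is correct; the paper itself omits the argument for this claim as ``very easy,'' and your support-bookkeeping argument (expand the tensor product, note that for $\abs{\vect{x}}\le 1$ every basis string in the support lies in $A$, while for $\abs{\vect{x}}\ge 2$ a mixed $\hat{0}/\hat{1}$ string appears with nonzero amplitude) is exactly the intended elementary reasoning.
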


If the outcome is ``$\inconsistent$'',
there must exist
$\vect{z}\in A^c$
such that
$\ket{\vect{z}}$ 
has a non-zero amplitude in $\ket{\psi^{(0)}_{\vect{x}}}_{\reg{R}}$,
implying that 
$\ket{\psi^{(0)}_{\vect{x}}}_{\reg{R}}$ 
does \emph{not} lie
in the space spanned by
$\set{\ket{\vect{z}}\colon \vect{z}\in A}$
and thus $\abs{\vect{x}}\ge2$ by Claim~\ref{claim:step3-1}.
Therefore, every party can conclude that
$T_1(\vect{x})$ is ``$\false$''
without error if
the outcome is ``$\inconsistent$''.

If the outcome is ``$\consistent$'', we have the resulting state
(from Eq.~(\ref{eq:state1})):
$$
\ket{\psi^{(2)}_{\vect{x}}}_{(\reg{R},\reg{G})}=
c\sum_{\vect{z}\in A}\alpha_{\vect{z}}^{(\vect{x})}\ket{\vect{z}}_{\reg{R}}\otimes \ket{g_{\Delta}(\vect{z})}_{\reg{G}},$$
where $c\deq 1/\sqrt{\sum_{\vect{z}\in A}
\bigl|{\alpha_{\vect{z}}^{(\vect{x})}}\bigr|
^2}$
is the normalizing factor.
In this case, the parties 
cannot determine the value of $T_1$ on $\vect{x}$
since $\ket{\psi^{(0)}_{\vect{x}}}_{\reg{R}}$ 
may or may not be in the space
$\set{\ket{\vect{z}}\colon \vect{z}\in A}$.
Hence, we need a few more steps.

For proceeding to the next step, the following observation 
is crucial, which is obtained from the definition of $A$:
In the state $\ket{\psi^{(2)}_{\vect{x}}}_{(\reg{R},\reg{G})}$,
the contents of the registers $\reg{R}_i$ for
all active parties $i$ 
(i.e., the values at the positions of all such $i$ in $\vect{z}\in A$)
are identical; namely, they are all ``$\hat{0}$'' or all ``$\hat{1}$''.
Furthermore, we can prove the following claim:
\begin{claim}
\label{claim:step3-2}
The state $\ket{\psi^{(2)}_{\vect{x}}}_{(\reg{R},\reg{G})}$
is
of the form
\begin{equation}
\label{eq:state2}
\frac{1}{\sqrt{2}}
\Oset{\ket{\hat{0}}^{\otimes p}+\ket{\hat{1}}^{\otimes p}}
_{(\reg{R}_S,\reg{G}')}
\otimes 
\Oset{\ket{\emptyset}^{\otimes q}}
_{(\reg{R}_{\bar{S}},\reg{G}'')}
\end{equation}
for  some $p,q\in \Integer^+$ 
with $p+q\ge n$,
where $(\reg{G}', \reg{G}'')$ is a certain partition of the set of registers in $\reg{G}$. 
In particular, if there are no active parties, then $p$ is equal to $0$
and thus $\ket{\psi^{(2)}_{\vect{x}}}_{(\reg{R},\reg{G})}$
is a tensor product of $\ket{\emptyset}$.
\end{claim}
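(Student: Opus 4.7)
\medskip
\noindent\emph{Proof proposal.} The plan is to first pin down exactly which basis vectors of $\reg{R}$ survive in $\ket{\psi^{(2)}_{\vect{x}}}$, and then analyze how the garbage produced by $\QCONSISTENCY$ decomposes on those few surviving branches. Starting from the explicit tensor-product form of $\ket{\psi^{(0)}_{\vect{x}}}$, the only basis vectors $\ket{\vect{z}}$ with non-zero amplitude are those with $z_i = \emptyset$ for every $i\notin S$ and $z_i\in\{\hat{0},\hat{1}\}$ for every $i\in S$, each carrying amplitude $1/\sqrt{2^{\abs{\vect{x}}}}$. Intersecting this support with the set $A$ from STEP~2 forces the non-$\emptyset$ entries of $\vect{z}$ to be all equal, so (when $\abs{\vect{x}}\ge 1$) only two strings remain: the string $\vect{z}_0$ with $\hat{0}$ at every active position and $\emptyset$ elsewhere, and the analogous $\vect{z}_1$ with $\hat{1}$ in place of $\hat{0}$. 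Substituting into the formula for $\ket{\psi^{(2)}_{\vect{x}}}$ and computing the normalizing constant $c$ yields
\[
\ket{\psi^{(2)}_{\vect{x}}}_{(\reg{R},\reg{G})} \;=\; \frac{1}{\sqrt{2}}\Bigl(\ket{\vect{z}_0}_{\reg{R}}\ket{g_{N}(\vect{z}_0)}_{\reg{G}} + \ket{\vect{z}_1}_{\reg{R}}\ket{g_{N}(\vect{z}_1)}_{\reg{G}}\Bigr),
\]
while for $\abs{\vect{x}}=0$ only the all-$\emptyset$ string survives and the state is already a tensor product of $\ket{\emptyset}$'s, which settles the ``in particular'' part of the claim with $p=0$.

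The core of the proof, and the main obstacle, is to exhibit a partition $\reg{G} = \reg{G}' \sqcup \reg{G}''$ together with a fixed garbage state $\ket{\gamma}_{\reg{G}''}$ such that $\ket{g_{N}(\vect{z}_0)}_{\reg{G}} = \ket{\hat{0}}^{\otimes r}_{\reg{G}'}\otimes \ket{\gamma}_{\reg{G}''}$ and $\ket{g_{N}(\vect{z}_1)}_{\reg{G}} = \ket{\hat{1}}^{\otimes r}_{\reg{G}'}\otimes \ket{\gamma}_{\reg{G}''}$ for a common integer $r$. Intuitively, $\QCONSISTENCY$ only forwards/compares local register contents along the network, so each garbage register ends up storing either (i) a copy of whatever non-$\emptyset$ symbol appears on $S$, which agrees with $\hat{0}$ for $\vect{z}_0$ and with $\hat{1}$ for $\vect{z}_1$, or (ii) a value (often $\ket{\emptyset}$, or a bookkeeping value like a port label) that is determined purely by the positions of the $\emptyset$ entries, hence identical on $\vect{z}_0$ and $\vect{z}_1$. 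Putting the registers of type (i) into $\reg{G}'$ and those of type (ii) into $\reg{G}''$ gives the desired factorization. Rigorously justifying this split requires inspecting the construction of $\QCONSISTENCY$ that is spelled out in the next section and arguing inductively on the round index that the invariant ``every garbage qubit is either a copy of the common non-$\emptyset$ symbol or a function of the $\emptyset$-pattern alone'' is preserved; this is the step I expect to be the most technically involved.

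Once the garbage partition is in hand, the assembly is immediate. On the combined subsystem $(\reg{R}_S,\reg{G}')$, the two branches contribute $\ket{\hat{0}}^{\otimes p}$ and $\ket{\hat{1}}^{\otimes p}$ respectively with $p = \abs{S} + r$, producing the GHZ-type factor $\frac{1}{\sqrt{2}}(\ket{\hat 0}^{\otimes p} + \ket{\hat 1}^{\otimes p})$; on $(\reg{R}_{\bar S},\reg{G}'')$, both branches agree on $\ket{\emptyset}^{\otimes(n-\abs{S})}\otimes \ket{\gamma}$, so this factor can be written as $\ket{\emptyset}^{\otimes q}$ for $q = (n-\abs{S}) + \dim\reg{G}''$ after absorbing $\ket{\gamma}$ into the notation. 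Since $p + q = n + \dim\reg{G} \ge n$, the stated form of $\ket{\psi^{(2)}_{\vect{x}}}_{(\reg{R},\reg{G})}$ follows, completing the proof.
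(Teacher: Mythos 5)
Your proposal is correct and follows essentially the same route as the paper: reduce the support of $\ket{\psi^{(2)}_{\vect{x}}}$ to the two strings $\vect{z}_0,\vect{z}_1$ (resp.\ the all-$\emptyset$ string when $\abs{\vect{x}}=0$), then argue from the copy/forward structure of $\QCONSISTENCY$ that each garbage register either mirrors the common active symbol or is branch-independent --- which is precisely the content of the paper's Claim~\ref{cl:ContentsOfG}, whose proof the paper likewise only sketches from the fact that $\Phi_{\Delta}$ consists of idempotent operations, CNOT copies, and register exchanges. The one nuance is that to obtain the literal factor $\ket{\emptyset}^{\otimes q}$ on $(\reg{R}_{\bar{S}},\reg{G}'')$ (rather than a generic fixed $\ket{\gamma}$ ``absorbed into the notation,'' e.g.\ port-label bookkeeping values) you need the stronger invariant that the branch-independent garbage is exactly $\ket{\emptyset}$, which is what items 1--4 of that claim assert and what the same round-by-round inspection you defer actually yields.
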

This follows from the fact that
the operator $\Phi_{\Delta}$ 
consists of idempotent operations, copy operations via $\CNOT$,
and register exchanges.

\begin{step}
Every party attempts to shrink the \emph{GHZ}-state part of
$\ket{\psi^{(2)}_{\vect{x}}}_{(\reg{R},\reg{G})}$
down to a $\abs{\vect{x}}$-partite
\emph{GHZ}-state over only the registers $\reg{R}_i$ of active parties
by performing a subroutine, called $\GHZSCALEDOWN$, the properties of which are described in Claim~$\ref{claim:step4}$.
If $\GHZSCALEDOWN$ outputs ``$\true$'', then \Qhm halts and returns ``$\true$''.
\end{step}
For realizing STEP~4,
if the state $\ket{\psi^{(2)}_{\vect{x}}}_{\reg{R}}$
were over local registers, 
we could just apply the following standard technique:
measure all registers except all $\reg{R}_i$ of active parties in the Hadamard basis $\set{\ket{+},\ket{-}}$,
count the number of outcomes $\ket{-}$,
and correct the phase by rotating the state by the angle 0 or $\pi$, depending on (the parity of) the number.
By the definition of anonymous networks, however, 
every (active) party has to
use the \emph{same} algorithm
to collaboratively
compute the parity of the number of $\ket{-}$ and rotate the state if needed.
For this purpose, 
the classical algorithm 
provided in Fact~\ref{th:view}
seems suitable, since it computes
any symmetric function
in an \emph{anonymous} network.
What one should note here is that this classical algorithm outputs a correct answer
\emph{if} it is given the correct number $n$ of parties (as $m$).
If $m$ is equal to $n$,
the algorithm outputs
a correct answer. 
In general cases, however, 
the value of $m$ is  merely a guess at $n$ and not necessarily equal to $n$.
To rotate the state by the angle $\pi$, every active party
applies to its share of the registers the rotation operator
for the angle $\pi/\abs{\vect{x}}$,
so that the 
sum of the angles over all active parties is $\pi$.
This works correctly \emph{if} the number of active parties
is given (as $h$).
With the assumption that $h$ equals $\abs{\vect{x}}$, the state
should be rotated by the correct angle $\pi$.
In general cases, however, 
the value of $h$ is  not necessarily equal to $\abs{\vect{x}}$.

In the case where $\abs{\vect{x}}$ equals $0$,
no active parties exist and thus no operations are performed in this step.

The following claim summarizes the effect of $\GHZSCALEDOWN$.
\begin{claim}
\label{claim:step4}
There exists a distributed quantum algorithm $\GHZSCALEDOWN$ such that,
for given $(h,m)$, 
\begin{itemize}
\item if $\abs{\vect{x}}$ equals zero,  then for any $(h,m)$ it 
halts with the output ``$\true$'' at every party or applies the identity operator
to $\ket{\psi^{(2)}_{\vect{x}}}_{(\reg{R},\reg{G})}$$;$
\item
else if $(h,m)$ equals $(\abs{\vect{x}},n)$,
it transforms
$\ket{\psi^{(2)}_{\vect{x}}}_{(\reg{R},\reg{G})}$
to
$\ket{\psi^{(3)}_{\vect{x}}}_{\reg{R}_S}
\deq 
\frac{1}{\sqrt{2}}(\ket{\hat{0}}^{\otimes \abs{\vect{x}}}+\ket{\hat{1}}^{\otimes \abs{\vect{x}}})_{\reg{R}_S}$$;$
\item else 
it halts with the output ``$\true$'' at every party or
transforms
$\ket{\psi^{(2)}_{\vect{x}}}_{(\reg{R},\reg{G})}$
to the state 
$\ket{\tilde{\psi}^{(3)}_{\vect{x}}}_{\reg{R}_S}
\deq \frac{1}{\sqrt{2}}(\ket{\hat{0}}^{\otimes \abs{\vect{x}}}+e^{i\theta_{h,m}}\ket{\hat{1}}^{\otimes \abs{\vect{x}}})_{\reg{R}_S}$ 
for some real $\theta_{h,m}$.
\end{itemize}
Moreover, $\GHZSCALEDOWN$ runs in $O(m)$ rounds and communicates
$O(\max\set{p(m),p(n)})$ classical bits,
where $p(\cdot)$ is the function defined in Fact~$\ref{fact:STACS05}$.
\end{claim}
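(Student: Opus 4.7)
The plan is to realise $\GHZSCALEDOWN$ by the textbook Hadamard-measure-and-phase-correct recipe for shrinking a GHZ state, using the symmetric-function algorithm of Fact~\ref{th:view} as the only non-local ingredient. By Claim~\ref{claim:step3-2} the incoming state restricted to $(\reg{R}_S,\reg{G}')$ is the $p$-partite GHZ state $\frac{1}{\sqrt{2}}(\ket{\hat 0}^{\otimes p}+\ket{\hat 1}^{\otimes p})$, and since $\ket{\hat 0}=\ket{00}$ and $\ket{\hat 1}=\ket{01}$ differ only in their second qubit, this is equivalent to a single-qubit $p$-partite GHZ dressed with deterministic $\ket{0}$'s on the first qubits. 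Every party therefore Hadamard-measures the second qubit of each of its registers in $\reg{G}'$ and records the number $k_i$ of $\ket{-}$ outcomes; registers in $\reg{G}''$, which lie in the product state $\ket{\emptyset}^{\otimes q}$, are identifiable by their first qubit being $\ket{1}$ and are set aside. Conditioning on the classical outcomes leaves the state $\frac{1}{\sqrt{2}}(\ket{\hat 0}^{\otimes\abs{\vect{x}}}+(-1)^{k}\ket{\hat 1}^{\otimes\abs{\vect{x}}})_{\reg{R}_S}$ with $k=\sum_i k_i$.

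Next, the parties run Fact~\ref{th:view} with the guess $m$ to compute a single bit $b$ on which every party agrees (the view-based output is identical at every node regardless of whether the guess matches $n$); when $m=n$ this bit is $k\bmod 2$. Each \emph{active} party then applies the local rotation $\ket{\hat 0}\mapsto\ket{\hat 0}$, $\ket{\hat 1}\mapsto e^{i\pi b/h}\ket{\hat 1}$, so that the net phase on the $\ket{\hat 1}^{\otimes\abs{\vect{x}}}$ branch of $\reg{R}_S$ is $e^{i\pi b\abs{\vect{x}}/h}$. When $(h,m)=(\abs{\vect{x}},n)$ this phase is exactly $(-1)^k$, cancelling the sign from the measurements and yielding the clean GHZ state demanded by the second bullet; for arbitrary $(h,m)$ the accumulated phase is a real number $\theta_{h,m}$, giving the third bullet.

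The case $\abs{\vect{x}}=0$ is handled for free: Claim~\ref{claim:step3-2} then forces $p=0$, so $\reg{G}'$ is empty, no active party applies any rotation, and the procedure is the identity on the vacuous $\reg{R}_S$. To preserve property~3 of Lemma~\ref{lm:Q{h,m}}, $\GHZSCALEDOWN$ additionally aborts with output ``$\true$'' whenever the view-based computation returns data that cannot be reconciled with the guess $(h,m)$, for instance a reconstructed Hamming weight that exceeds $\min\set{h,m}$. Any such inconsistency requires at least two distinct underlying inputs, so this safeguard cannot fire when $\abs{\vect{x}}\le 1$, which is precisely when the claim demands the state be processed rather than aborted. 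The main obstacle is a careful case analysis verifying that this safeguard really does behave correctly across every $(h,m)$; the quantum operations themselves are local and routine.

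Round and bit complexity come directly from Fact~\ref{th:view}: the symmetric-function computation dominates, using $O(m)$ rounds and $O(\max\set{p(m),p(n)})$ classical bits, while the Hadamard measurements and single-qubit rotations are local and incur no communication.
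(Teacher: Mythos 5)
Your overall recipe---Hadamard-measure the garbage registers, locally count the ``minus'' outcomes, compute their parity with the view-based algorithm of Fact~\ref{th:view} run on the guess $m$, and have each active party apply a phase rotation by $\pi/h$---is exactly the paper's construction of $\GHZSCALEDOWN$, and your analysis of the correct-guess case $(h,m)=(\abs{\vect{x}},n)$ matches the paper's. The gap is in how you handle a wrong guess $m\neq n$. You assert that ``the view-based output is identical at every node regardless of whether the guess matches $n$''; this is unjustified and in general false (the paper states explicitly that for $m\neq n$ different parties may obtain different values, and the output may not even be a non-negative integer). Since both your phase correction and your abort rule are driven by this locally computed value, without guaranteed agreement some parties could halt with ``$\true$'' while others proceed, so you cannot conclude that the procedure ``halts with the output `$\true$' at every party,'' which is what the claim (and item~1 of Lemma~\ref{lm:Q{h,m}}) requires. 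The paper closes this hole with an extra \emph{distributed} step: after the view-based computation, all parties run $\CONSISTENCY$ (Proposition~\ref{pr:ColorCounting}) on their computed values and jointly halt with ``$\true$'' if the values disagree across parties or are not non-negative integers. This makes the halt/proceed decision common to all parties, and it never fires when $m=n$ because the computation is then correct, so the second bullet is unaffected.

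Your substitute safeguard is also the wrong check. Aborting when ``a reconstructed Hamming weight exceeds $\min\set{h,m}$'' is a condition on the distributed inputs of the parity computation (the parities $s_i$ of minus-counts), not on $\vect{x}$: the minus outcomes come from garbage registers in $\reg{G}'$, which are spread over many parties (including inactive ones), so even for $\abs{\vect{x}}=1$ with the correct guess $(h,m)=(1,n)$ many parties can report $s_i=1$ and your test would fire---but the second bullet forbids halting whenever $(h,m)=(\abs{\vect{x}},n)$ with $\abs{\vect{x}}\ge 1$. For the same reason, your argument that ``any such inconsistency requires at least two distinct underlying inputs, so this safeguard cannot fire when $\abs{\vect{x}}\le 1$'' conflates the $s_i$'s with the $x_i$'s and does not hold. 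The repair is the paper's design: the only check needed is agreement (plus non-negative integrality) of the computed value, it must be performed by a distributed consistency subroutine so that the decision is common to every party, and halting with ``$\true$'' on failure is safe precisely because the third bullet permits it unconditionally while the check provably never triggers for the correct guess.
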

Then, \Qhm proceeds to STEP~5.
By Claim~\ref{claim:step4},
STEP~5 is performed with $\ket{\tilde{\psi}^{(3)}_{\vect{x}}}_{\reg{R}_S}$
only if both $(h,m)\neq (\abs{\vect{x}},n)$ and $\vect{x}\ge 1$ hold.
In this case, to meet item~3 in Lemma~\ref{lm:Q{h,m}},
we only need to examine the output of Q$_{h,m}$
on $\vect{x}$ with $\abs{\vect{x}}=1$
(we defer until the next section easy proofs that item~1 holds
and that Q$_{h,m}$ reaches a halting state for all $\vect{x}$).
For $\vect{x}$ with $\abs{\vect{x}}=1$,
the behavior of  Q$_{h,m}$ is almost the same as in the case of 
$(h,m)=(\abs{\vect{x}},n)$.
Thus, in the following part of this section, we assume that 
$\abs{\vect{x}}$ is equal to $0$ or STEP~5 starts with 
$\ket{{\psi}^{(3)}_{\vect{x}}}_{\reg{R}_S}$ for simplicity
(the proof of Lemma~\ref{lm:Q{h,m}} provided in the next section will 
rigorously analyze all cases).

\begin{step}
Every active party $i$ applies the local unitary operator $W_{h}$
to its register $\reg{R}_i$, 
where $W_h$ is 
defined 
as follows$:$
\begin{equation*}
W_h\deq
\begin{cases}
I_2 \otimes U_h  & \text{$h$ is even and at least two}\\
V_h\cdot \CNOT_{2\rightarrow 1}  & \text{$h$ is odd and at least three}\\
I_2\otimes I_2& \text{$h$ is zero or one},
\end{cases}
\end{equation*}
where
$U_h$ and $V_h$ are defined in Sec.~$3.3$ of 
Ref.~\emph{\cite{TanKobMat12TOCT}}, 
$\CNOT_{2\rightarrow 1}$ acts on the first qubit,
controlled by the second qubit,
and $I_2$ is the identity over $\Complex_2$.
\end{step}

To see how the operator $W_h$ works, 
we provide the following claim.
\begin{claim}[Adaptation from Lemmas 3.3 and 3.5 in Ref.~\cite{TanKobMat12TOCT}]
\label{claim:step5} \emph{STEP~5} satisfies the following:
\begin{itemize}
\item 
If $\abs{\vect{x}}$ is zero, then 
\emph{STEP 5} is effectively skipped.

\item 
If $\abs{\vect{x}}$ is one, then 
$\ket{\psi^{(4)}_{\vect{x}}}_{\reg{R}_S}
\deq W_h\ket{\psi^{(3)}_{\vect{x}}}_{\reg{R}_S}
=
(\alpha \ket{\hat{0}}
+\beta\ket{\hat{1}}
+\gamma \ket{\emptyset}
+\delta\ket{\times})
_{\reg{R}_S}
$ 
for some $\alpha, \beta,\gamma, \delta \in \Complex$.

\item 
If $\abs{\vect{x}}$ is at least two and
$h$ equals $\abs{\vect{x}}$,
then it holds that
\[
\ket{\psi^{(4)}_{\vect{x}}}_{\reg{R}_S}
\deq W_h^{\otimes \abs{\vect{x}}}\ket{\psi^{(3)}_{\vect{x}}}_{\reg{R}_S}
=
\sum_{\vect{y}\in 
B
}
\beta_{\vect{y}}\ket{\vect{y}}_{\reg{R}_S},
\]
where
$ B\deq \Set{\ket{\vect{z}}\colon \vect{z}\in \set{\hat{0},\hat{1}, \emptyset, \times}^{\abs{\vect{x}}}\setminus
\set{\hat{0}^{\abs{\vect{x}}},\hat{1}^{\abs{\vect{x}}},
\emptyset^{\abs{\vect{x}}},
\times ^{\abs{\vect{x}}}
}
}$
and $\beta_{\vect{y}}\in \Complex$.
\end{itemize}
\end{claim}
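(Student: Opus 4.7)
The plan is to handle the three bullets separately, with the third containing the substantive content and the first two being essentially structural. For bullet~1, since $\abs{\vect{x}}=0$ means no party is active, STEP~5 applies no operator; by Claim~\ref{claim:step3-2} the state $\ket{\psi^{(2)}_{\vect{x}}}$ is already a tensor product of $\ket{\emptyset}$'s supported entirely on $\reg{R}_{\overline{S}}$, so the conclusion is immediate. For bullet~2, the single active party applies the fixed $4\times 4$ unitary $W_h$ to a single register in $\Complex^4$; the result is an arbitrary vector in $\Complex^4$, which trivially admits an expansion $\alpha\ket{\hat{0}}+\beta\ket{\hat{1}}+\gamma\ket{\emptyset}+\delta\ket{\times}$ with no further constraints.

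For bullet~3 I would invoke Claim~\ref{claim:step4} to start STEP~5 from the $h$-partite GHZ state $\ket{\psi^{(3)}_{\vect{x}}}=\frac{1}{\sqrt{2}}(\ket{\hat{0}}^{\otimes h}+\ket{\hat{1}}^{\otimes h})_{\reg{R}_S}$ with $h=\abs{\vect{x}}\ge 2$, and then split on the parity of $h$ using the identification $\ket{\hat{0}}=\ket{00},\ket{\hat{1}}=\ket{01},\ket{\emptyset}=\ket{10},\ket{\times}=\ket{11}$. For even $h\ge 2$, $W_h=I_2\otimes U_h$ preserves the first qubit of every register at $\ket{0}$, so the output lies entirely in the subspace where no first qubit is $\ket{1}$; this automatically rules out any amplitude on $\ket{\emptyset}^{\otimes h}$ or $\ket{\times}^{\otimes h}$, and what remains is to see that $U_h^{\otimes h}\frac{1}{\sqrt{2}}(\ket{0}^{\otimes h}+\ket{1}^{\otimes h})$ is orthogonal to both $\ket{0}^{\otimes h}$ and $\ket{1}^{\otimes h}$, which is exactly the content of Lemma~3.3 of Ref.~\cite{TanKobMat12TOCT}. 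For odd $h\ge 3$, the initial $\CNOT_{2\rightarrow 1}$ maps the GHZ state to $\frac{1}{\sqrt{2}}(\ket{\hat{0}}^{\otimes h}+\ket{\times}^{\otimes h})$ (using $\ket{00}\mapsto\ket{00}$ and $\ket{01}\mapsto\ket{11}$), after which Lemma~3.5 of the same reference guarantees that $V_h^{\otimes h}$ produces a state orthogonal to all four uniform strings $\ket{\hat{0}}^{\otimes h},\ket{\hat{1}}^{\otimes h},\ket{\emptyset}^{\otimes h},\ket{\times}^{\otimes h}$, so the support is contained in $B$.

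The main obstacle I anticipate is essentially a bookkeeping one: the cited lemmas in Ref.~\cite{TanKobMat12TOCT} are stated in their own basis and labelling conventions for the two-qubit space, and the proof must verify that the bijection between $\set{\hat{0},\hat{1},\emptyset,\times}$ and $\set{00,01,10,11}$ adopted here aligns with those conventions so that the symmetry-breaking guarantees transfer without modification. In particular one has to check that the role of ``first qubit as presence flag'' and ``second qubit as coin value'' assumed implicitly by the decomposition $W_h=I_2\otimes U_h$ for even $h$ is compatible with how $U_h$ is defined in the original work; analogously for the odd-$h$ case, where the $\CNOT_{2\rightarrow 1}$ is the mechanism that routes the GHZ support into the $\set{\ket{\hat{0}},\ket{\times}}$-subspace on which $V_h$ is designed to act. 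Once that correspondence is pinned down, the remaining work---expanding the GHZ state in tensor-product form, tracking which qubits are affected by $U_h$ versus $V_h\cdot\CNOT_{2\rightarrow 1}$, and reading off the support---is mechanical.
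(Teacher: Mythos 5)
Your proposal is correct and follows essentially the same route as the paper: the first two bullets are dispatched trivially (no active parties, respectively the fact that $\set{\ket{\hat{0}},\ket{\hat{1}},\ket{\emptyset},\ket{\times}}$ is an orthonormal basis of $\Complex^4$), and the third bullet is reduced to Lemmas~3.3 and~3.5 of Ref.~\cite{TanKobMat12TOCT} after writing $\ket{\psi^{(3)}_{\vect{x}}}$ in the computational basis. Your explicit parity split and the $\CNOT_{2\rightarrow 1}$ computation merely unpack what the paper leaves to the cited lemmas, so no further changes are needed.
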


That is, 
if $\abs{\vect{x}}$ is zero, then the state over $(\reg{R},\reg{G})$ is
 a tensor product of $\ket{\emptyset}$ by 
 Claims~\ref{claim:step3-2}~and~\ref{claim:step4}.

If $\abs{\vect{x}}$ is exactly one, i.e., 
$S=\set{i^*}$ for some $i^*$,
then, whatever unitary operator $W_h$ has acted on $\reg{R}_{i^*}$,
the resulting state
$\ket{\psi^{(4)}_{\vect{x}}}_{\reg{R}_S}\deq W_h\ket{\psi^{(3)}_{\vect{x}}}_{\reg{R}_S}$
is a quantum state over the register $\reg{R}_{i^*}$.
This state is obviously consistent over $S$,
since there is only one active party.

If $\abs{\vect{x}}$ is at least two
and $h$ equals $\abs{\vect{x}}$,
STEP 5 transforms
the state $\ket{\psi^{(3)}_{\vect{x}}}_{\reg{R}_S}$
into $\ket{\psi^{(4)}_{\vect{x}}}_{\reg{R}_S}$,
which is an inconsistent state over $S$.
Intuitively, 
the state $\ket{\psi^{(4)}_{\vect{x}}}_{\reg{R}_S}$
is a superposition
of only the basis states that
correspond to the situations  where
at least two active parties have different contents chosen from
$\set{\hat{0},\hat{1},\emptyset, \times}$
in their $\reg{R}_i$.
This fact exhibits a striking difference from the case of $\abs{\vect{x}}\le 1$.
We should emphasize
that 
the amplitude $\beta_{\vect{y}}$ vanishes
for each 
$\vect{y}\in 
\set{\hat{0}^{\abs{\vect{x}}},\hat{1}^{\abs{\vect{x}}},
\emptyset^{\abs{\vect{x}}},
\times ^{\abs{\vect{x}}}}$
\emph{only when} $h$ equals $\abs{\vect{x}}$.
In general, however, $h$ may not be equal to $\abs{\vect{x}}$,
in which case the amplitude $\beta_{\vect{y}}$ 
is nonzero for some
$\vect{y}\in 
\set{\hat{0}^{\abs{\vect{x}}},\hat{1}^{\abs{\vect{x}}},
\emptyset^{\abs{\vect{x}}},
\times ^{\abs{\vect{x}}}}$.

\begin{step}
Every active party measures 
$\reg{R}_i$
in the basis 
$\set{\ket{\hat{0}},\ket{\hat{1}}, \ket{\emptyset}, \ket{\times}}$
and 
obtains a two-bit classical outcome $r_i$.
\end{step}
Suppose that $\abs{\vect{x}}$ is at least one and that the quantum state over $\reg{R}_S$ just before STEP 6 is $\ket{\psi^{(4)}_{\vect{x}}}_{\reg{R}_S}$.
If $\abs{\vect{x}}$ is at least two
and $h$ equals $\abs{\vect{x}}$,
Claim~\ref{claim:step5} implies that
there are at least two distinct outcomes
among those obtained by active parties.
If $\abs{\vect{x}}$ equals one,
then there is obviously a single outcome
in the network.

Finally, suppose that
$\abs{\vect{x}}$ is zero.
Since there are no active parties, there are no outcomes
that would be obtained by the measurement.

The number of distinct outcomes is hence different between 
these three cases.

\begin{step}
All parties collaborate to
decide whether 
the number of distinct elements among
the outcomes $\set{r_i\colon i\in S}$
is at most one
by running the distributed deterministic algorithm 
provided in {Proposition~\ref{pr:ColorCounting}}.
If the number is at most one,
return $\true$; otherwise,
return $\false$.
\end{step}
To realize STEP~7,
it suffices to 
decide whether
the string $(r_1,\dots, r_n)$
is consistent over $S$,
where $r_j$ for $j\not\in S$ is set at an appropriate value
that is distinguishable from any possible outcome $r_i$ for $i\in S$
(technically, $r_i$ is a three-bit value).
For this purpose, 
we use the distributed algorithm $\CONSISTENCY$
given in 
Proposition~\ref{pr:ColorCounting}
(a special form of a more general statement~\cite{KraKriBer94InfoComp}).
The algorithm
yields the correct output if an upper bound of the diameter of the underlying graph is known. In our setting, $N$ can be used to upper-bound the diameter.
Every party can therefore decide whether
$T_1(\vect{x})$ is $\true$ or $\false$.

\begin{proposition}[Color Counting~\cite{KraKriBer94InfoComp}]
\label{pr:ColorCounting}
Let $G\deq (V,E)\in \calD_n$ be the 
the underlying graph
of an anonymous network
with a diameter upper-bounded by $\Delta$.
Let $S\subseteq [n]$ be the set of active parties
and
  let $C$ be the set of a constant number of colors.
Then, there is a deterministic algorithm $\COLORCOUNT$ that,
if the upper bound $\Delta$, 
 a color $c_i\in C$, and 
a bit $S_i$
indicating 
whether $i\in S$ or not are given at each party $i\in [n]$,
decides 
which is true among the following three cases
and informs every party of the decision$:$
\begin{center}
$(0):$ $\Abs{\bigcup_{i\in S}\set{c_i}}= 0$,\,\,\,\,
$(1):$ $\Abs{\bigcup_{i\in S}\set{c_i}}=1$,\,\,\,\,
$(2):$ $\Abs{\bigcup_{i\in S}\set{c_i}}\ge 2$\\
\end{center}
 in $O(\Delta)$ rounds with $O(\Delta \abs{E})$ classical bits of communication
$[$the case $(0)$ occurs if and only if $S$ is the empty set$]$.
As a special case, the algorithm  decides
whether  all active parties are assigned a certain single color
$[$called the ``$\consistent$'' case, corresponding to $(0)$ or $(1)$$]$ 
or at least two colors $[$called the ``$\inconsistent$'' case, corresponding to $(2)$$]$.
When the algorithm is used for this purpose, 
it is denoted by $\CONSISTENCY$.
\end{proposition}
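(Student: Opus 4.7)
The plan is to reduce $\COLORCOUNT$ to $\abs{C}=O(1)$ independent Boolean-OR computations and then let every party make its decision locally. For each color $c\in C$, define the local bit $a_i^{(c)}\deq 1$ if $S_i=1$ and $c_i=c$, and $a_i^{(c)}\deq 0$ otherwise, and let $b_c\deq \bigvee_{i\in[n]} a_i^{(c)}$. Then $b_c=1$ iff some active party is assigned color $c$, so once every party learns the vector $(b_c)_{c\in C}$ it can locally output $(0)$ if all $b_c=0$, $(1)$ if exactly one $b_c=1$, and $(2)$ otherwise. The $\CONSISTENCY$ specialization is immediate, since ``$\inconsistent$'' corresponds to $\sum_c b_c\ge 2$.

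To compute each $b_c$ deterministically on the anonymous network, I would run the standard flooding OR-propagation: every party $i$ keeps a register $r_i^{(c)}$ initialized to $a_i^{(c)}$; for $\Delta$ successive rounds, $i$ sends the current $r_i^{(c)}$ through every out-port, receives a bit on each in-port, and then replaces $r_i^{(c)}$ with the OR of its previous value and all received bits. A straightforward induction on the round index $t$ shows that after round $t$, the register $r_i^{(c)}$ equals the OR of $a_j^{(c)}$ over all $j$ that can reach $i$ via a directed path of length at most $t$; since $G\in\calD_n$ is strongly connected with diameter at most $\Delta$, setting $t=\Delta$ yields $r_i^{(c)}=b_c$ at every party. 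The $\abs{C}$ instances are executed in parallel.

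The round count is $\Delta+O(1)=O(\Delta)$, and in each round every edge carries $O(\abs{C})=O(1)$ bits, so the total communication is $O(\Delta\abs{E})$. The anonymity of the network presents no obstacle here because the protocol uses only the identical deterministic update rule at each party and the intrinsic in-port/out-port structure---no identifiers are needed. The case $S=\emptyset$ is handled automatically, since every $a_i^{(c)}$ vanishes exactly when no party is active, making all $b_c=0$. The only point requiring care is the flooding correctness on a directed graph with a known diameter upper bound $\Delta$, which is exactly what the strong-connectivity assumption and the induction above provide.
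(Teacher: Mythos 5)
Your proposal is correct and matches the paper's proof in essence: the paper has each party flood the \emph{set} of active colors (taking unions with received sets for $\Delta$ rounds) and then decide locally from the size of that set, which is exactly your per-color OR flooding written with the characteristic-vector encoding since $\abs{C}=O(1)$. The round and bit bounds and the treatment of anonymity and of $S=\emptyset$ coincide with the paper's argument.
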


The operations performed by each party $i$ 
in executing \Qhm
are described in Algorithm \Qhm shown below.
We should emphasize that
these operations
are independent of the index $i$
(recall that this is the requirement of computing on anonymous networks).
Indeed,
when executing \Qhm,
each party need not
tell its own registers from the other parties';
it just needs to distinguish between its local registers
and perform operators on them
in a way independent of its index
(but dependent on input).
Thus, the subscript $i$ of each register/variable
in Algorithm~\Qhm can safely be dropped
without introducing any ambiguity from the viewpoint of the party $i$.

\begin{algorithm}[t]
\small
\SetAlgoLined
\SetAlgoRefName{$\mbox{\boldmath \Qhm}$}
\caption{Every party $i$ performs the following operations.}
\label{Qhm}
\SetKw{Notation}{Notation}
\KwIn{a classical variable $x_{i}$, and $N\in \Natural$.}
\KwOut{$\true$ or $\false$.}
\BlankLine
\Notation{
Let $\reg{R}_i$, $\reg{Y}_i$, and $\reg{G}_i$
be $($a set of$\,)$ quantum registers, and
let $y_i$, $r_i$, and $\bar{r}_i$ be classical variables.
Assume that 
 $\ket{\hat{0}},
\ket{\hat{1}},
\ket{\emptyset},
\ket{\times}$
are
$\ket{00}, \ket{01}, \ket{10}, \ket{11}$,
respectively.
}

\BlankLine
\Begin{
\nlset{STEP1}\lIf{$x_{i}=1$}{prepare $\frac{1}{\sqrt{2}}(\ket{\hat{0}}+\ket{\hat{1}})$
in $\reg{R}_i$}
\lElse{prepare $\ket{\emptyset}$ in $\reg{R}_i$}.

\Begin{
\nlset{STEP2}run $\QCONSISTENCY$
on $\reg{R}_i$
with $(S_i,\Delta)\leftarrow (x_i,N)$ 
to output ($\reg{R}_i,\reg{Y}_{i},\reg{G}_i)$

($\reg{G}_i$ is the party's share of $\reg{G}$, so that $\reg{G}=(\reg{G}_1,\dots, \reg{G}_n)$)\;

\nlset{STEP3}measure $\reg{Y}_{i}$
        in the 
        $\{ \ket{\mbox{``$\consistent$''}},
        \ket{\mbox{``$\inconsistent$''}} \}$ basis to obtain an outcome $y_i$\;

\lIf{$y_i$ is ``$\inconsistent$''}{
       \Return{$\false$}.}
}

\nlset{STEP4}\Begin{
run $\GHZSCALEDOWN$ on $(\reg{R}_i,\reg{G}_i)$ with $(h,m)$\;
\lIf{the output is ``$\true$''}{\Return{$\true$}}.
}

\nlset{STEP5}\lIf{$x_{i}=1$}{apply $W_h$ to $\reg{R}_i$}\;

\nlset{STEP6}\lIf{$x_{i}=1$}{
measure $\reg{R}_i$ in the computational basis
to obtain an outcome $r_i\in \set{00,01,10,11}$\;
}

\lElse{
set $r_i$ at ``$\emptyset$''\;
}

\nlset{STEP7}run $\CONSISTENCY$ on $r_i$ with 
$(S_i,\Delta)\leftarrow (x_i,N)$ 
to output $\bar{r}_i$\;

\lIf{$\bar{r}_i$ is ``$\consistent$''}{\Return{$\true$}}
\lElse{\Return{$\false$}.}
}

\end{algorithm}

\section{Proof of  Lemma~\ref{lm:Q{h,m}}}
\label{sec:proofs}
This section first provides all the proofs of the propositions and claims
in Sec.~\ref{sec:keysub}
(except those appearing in previous works).
Then it proves Lemma~\ref{lm:Q{h,m}}.

\subsection{Proofs of Claims and Propositions  in Sec.~\ref{sec:SVAlgorithm}}
\label{subsec.Q{h,m}}
All proofs except the one for Claim~\ref{claim:step4}
essentially follow from the ideas in Refs.~\cite{KraKriBer94InfoComp,TanKobMat12TOCT}.
Readers familiar with these references can safely skip
the proofs.
The proof for Claim~\ref{claim:step4} is based on Ref.~\cite{TanKobMat12TOCT} for the special case, but it also discusses the other cases that the reference does not deal with.

Proposition~\ref{pr:ColorCounting} is simply an adaptation of Theorem~1 in 
Ref.~\cite{KraKriBer94InfoComp}.
For completeness, we provide its proof.

\begin{proofof}{Proposition~\ref{pr:ColorCounting}}
Suppose that every party $i$ prepares a variable $Y$.
Each active party initializes $Y$ to $\set{c_i}$,
while
each inactive party initializes $Y$ to ``$\emptyset$'',
representing
the empty set.
Every party then sends a copy of $Y$ via every out-port
while keeping  a copy of $Y$
and receives a message as a variable $Z_{k}$  via every in-port $k$.
The party then updates $Y$ by setting 
it at the union of $Y$ and $Z_k$ over all $k$.
Every party repeats the above sending/receiving
at most $\Delta$ times.
It is easy to see that
every party's final  $Y$ is 
the set of all colors in active parties' possession.
Since each message is of constant size, the bit complexity is $O(\Delta \abs{E})$.
\end{proofof}

As a corollary of Proposition~\ref{pr:ColorCounting},
we obtain Proposition~\ref{pr:T_{0}}.

\begin{proofof}{Proposition~\ref{pr:T_{0}}}
Let $C:=\set{1}$ and $S:= \set{i\in [n] \colon x_{i}=1}$.
Every party runs $\COLORCOUNT$ in 
Proposition~\ref{pr:ColorCounting}
and decides which of the two cases, (0) or (1), holds
(event (2) never occurs since $\abs{C}=1$).
\end{proofof}

We henceforth provide proofs of the claims in order.

\begin{proofof}{Claim~\ref{claim:step2}}
The claimed quantum algorithm 
essentially follows from
``\CONSISTENCY$_d$" on page~21 of
Ref.~\cite{TanKobMat12TOCT},
where $n$ is replaced with $\Delta+1$ and
the binary operation ``$\circ$''
is interpreted as the union operation over sets in our case.
Note that ``\CONSISTENCY$_d$" is a simple quantization
of the classical algorithm in Proposition~\ref{pr:ColorCounting}.

The ancillary quantum registers left at \emph{each} party
after \CONSISTENCY$_d$
runs in $\Delta$ rounds are
(1) the master copy $\reg{Y}_{0}^{(t)}$ whose copies have been sent out to neighbors in each round $t\in [\Delta]$,
(2) all registers $(
\reg{Z}_{1}^{(t)},\dots, \reg{Z}_{d^{\text{in}}}^{(t)}
)$
received in each round $t\in [\Delta]$,
and (3) the final output register $\reg{Y}_{0}^{(\Delta+1)}$.
Let $\reg{Y}$ be the collection $(\reg{Y}_1,\ldots, \reg{Y}_n)$,
where $\reg{Y}_i$ for $i\in [n]$ denotes the register $\reg{Y}_{0}^{(\Delta+1)}$ 
in the possession of party $i$,
and
let $\reg{G}$ be all the ancillary registers in the network except $\reg{Y}$,
namely, 
the collection of 
$(\reg{Y}_{0}^{(1)},\dots, \reg{Y}_{0}^{(\Delta)})$, 
$(
\reg{Z}_{1}^{(1)},\dots, \reg{Z}_{d^{\text{in}}}^{(1)}
),\ldots,
(
\reg{Z}_{1}^{(\Delta)},\dots, \reg{Z}_{d^{\text{in}}}^{(\Delta)}
)$ over all parties $i$.
It is easy to see that 
the whole action of the quantum algorithm can be written
as the operator $\Phi_{\Delta}$ in the claim.

For the complexity,
$\CONSISTENCY_d$ requires $O(\Delta)$ rounds.
In addition, $\CONSISTENCY_d$ communicates $O(\Delta \abs{E})$ qubits,
since a single register representing a subset of $\set{\hat{0},\hat{1},\emptyset, \times}$
is sent through each communication link at each round.
\end{proofof}

\begin{proofof}{Claim~\ref{claim:step3-1}}
The claim is very easy. We omit the proof.
\end{proofof}

\begin{proofof}{Claim~\ref{claim:step3-2}}
Since the state 
$\ket{\psi^{\text{(2)}}_{\vect{x}}}_{(\reg{R},\reg{G})}$
is a superposition of $
\ket{\vect{z}}_{\reg{R}}\otimes \ket{g_{\Delta}(\vect{z})}_{\reg{G}}$
with $\vect{z}\in A$,
the contents of the registers $\reg{R}_i$ of
all active parties $i$ are identical
by the definition of $A$.
More precisely, they are either all ``$\hat{0}$'' or all ``$\hat{1}$'' with probability
$1/2$.
Meanwhile, the contents of $\reg{R}_j$ of all inactive parties $j$
are ``$\emptyset$''.
The state $\ket{\psi^{\text{(2)}}_{\vect{x}}}_{(\reg{R},\reg{G})}$ is hence 
of the form
\[
\ket{\psi^{\text{(2)}}_{\vect{x}}}_{(\reg{R},\reg{G})}=
\displaystyle{
\frac{1}{\sqrt{2}}
\Oset{
\ket{\hat{0}}^{\otimes \abs{\vect{x}}}
\otimes
\Ket{g(\vect{z}_{\hat{0}})}
+
\ket{\hat{1}}^{\otimes \abs{\vect{x}}}
\otimes
\Ket{g(\vect{z}_{\hat{1}})}
}
_{(\reg{R}_{S},\reg{G})}
\otimes
}
\Oset{
\ket{\emptyset}^{\otimes (n-\abs{\vect{x}})}
}_
{\reg{R}_{\overline{S}}}.
\]
This fact and items 1 to 3 of the following Claim~\ref{cl:ContentsOfG}
imply 
\begin{equation}
\ket{\psi^{\text{(2)}}_{\vect{x}}}_{(\reg{R},\reg{G})}=
\frac{1}{\sqrt{2}}
\Oset{\ket{\hat{0}}^{\otimes p}+\ket{\hat{1}}^{\otimes p}}
_{(\reg{R}_S,\reg{G}')}
\otimes 
\Oset{\ket{\emptyset}^{\otimes q}}
_{(\reg{R}_{\bar{S}},\reg{G}'')},
\end{equation}
where 
$\reg{G}''$ is the set of (unentangled) registers in $\reg{G}$ whose contents are ``$\emptyset$'', and $\reg{G}'$ is the remaining registers in $\reg{G}$.
It is obvious that  $p+q\ge \abs{S}+\abs{\bar{S}}=n$.
It also holds that $p\ge \abs{S}$,
since the content of $\reg{R}_i$ for $i\in S$ is either ``$\hat{0}$'' and ``$\hat{1}$''.
In particular, when $\abs{S}$ is $0$, 
item 4 of Claim~\ref{cl:ContentsOfG}
implies that $p$ equals 0.
This completes the proof of Claim~\ref{claim:step3-2}.
\end{proofof}

\begin{claim}
\label{cl:ContentsOfG}
For a fixed set $S$ of active parties,
suppose that 
the state of $(\reg{R},\reg{G})$ is
$\ket{\psi^{\text{\rm (2)}}_{\vect{x}}}_{(\reg{R},\reg{G})}$.
Then the following hold.
\begin{enumerate}
\item 
If the content of $\reg{R}_i$ is ``$\hat{0}$'' for every $i\in S$,
then the content of every register in $\reg{G}$ is 
either ``$\hat{0}$'' or ``$\emptyset$''.

\item 
If the content of $\reg{R}_i$  is ``$\hat{1}$'' for every $i\in S$,
then the content of every register in $\reg{G}$ is 
either ``$\hat{1}$'' or ``$\emptyset$''.

\item
If a register in $\reg{G}$ has the content ``$\emptyset$'',
then the register is unentangled.

\item If there are no active parties, then 
the content of every register in $(\reg{R},\reg{G})$ is ``$\emptyset$''.
\end{enumerate}
\end{claim}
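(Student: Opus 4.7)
The plan is to trace the classical basis-state computation performed by the algorithm $\QCONSISTENCY$ described in the proof of Claim~\ref{claim:step2}. Recall that $\QCONSISTENCY$ repeatedly (i) applies $\CNOT$-based copy operations to each party's current master copy $\reg{Y}_0^{(t)}$ through every out-port, (ii) stores the received values in the registers $\reg{Z}_k^{(t)}$, and (iii) updates $\reg{Y}_0^{(t+1)}$ to be the set-theoretic union of the previous master copy and all received values, where $\emptyset$ acts as identity and $\times=\set{0,1}$ as the absorbing element. The registers in $\reg{G}$ consist of all intermediate master copies $\reg{Y}_0^{(t)}$ and all received-message registers $\reg{Z}_k^{(t)}$ across all rounds $t\in[\Delta]$.

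For items 1 and 2, I would first rewrite $\ket{\psi^{(2)}_{\vect{x}}}_{(\reg{R},\reg{G})}$ as the equal superposition of the ``all-$\hat{0}$'' and ``all-$\hat{1}$'' branches introduced in the proof of Claim~\ref{claim:step3-2}. Since $\QCONSISTENCY$ is the unitary implementation of the classical update rule on basis states, it suffices to simulate the classical algorithm within each branch separately. In the ``all-$\hat{0}$'' branch, every initial value in the network lies in $\set{\hat{0},\emptyset}$; because the union of any multiset of elements of $\set{\hat{0},\emptyset}$ is again in $\set{\hat{0},\emptyset}$, a straightforward induction on the round number shows that every register in $\reg{G}$ ends with content in $\set{\hat{0},\emptyset}$. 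Item~2 is the symmetric statement under the substitution $\hat{0}\mapsto\hat{1}$.

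The main step is item~3, which requires that the set of registers whose contents are $\emptyset$ is the same in both branches of the superposition. The key observation is that the classical update performed by $\QCONSISTENCY$ is invariant under the relabeling $\hat{0}\leftrightarrow\hat{1}$: the copy step merely replicates values through out-ports without distinguishing $\hat{0}$ from $\hat{1}$, and the union step returns $\emptyset$ exactly when all of its arguments are $\emptyset$, regardless of which non-empty value appears. Consequently, for every register $\reg{G}_j\in\reg{G}$, its classical content in the ``all-$\hat{0}$'' branch is $\emptyset$ if and only if its classical content in the ``all-$\hat{1}$'' branch is $\emptyset$. Any $\reg{G}_j$ with content $\emptyset$ in the sense of the claim, meaning its content is $\emptyset$ in both branches, can therefore be factored out of the superposition as $\ket{\emptyset}_{\reg{G}_j}$ tensored with the remainder, which is exactly the statement that $\reg{G}_j$ is unentangled from the other registers.

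Item~4 is immediate from the same basis-state analysis: if $S$ is empty, then the initial content of every $\reg{R}_i$ is $\emptyset$, and since $\emptyset$ is the identity of the union operation, every intermediate master copy and every transmitted message remains $\emptyset$ throughout all $\Delta$ rounds. The unitary implementation applied to this all-$\emptyset$ basis state produces a pure product state in which every register in $(\reg{R},\reg{G})$ deterministically holds $\emptyset$. The only genuinely delicate step is the symmetry argument in item~3; items 1, 2, and 4 are direct simulations of the classical protocol.
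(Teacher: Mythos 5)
Your proof is correct and follows essentially the route the paper intends: the paper omits the argument, merely noting that the claim ``follows from the fact that $\Phi_{\Delta}$ consists of idempotent operations, copy operations via \CNOT, and register exchanges,'' and your round-by-round simulation of the classical union/copy update is exactly that argument made explicit. Your $\hat{0}\leftrightarrow\hat{1}$ relabeling symmetry, showing that the set of $\emptyset$-valued registers in $\reg{G}$ coincides in the two branches so that they factor out, is precisely the detail needed for item~3 that the paper leaves implicit.
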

\begin{proof}
Claim~\ref{cl:ContentsOfG}
essentially follows from the fact that
the operator $\Phi_{\Delta}$, i.e., $\QCONSISTENCY$,
consists of idempotent operations, copy operations via CNOT,
and register exchanges. We thus omit the proof.
This claim is (implictly) observed 
in Sec.~4 of Ref.~\cite{TanKobMat12TOCT}.
\end{proof}

\begin{proofof}{Claim~\ref{claim:step4}}
The algorithm $\GHZSCALEDOWN$
is 
a modification of the algorithm sketched 
in Sec.~4 of Ref.~\cite{TanKobMat12TOCT},
which is obtained by
combining standard quantum and  classical techniques.
That is, every party $i$
performs the following procedure
on $(\reg{R},\reg{G})$,
which is supposed to be in the state
$\ket{\psi^{(2)}_{\vect{x}}}_{(\reg{R},\reg{G})}$.
Let $\ket{\hat{+}}\deq (\ket{\hat{0}}+\ket{\hat{1}})/\sqrt{2}$ and
$\ket{\hat{-}}\deq (\ket{\hat{0}}-\ket{\hat{1}})/{\sqrt{2}}$.
\begin{enumerate}
\item[(i)]
Measure every local register in $\reg{G}$
in the basis
$\set{\ket{\hat{+}},\ket{\hat{-}},\ket{\emptyset},\ket{\times}}$.
\item[(ii)]
Locally count the number $s_i \pmod 2$ of the measurement outcomes ``$\hat{-}$''
that party $i$ has obtained.
\item[(iii)]
Attempt to compute the sum 
of $s_i$ $\pmod 2$ over all parties
by invoking the algorithm in Fact~\ref{th:view}
with $s_i$ and $m$. Let  $\chi$ be the output of the algorithm.
\item [(iv)] Decide with $\CONSISTENCY$ given in 
Proposition~\ref{pr:ColorCounting} whether 
the string induced by $\chi$'s  is consistent over all parties.
If either the result is ``$\inconsistent$'' or $\chi$ is not a non-negative integer, 
then output ``$\true$'' and halt.
\item [(v)]
If the party $i$ is active (i.e., $i\in S$), 
the output $\chi \pmod 2$ equals one,
and $h$ is at least $1$,
then perform
    $R({\frac{\pi}{h}})\deq
    \Cset{
      \begin{smallmatrix}
        1 & 0\\
        0 & \e ^{\im\frac{\pi}{h}}
      \end{smallmatrix}
     }
   $ 
on 
the subspace spanned by $\set{\ket{\hat{0}},\ket{\hat{1}}}$
on $\reg{R}_i$ and output  $\reg{R}_i$.
\end{enumerate}
The reason Step (iv) is needed is as follows:
For $m\neq n$, the algorithm involved in Step (iii) may not work correctly,
and some party could output $\chi$ that is different from other parties'
$\chi$ and/or is even a non-integer value
(actually, the algorithm never outputs negative values,
which follows from the details of the algorithm, which we do not touch on in this paper). Step (iv) is for the purpose of
detecting that the guess at $m$ is wrong and
precluding this wrong guess from leading to the wrong final decision
when the number of active parties is at most one.

First assume that $(h,m)$ equals $(\abs{\vect{x}},n)$ and $\abs{\vect{x}}$ is at least one.
In this case, 
Steps (i)-(v) 
transform the state 
$\ket{\psi^{\text{(2)}}_{\vect{x}}}$
to
$
\ket{\psi^{(3)}_{\vect{x}}}_
{\reg{R}_S}=
(
\ket{\hat{0}}^{\otimes \abs{\vect{x}}}
+
\ket{\hat{1}}^{\otimes \abs{\vect{x}}}
)_
{\reg{R}_S}/{\sqrt{2}},
$
by following Sec.~4 in Ref.~\cite{TanKobMat12TOCT}.

Next suppose that $\abs{\vect{x}}$ is zero.
For any pair $(h,m)$,
every party can still perform Steps (i) and (ii)
since these steps are independent of $(h,m)$.
Every party then performs Step~(iii).
Note that this is possible even when $m$ is not equal to $n$
(see the description just after Fact~\ref{th:view}).
If $m$ is not equal to $n$, however,
the value $\chi$ obtained in Step (iii) may be wrong
or the string induced by $\chi$'s could be inconsistent over all parties.%
\footnote{
When $\abs{\vect{x}}$ is zero, 
$s_i$ is  zero for all  $i$.
This actually implies that 
$\chi$ is zero for all parties
regardless of $m$,
which follows from the details found in 
Ref.~\cite{YamKam96IEEETPDS-1}.
However, we consider the possibility of halting at Step (iv)
to avoid getting into the details of the algorithm in Fact~\ref{th:view}.
}
Since $\CONSISTENCY$ can run 
and make a common decision for each party
by setting $\Delta$ at $N$ (Proposition~\ref{pr:ColorCounting}),
either every party halts with output ``$\true$'' at Step~(iv)
or every party proceeds to Step~(v)
with a non-negative integer $\chi$ that agrees with any other party's $\chi$.
In the latter case,
no operations
are performed 
in Step~(v)
since no active parties exist (i.e., $\abs{\vect{x}}=\abs{S}=0$).
Thus, the procedure effectively applies the identity operator to 
$\ket{\psi^{(2)}_{\vect{x}}}_{(\reg{R},\reg{G})}$, i.e., a tensor product of $\ket{\emptyset}$.

Finally, suppose that $(h,m)$ is not equal to $(\abs{\vect{x}},n)$
and $\abs{\vect{x}}$ is at least one.
For any pair $(h,m)$,
every party can still perform Steps (i), (ii), and (iii).
As in the case of $\abs{\vect{x}}=0$,
either every party halts with output ``$\true$'' at Step~(iv)
or every party proceeds to Step~(v)
with a non-negative integer $\chi$ that agrees with any other party's $\chi$.
Suppose that it proceeds to Step~(v).
If $h$ is at least $1$ and the value $\chi \pmod 2$
happens to be one, then every active party applies 
$R({\frac{\pi}{h}})$.
This effectively multiplies
the state $\ket{\hat{1}}^{\otimes \abs{\vect{x}}}_{\reg{R}_S}$
by a factor
 $e^{i\frac{\pi}{h}\abs{\vect{x}}}$;
namely, it transforms
$\ket{\psi^{(2)}_{\vect{x}}}_{\reg{R}_S}$
to 
$\ket{\tilde{\psi}^{(3)}_{\vect{x}}}_{\reg{R}_S}
\deq
(1/{\sqrt{2}})
(\ket{\hat{0}}^{\otimes \abs{\vect{x}}}\pm e^{i\frac{\pi}{h}\abs{\vect{x}}}\ket{\hat{1}}^{\otimes \abs{\vect{x}}})_{\reg{R}_S}$.
If $h$ or $\chi \pmod 2$ equals zero, then
Step~(iv) performs no operations, and thus the entire procedure
effectively transforms
$\ket{\psi^{(2)}_{\vect{x}}}_{\reg{R}_S}$
to 
$\ket{\tilde{\psi}^{(3)}_{\vect{x}}}_{\reg{R}_S}
\deq 
(1/{\sqrt{2}})
(\ket{\hat{0}}^{\otimes \abs{\vect{x}}}\pm \ket{\hat{1}}^{\otimes \abs{\vect{x}}})_{\reg{R}_S}$.

For the communication cost, observe that only Steps (iii) and (iv) involve (classical) communication.
Step~(iii) just runs the algorithm provided in Fact~\ref{th:view} with $m$ instead of (unknown) $n$. In this case, the algorithm
runs in $O(m)$ rounds and communicates 
$O(\max\set{p(m),p(n)})$ classical bits, 
as described just after Fact~\ref{th:view}.
The cost of Step~(iv) is shown in 
Proposition~\ref{pr:ColorCounting}
and dominated by the cost of Step~(iii).
\end{proofof}

\begin{proofof}{Claim~\ref{claim:step5}}
If $\abs{\vect{x}}$ is zero, there exist no active parties.
STEP~5 is hence effectively skipped.
If $\abs{\vect{x}}$ is one, the register $\reg{R}_S$ consists of two qubits. 
Since $\set{\ket{\hat{0}},\ket{\hat{1}}, \ket{\emptyset}, \ket{\times}}$
is an orthonormal basis of $\Complex^{4}$,
the statement follows.
Now suppose that $\abs{\vect{x}}$ is at least two.
In addition, assume that $h$ equals $\abs{\vect{x}}$.
Recall
that $(\ket{\hat{0}}, \ket{\hat{1}}, \ket{\emptyset},\ket{\times})$ denotes 
$(\ket{00},\ket{01},\ket{10},\ket{11})$.
We thus have $\ket{\psi^{(3)}_{\vect{x}}}_
{\reg{R}_S}=
(1/\sqrt{2})
 (\ket{00}^{\otimes \abs{\vect{x}}}+\ket{{01}}^{\otimes \abs{\vect{x}}})
 $.
Every active party
applies 
the unitary operator $W_h$
to its $\reg{R}_i$,
where $W_h$ is either $ I_2\otimes U_h$ for even $h$ or $V_h\cdot \CNOT_{2\rightarrow 1}$
for odd $h$.
Lemmas 3.3 and 3.5 in Ref.~\cite{TanKobMat12TOCT}
imply that
$\ket{\psi^{(3)}_{\vect{x}}}_
{\reg{R}_S}$ is transformed into
an inconsistent state over $S$ whenever $h\ge 2$:
$
\ket{\psi^{(3)}_{\vect{x}}}_{\reg{R}_S}
\mapsto 
\ket{\psi^{(4)}_{\vect{x}}}_{\reg{R}_S}
\deq
\sum_{\vect{y}\in 
B
}
\beta_{\vect{y}}\ket{\vect{y}}_{\reg{R}_S}
$
for certain amplitudes $\set{\beta_\vect{y}}$.
\end{proofof}

\subsection{Proof of Lemma~\ref{lm:Q{h,m}}}
We first show that, for any $(h,m)$,
STEPs~1 and 2 can run and yield
$\ket{\psi^{(1)}_{\vect{x}}}_{(\reg{R},\reg{Y},\reg{G})}$.
Since STEP~1 is independent of $(h,m)$, it always yields
$\ket{\psi^{(0)}_{\vect{x}}}_{\reg{R}}$.
STEP~2  runs the algorithm provided in Claim~\ref{claim:step2},
which only requires an upper bound $\Delta$ on the diameter 
of the underlying graph $G$.
By setting $\Delta$ at $N$, Claim~\ref{claim:step2}
implies that STEP~2  always yields $\ket{\psi^{(1)}_{\vect{x}}}_{(\reg{R},\reg{Y},\reg{G})}$.

For STEP~3 and latter steps, we consider three cases separately:
(A) $\abs{\vect{x}}=0$;
(B) $\abs{\vect{x}}\ge 1$ and $(h,m)=(\abs{\vect{x}},n)$;
(C) $\abs{\vect{x}}\ge 1$ and $(h,m)\neq (\abs{\vect{x}},n)$.

Suppose case (A): $\abs{\vect{x}}=0$.
STEP~3 can be performed, since it consists of only a measurement that is independent of $(h,m)$.
Moreover, every party can obtain the same outcome after STEP~3,
since $\ket{\psi^{(1)}_{\vect{x}}}_{(\reg{R},\reg{Y},\reg{G})}$
is of the form shown in Eq.~(\ref{eq:state1}).
The common outcome is ``$\consistent$'' with certainty
by Claim~\ref{claim:step3-1}.
Thus, the state resulting from the measurement is always
$\ket{\psi^{(2)}_{\vect{x}}}_{(\reg{R},\reg{G})}$, 
which is a tensor product of $\ket{\emptyset}$ by Claim~\ref{claim:step3-2}.
STEP~4 consists of a distributed algorithm $\GHZSCALEDOWN$, 
which  
either returns ``$\true$'' or keeps the tensor product in 
$(\reg{R},\reg{G})$,
as asserted by Claim~\ref{claim:step4}.
STEPs~5 and 6 are effectively skipped,
since there are no active parties. 
Hence, there are no measurement outcomes
in the network,
which STEP~6 would yield if it were not skipped. This forces STEP~7 to determine 
by using Proposition~\ref{pr:ColorCounting}
that the number of distinct elements among the outcomes is zero with certainty,  which leads to returning ``$\true$''.
Note that STEP~7 
consists of the distributed algorithm $\CONSISTENCY$
provided in Proposition~\ref{pr:ColorCounting},
which makes
the decision of ``$\true$'' at every party
for any $(h,m)$.
Therefore, items 1 through 3 in Lemma~\ref{lm:Q{h,m}} hold when $\abs{\vect{x}}$ is zero.

Next, suppose case (B): $\abs{\vect{x}}\ge 1$ and 
$(h,m)=(\abs{\vect{x}},n)$.
Claim~\ref{claim:step3-1} implies that, if $\abs{\vect{x}}$ is one, 
then each party obtains with certainty the outcome ``$\consistent$''
of the measurement made in STEP~3.
Note that, with the same argument as in case (A),
the outcomes of all parties agree.
STEP~3 can thus measure ``$\inconsistent$'' and output ``$\false$'' 
only if $\abs{\vect{x}}$
is at least two. 
Hence, whenever STEP~3 outputs ``$\false$'',  this output agrees with $T_1(\vect{x})$.
Assume henceforth that the outcome is ``$\consistent$''.
The resulting state is then
$\ket{\psi^{(2)}_{\vect{x}}}_{(\reg{R},\reg{G})}$, 
which is of the form in Eq.~(\ref{eq:state2})  in Claim~\ref{claim:step3-2}.
STEP~4 then transforms
$\ket{\psi^{(2)}_{\vect{x}}}_{(\reg{R},\reg{G})}$
to $\ket{\psi^{(3)}_{\vect{x}}}_{\reg{R}_S}$
with certainty as asserted by Claim~\ref{claim:step4}.
STEP~5 further transforms
$\ket{\psi^{(3)}_{\vect{x}}}_{\reg{R}_S}$
to $\ket{\psi^{(4)}_{\vect{x}}}_{\reg{R}_S}$
as implied by Claim~\ref{claim:step5}.
If $\abs{\vect{x}}(=\abs{S})$ is one,
then $\ket{\psi^{(4)}_{\vect{x}}}_{\reg{R}_S}$
is exactly the state of $\reg{R}_i$ of the only active party $i$,
and thus only a single outcome of the measurement 
is obtained in the whole network in STEP~6.
In this case, STEP~7  returns ``$\true$'', which matches $T_1(\vect{x})$.
If $\abs{\vect{x}}$ is at least two,
then the state $\ket{\psi^{(4)}_{\vect{x}}}_{\reg{R}_S}$
is inconsistent over $S$ by Claim~\ref{claim:step5},
so that the string induced by the measurement outcomes obtained in STEP~6
is also inconsistent over $S$.
Thus, there are two or more distinct outcomes in the whole network, and STEP~7 returns ``$\false$'', matching $T_1(\vect{x})$.
Therefore, items 1 through 3 in Lemma~\ref{lm:Q{h,m}} hold
if $\abs{\vect{x}}\ge 1$ and 
$(h,m)=(\abs{\vect{x}},n)$.

Finally, suppose case (C): $\abs{\vect{x}}\ge 1$ and $(h,m)\neq (\abs{\vect{x}},n)$. It suffices to show 
that items 1 and 3 in Lemma~\ref{lm:Q{h,m}}
 hold in this case;
namely,
that STEPs~3 through 7 can be performed,
\Qhm returns a common decision (i.e., $\true$ or $\false$) to every party
for any $(h,m)$, and
\Qhm always returns ``$\true$'' if $\abs{\vect{x}}$ is one.
As described in case (A),
every party obtains the same outcome of the measurement
for any $(h,m)$ in STEP~3.
With the same argument used in case (B), 
if $\abs{\vect{x}}$ is one, then 
Claim~\ref{claim:step3-1} implies that 
STEP~3 never returns ``$\false$'' 
and every party proceeds to STEP~4
with the state 
$\ket{\psi^{(2)}_{\vect{x}}}_{(\reg{R},\reg{G})}$.
Claim~\ref{claim:step4} then implies that,
for any $\abs{\vect{x}}\ge 1$ and
for any $(h,m)\neq (\abs{\vect{x}},n)$,
STEP~4 either returns ``$\true$'' to every party or transforms
$\ket{\psi^{(2)}_{\vect{x}}}_{(\reg{R},\reg{G})}$
to $\ket{\tilde{\psi}^{(3)}_{\vect{x}}}_{\reg{R}_S,}$.
Assume the latter case.
In STEP~5, every active party applies the local unitary operator 
$W_h$ to its share of 
 $\ket{\tilde{\psi}^{(3)}_{\vect{x}}}_{\reg{R}_S,}$.
This is possible for any $(h,m)$, since $W_h$ is defined for every possible $h$.
STEP~6 can obviously be performed, since it consists of a measurement
that is independent of $(h,m)$.
For any $(h,m)$,
STEP~7 
makes
a common decision at every party 
as stated in Proposition~\ref{pr:ColorCounting}.
If $\abs{\vect{x}}$ is one,
the register $\reg{R}_S$ is exactly $\reg{R}_i$  of the only active party $i$.
Therefore, whatever state in $\reg{R}_S$ results from STEP~5,
STEP~6 yields only a single outcome in the whole network.
STEP~7 thus returns ``$\true$''.
This shows that  items 1 and 3 in Lemma~\ref{lm:Q{h,m}}
 hold.

To bound the complexity, observe that
all the communication 
performed by \Qhm is devoted to 
STEPs~2, 4, and 7.
The complexities of these steps are shown in 
Claims~\ref{claim:step2} and \ref{claim:step4} and Proposition~\ref{pr:ColorCounting} with $N$ as $\Delta$.
Summing them up shows that \Qhm runs
in $O(N)$ rounds and communicates $O(N^3)$ qubits
and $O(p(N))\subseteq \tilde{O}(N^6)$ classical bits.

\section{Applications}
This section provides  some applications of the solitude verification algorithm.
\subsection{Zero-Error Leader Election (Proof of Corollary~\ref{cr:LE})}
The algorithm in Theorem~\ref{th:SV}, called QSV,
leads to a simple zero-error algorithm for the leader election problem
(a pseudo-code is given as Algorithm~\ref{ZQLE}).
This application is somewhat standard, but we will sketch how it works
for completeness. 

For every $s\in [2..N]$,
every party $i$ sets 
$x_{i}^{(s)}$ to a random bit,
which is $1$ with probability $1/s$
or $0$ with probability $1-1/s$.
The party then performs QSV
with  $x_i^{(s)}$
and $N$ over all $s$ in parallel.
Let QSV$[\vect{x}^{(s)},N]$ be the (common) output of QSV 
that every party obtains,
where
$\vect{x}^{(s)}\deq (x_{1}^{(s)},\dots, x_{n}^{(s)})$.
If there exists at least one $s$ such that QSV$[\vect{x}^{(s)},N]$
is ``$\true$'',
then every party $i$ outputs $z_{i}\deq x_{i}^{(s_{\max})}$,
where $s_{\max}$ is the maximum of $s$ 
such that QSV$[\vect{x}^{(s)},N]$ is ``$\true$'';
Otherwise,
it gives up.
The party with $z_{i}=1$ is elected as a unique leader.
Note that this elects a unique leader without error whenever  
$s_{\max}$ exists.
The probability of successfully electing a unique leader
is at least some constant,
since for $s=n$,
the probability that there is exactly a single
$i\in [n]$ 
with $x_{i}^{(s)}=1$ 
is $\binom{n}{1}\frac{1}{n}(1-\frac{1}{n})^{{n-1}}> 1/e$.
By the standard argument,
this probability can be amplified to a constant arbitrarily close to one
by simply repeating QZLE sufficiently many but constant times.
Since all communication in QZLE
is devoted to QSV, which runs for all $s\in [2..N]$
in parallel, the overall round complexity is still $O(N)$
and the overall bit complexity is 
$\tilde{O}(N^8)\times (N-1)=\tilde{O}(N^9)$.

\begin{figure}
\small
\begin{algorithm}[H]
\SetAlgoLined
\SetKw{Notation}{Notation:}
\KwIn{an upper bound $N$ on the number of parties.}
\KwOut{$z_{i}\in \set{0,1, \mathsf{give\mbox{-}up}}$.}
\BlankLine
\Notation{
\emph{Let $\calP_{s}$ be the distribution over $\set{0,1}$ for which
$\Pr _{Z\in_{\calP_{s}} \set{0,1}}[Z=1]=1/s$.}
}
\BlankLine
\Begin{
\ForAll{$s\in [2..N]$}{
perform in parallel Algorithm~QSV with input $(x_{i}^{(s)},N)$  to obtain output $y_i(s)$,\\
where $x_{i}^{(s)}\in _{ \calP_{s}}\set{0,1}$.
}

\uIf{$y_i(s)=\true$ for some $s$}{
\Return{$z_{i}\deq x_{i}^{(s_{\max})}$}, where $s_{\max}\deq\max \set{s\colon y_{s}=\true}$\;
}
\Else{\Return{$z_{i}\deq \mathsf{give\mbox{-}up}$}.}
}
\SetAlgoRefName{{ZQLE}}
\caption{Every party $i$ performs the following operations.} \label{ZQLE}
\end{algorithm}
\end{figure}

\subsection{Computing General Symmetric Functions (Proof Sketch of Theorem~\protect\ref{th:symmetric})}
The simple idea in the formal proof
is likely to be hidden
under complicated notations.
We thus only sketch the proof and relegate its formal description
to Appendix~\ref{appdx:symmetric}.

Recall that if there are at least two active parties,
then during the execution of Algorithm~QSV (on page~\pageref{QSV}),
Algorithm \Qhm (on page \pageref{Qhm}) for some $(h,m)$  outputs ``$\false$''.
Let $(h^*,m^*)$ be the lexicographically smallest 
pair among the pairs  $(h,m)$  for which \Qhm outputs ``$\false$''
on $\vect{x}$ and $N$.
Note that the decision ``$\false$'' must have been made at either STEP~3 or STEP~7 of $Q_{h^*,m^*}$.
If STEP~3 outputs ``$\false$'',
we insert the new step where every party measures its $\reg{R}_i$.
Since the state over all $\reg{R}_i$'s is an inconsistent state over the set $S$
of active parties in this case,
the string induced by the set of all the measurement outcomes 
is inconsistent over $S$.
The string thus partitions the set of active parties
into equivalence classes
naturally
defined by the outcomes.
Similarly, 
if STEP~7 outputs ``$\false$'',
then for the measurement outcomes $r_i\ (i\in [n])$ defined in $Q_{h^*,m^*}$,
the string $r_1\ldots r_n$ 
is inconsistent over the set $S$ of active parties.
The string thus partitions 
the set of active parties 
into equivalence classes
naturally
defined by $r_i$'s.
By repeating this process recursively,
the active parties will eventually be partitioned into equivalence classes
$(V_1,\dots, V_l)$
such that at least one of them is a singleton.
This can be verified as follows:
For each equivalence class $V_j$,
run QSV
with the members of $V_j$
as active parties.
If two or more singleton classes are found,
then all parties agree on one of the singleton classes
in an arbitrary way.
The parties then decide that the unique member of the class be a leader.
It is not difficult to show that
once a unique leader is elected,
the leader can compute $\abs{\vect{x}}$ in $O(N)$ rounds
with a polynomially bounded  bit complexity,
which is more formally stated as Claim~\ref{cl:eval|x|}.
\begin{claim}
\label{cl:eval|x|}
Suppose that there are $n$ parties on an anonymous (classical) network
with any underlying graph $G\deq (V,E)$ in $\calD_n$
in which an upper bound $N$ on $n$ is given as  global information.
Suppose further that 
each party $i$ in the network has a variable $S_i$
such that 
$S_l=$``$\mathsf{leader}$''  for a certain  $l\in [n]$
and $S_i=$``$\mathsf{follower}$'' for all $i\in [n]\setminus\set{l}$.
If every party $i$ is given $x_i\in \set{0,1}$,
then every party can compute 
$\abs{\vect{x}}$ 
in $O(N)$ rounds with the bit complexity ${O}(N\abs{V}^2\abs{E}\log \abs{V})$.
\end{claim}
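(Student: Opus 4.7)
The plan is to exploit the unique leader to break the symmetry of the anonymous network, tag every party with a unique identifier of polynomially bounded length, and then propagate the tagged inputs everywhere by pipelined flooding; since every party winds up knowing every $x_i$, each can output $\abs{\vect{x}}$ locally without a separate broadcast.

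First I would assign identifiers. In round $1$ the leader sends through each of its out-ports $p$ the single-symbol label $(p)$. Each non-leader party that receives such a label for the first time in some round $t$ adopts it (breaking ties among messages that arrive in the same round by choosing the one from the smallest in-port); in round $t+1$ it forwards through each of its own out-ports $q$ the extended label obtained by appending $q$. After $N-1$ rounds, strong connectivity together with $n\le N$ guarantees that every party holds an adopted label. Distinct parties necessarily hold distinct labels, because a label $(p_1,\dots,p_k)$ deterministically traces a single directed walk from the leader (leave by out-port $p_1$, then by out-port $p_2$, and so on), whose terminus is a single party. Hence every party has a unique identifier of $O(\abs{V}\log\abs{V})$ bits, since shortest directed paths have length at most $\abs{V}-1$ and each port number fits in $O(\log\abs{V})$ bits.

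Next I would gather every $x_i$ everywhere. Each party maintains the set of (identifier, bit) pairs it currently knows, initialized to its own singleton pair, and in every round sends the current set through each of its out-ports. By strong connectivity every ordered pair of parties is joined by a directed path of length at most $\abs{V}-1\le N-1$, so after $O(N)$ rounds every party knows the full collection $\{(\mathrm{id}_i,x_i):i\in [n]\}$ and outputs $\abs{\vect{x}}=\sum_{i}x_i$ locally. Agreement among parties is automatic because identifiers are unique. For the complexity, both phases take $O(N)$ rounds. In the gathering phase each message carries at most $\abs{V}$ pairs, each of size $O(\abs{V}\log\abs{V})$, and at most $\abs{E}$ messages traverse the network per round, yielding $O(N\cdot \abs{E}\cdot \abs{V}^2\log\abs{V})=O(N\abs{V}^2\abs{E}\log\abs{V})$ bits in total; the identifier-assignment phase is dominated by this.

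The only mild obstacle is the directedness of the edges: the standard trick of replying along the same communication link is unavailable, and even rooting an in-tree at the leader is awkward. Strong connectivity of $G\in \calD_n$ sidesteps this completely, since flooding along out-edges is all that is needed—every party can reach every other via a directed path of length at most $N-1$, and unique path-labeled identifiers rather than port reversals are what make the gathered information interpretable at each recipient.
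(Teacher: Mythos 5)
Your proposal is correct and follows essentially the same route as the paper: unique identifiers are built as sequences of out-port numbers flooded once from the leader (uniqueness from the fact that such a sequence traces a single directed walk with a single terminus), and then all $(\mathsf{id}_i,x_i)$ pairs are gathered by repeated union-flooding for $O(N)$ rounds, which is exactly the paper's use of a modified $\COLORCOUNT$ with the same $O(N\abs{V}^2\abs{E}\log\abs{V})$ accounting.
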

The parties can thus tell the value of $f(\vect{x})$ for any fixed 
$f\in\calS_{n}(k)$.
Observe that such a singleton class 
appears within $\ceil{\log_2 \abs{\vect{x}}}$ levels of recursion.
Then, it suffices 
for the following reason
to continue the process
up to $\ceil{\log_2 k}$-th recursion level:
If there are no singleton sets 
at $\ceil{\log_2 k}$-th recursion,
then $\abs{\vect{x}}$ must be larger than $k$, and thus the parties
can determine the value of $f(\vect{x})$;
otherwise, the parties can compute $f(\vect{x})$ as we have already shown.
The total number of rounds is thus $O(N\log (\max\set{k,2}))$.

\begin{remark}
\label{rm:alternative}
For readers familiar with Algorithm II in Ref.~\emph{\cite{TanKobMat05STACS}}
(which works even for any strongly connected directed graph),
an alternative algorithm 
for computing $f\in \calS(k)$
can be considered as follows$:$
First start Algorithm II and stop after the first $\ceil{\log_2 k}$ phases have finished.
Then, verify with \emph{Algorithm~QSV} that a unique leader is elected.
If this is the case, the leader can compute $\abs{\vect{x}}$ 
as in {Claim~\ref{cl:eval|x|}}.
If it fails, then this implies that $\abs{\vect{x}}$ is more than
$2^{\ceil{\log k}}\ge k$ and thus determines the value of $f(\vect{x})$.
Since each phase consists of $O(N)$ rounds,
the whole algorithm runs in $O(N\log (\max\set{k,2}))$ rounds
$($with a polynomially bounded bit complexity$)$.
\end{remark}
\section*{Acknowledgments}
The author is grateful to anonymous referees for various helpful comments for improving the presentation
and suggesting the alternative approach mentioned in Remark~\ref{rm:alternative}.
The author is also grateful to the ELC project (Grant-in-Aid for Scientific Research on Innovative Areas No. 24106009 of the MEXT in Japan) for encouraging the research presented in this paper.
\bibliographystyle{alpha}
%\bibliography{Solitude.bib}

\begin{thebibliography}{KKvdB94}

\bibitem[AAHK86]{AbrAdlHigKir86PODC}
Karl~R. Abrahamson, Andrew Adler, Lisa Higham, and David~G. Kirkpatrick.
\newblock Probabilistic solitude verification on a ring.
\newblock In {\em Proceedings of the Fifth Annual ACM Symposium on Principles
  of Distributed Computing (PODC '86)}, pages 161--173, 1986.

\bibitem[AAHK94]{AbrAdlHigKir94JACM}
Karl~R. Abrahamson, Andrew Adler, Lisa Higham, and David~G. Kirkpatrick.
\newblock Tight lower bounds for probabilistic solitude verification on
  anonymous rings.
\newblock {\em Journal of the ACM}, 41(2):277--310, 1994.

\bibitem[AM94]{AfeMat94InfoComp}
Yehuda Afek and Yossi Matias.
\newblock Elections in anonymous networks.
\newblock {\em Information and Computation}, 113(2):312--330, 1994.

\bibitem[Ang80]{Ang80STOC}
Dana Angluin.
\newblock Local and global properties in networks of processors (extended
  abstract).
\newblock In {\em Proceedings of the 12th Annual {ACM} Symposium on Theory of
  Computing}, pages 82--93, 1980.

\bibitem[AS10]{AhaSil10NJP}
N.~Aharon and J.~Silman.
\newblock Quantum dice rolling: a multi-outcome generalization of quantum coin
  flipping.
\newblock {\em New Journal of Physics}, 12(033027), 2010.
\newblock Also available in arXiv:0909.4186.

\bibitem[BHMT02]{BraHoyMosTap02AMS}
Gilles Brassard, Peter H{\o}yer, Michele Mosca, and Alain Tapp.
\newblock Quantum amplitude amplification and estimation.
\newblock In {\em Quantum Computation and Information}, volume 305 of {\em
  Contemporary Mathematics}, pages 53--74. American Mathematical Society, 2002.

\bibitem[BOH05]{BenHas05STOC}
Michael Ben-{O}r and Avinatan Hassidim.
\newblock Fast quantum {B}yzantine agreement.
\newblock In {\em Proceedings of the 37th Annual ACM Symposium on Theory of
  Computing}, pages 481--485, 2005.

\bibitem[BR03]{BuhRoh03MFCS}
Harry Buhrman and Hein R{\"o}hrig.
\newblock Distributed quantum computing.
\newblock In {\em Proceedings of the 28th International Symposium Mathematical
  Foundations of Computer Science (MFCS 2003)}, volume 2747 of {\em Lecture
  Notes in Computer Science}, pages 1--20. Springer, 2003.

\bibitem[BSV{\etalchar{+}}96]{BolShaVigCodGemSim96ISTCS}
Paolo Boldi, Shella Shammah, Sebastiano Vigna, Bruno Codenotti, Peter Gemmell,
  and Janos Simon.
\newblock Symmetry breaking in anonymous networks: Characterizations.
\newblock In {\em Proceedings of the Fourth Israel Symposium on Theory of
  Computing and Systems}, pages 16--26. IEEE Computer Society, 1996.

\bibitem[BT08]{BroTap08SIGACTNews}
Anne Broadbent and Alain Tapp.
\newblock Can quantum mechanics help distributed computing?
\newblock {\em SIGACT News}, 39(3):67--76, 2008.

\bibitem[BV02]{BolVig02DM}
Paolo Boldi and Sebastiano Vigna.
\newblock Fibrations of graphs.
\newblock {\em Discrete Mathematics}, 243(1-3):21--66, 2002.

\bibitem[CK98]{ChiKim98QCQC}
Dong~Pyo Chi and Jinsoo Kim.
\newblock Quantum database search by a single query.
\newblock In {\em Proceedings of the First NASA International Conference
  Quantum Computing and Quantum Communications}, volume 1509 of {\em Lecture
  Notes in Computer Science}, pages 148--151. Springer, 1998.

\bibitem[CKS10]{ChlKowStr10DISC}
Bogdan~S. Chlebus, Dariusz~R. Kowalski, and Michal Strojnowski.
\newblock Scalable quantum consensus for crash failures.
\newblock In {\em Proceedings of the 24th International Symposium Distributed
  Computing (DISC 2010)}, volume 6343 of {\em Lecture Notes in Computer
  Science}, pages 236--250. Springer, 2010.

\bibitem[DP06a]{DenPan06SIGACT}
Vasil~S. Denchev and Gopal Pandurangan.
\newblock Distributed quantum computing: A new frontier in distributed systems
  or science fiction?
\newblock {\em ACM SIGACT News}, 39(3):77--95, 2006.

\bibitem[DP06b]{DHoPan06QIC}
Ellie D'Hondt and Prakash Panangaden.
\newblock The computational power of the {W} and {GHZ} states.
\newblock {\em Quantum Information and Computation}, 6(2):173--183, 2006.

\bibitem[EKNP14]{ElkKlaNanPan14PODC}
Michael Elkin, Hartmut Klauck, Danupon Nanongkai, and Gopal Pandurangan.
\newblock Can quantum communication speed up distributed computation?
\newblock In {\em Proceedings of the 33rd ACM SIGACT-SIGOPS Symposium on
  Principles of Distributed Computing (PODC '14)}, pages 166--175, 2014.

\bibitem[Gan09]{Gan09ARXIV}
Maor Ganz.
\newblock Quantum leader election.
\newblock arXiv:0910.4952, 2009.

\bibitem[GBK{\etalchar{+}}08]{GaeBouKurCabWei08PRL}
Sascha Gaertner, Mohamed Bourennane, Christian Kurtsiefer, Ad\'{a}n Cabello,
  and Harald Weinfurter.
\newblock Experimental demonstration of a quantum protocol for byzantine
  agreement and liar detection.
\newblock {\em Physical Review Letters}, 100(070504), 2008.

\bibitem[GKM09]{GavKosMar09DISC}
Cyril Gavoille, Adrian Kosowski, and Marcin Markiewicz.
\newblock What can be observed locally?
\newblock In Idit Keidar, editor, {\em Proceedings of the 23rd International
  Symposium on Distributed Computing (DISC 2009)}, volume 5805 of {\em Lecture
  Notes in Computer Science}, pages 243--257. Springer, 2009.

\bibitem[Hen14]{Hen14IEEETPDS}
Julien~M. Hendrickx.
\newblock Views in a graph: To which depth must equality be checked?
\newblock {\em {IEEE} Transactions on Parallel and Distributed Systems},
  25(7):1907--1912, 2014.

\bibitem[HKAA97]{HigKirAbrAdl97JALG}
Lisa Higham, David~G. Kirkpatrick, Karl~R. Abrahamson, and Andrew Adler.
\newblock Optimal algorithms for probabilistic solitude detection on anonymous
  rings.
\newblock {\em Journal of Algorithms}, 23(2):291--328, 1997.

\bibitem[IR81]{ItaRod81FOCS}
Alon Itai and Michael Rodeh.
\newblock Symmetry breaking in distributive networks.
\newblock In {\em Proceedings of the 22nd Annual IEEE Symposium on Foundations
  of Computer Science}, pages 150--158, 1981.

\bibitem[IR90]{ItaRod90InfoComp}
Alon Itai and Michael Rodeh.
\newblock Symmetry breaking in distributed networks.
\newblock {\em Information and Computation}, 88(1):60--87, 1990.

\bibitem[KKvdB94]{KraKriBer94InfoComp}
Evangelos Kranakis, Danny Krizanc, and Jacob van~den Berg.
\newblock Computing boolean functions on anonymous networks.
\newblock {\em Information and Computation}, 114(2):214--236, 1994.

\bibitem[KLM07]{KayLafMos07Book}
Phillip Kaye, Raymond Laflamme, and Michele Mosca.
\newblock {\em An Introduction to Quantum Computing}.
\newblock Oxford University Press, 2007.

\bibitem[KMT14]{KobMatTan14CJTCS}
Hirotada Kobayashi, Keiji Matsumoto, and Seiichiro Tani.
\newblock Simpler exact leader election via quantum reduction.
\newblock {\em Chicago Journal of Theoretical Computer Science}, 2014(10),
  2014.

\bibitem[KSV02]{KitSheVya02Book}
Alexei~Yu. Kitaev, Alexander~H. Shen, and Mikhail~N. Vyalyi.
\newblock {\em Classical and Quantum Computation}, volume~47 of {\em Graduate
  Studies in Mathematics}.
\newblock AMS, 2002.

\bibitem[Lyn96]{Lyn96Book}
Nancy~A. Lynch.
\newblock {\em Distributed Algorithms}.
\newblock Morgan Kaufman Publishers, 1996.

\bibitem[NC00]{NieChu00Book}
Michael~A. Nielsen and Isaac~L. Chuang.
\newblock {\em Quantum Computation and Quantum Information}.
\newblock Cambridge University Press, 2000.

\bibitem[Nor95]{Nor95DAM}
N.~Norris.
\newblock Universal covers of graphs: Isomoriphism to depth $n-1$ implies
  isomoriphism to all depths.
\newblock {\em Discrete Applied Mathematics}, 56(1):61--74, 1995.

\bibitem[OWJ{\etalchar{+}}08]{OkuWanJiaTanTom08PRA}
Yuta Okubo, Xiang-Bin Wang, Yun-Kun Jiang, Seiichiro Tani, and Akihisa Tomita.
\newblock Experimental demonstration of quantum leader election in linear
  optics.
\newblock {\em Physical Review A}, 77(032343), 2008.

\bibitem[PSK03]{PalSinKum03ARXIV}
Sudebkumar~Prasant Pal, Sudhir~Kumar Singh, and Somesh Kumar.
\newblock Multi-partite quantum entanglement versus randomization: Fair and
  unbiased leader election in networks.
\newblock arXiv:quant-ph/0306195, 2003.

\bibitem[Tan12]{Tan12IEEETPDS}
Seiichiro Tani.
\newblock Compression of view on anonymous networks -- folded view --.
\newblock {\em {IEEE} Transactions on Parallel and Distributed Systems},
  23(2):255 -- 262, 2012.

\bibitem[TKM05]{TanKobMat05STACS}
Seiichiro Tani, Hirotada Kobayashi, and Keiji Matsumoto.
\newblock Exact quantum algorithms for the leader election problem.
\newblock In {\em Proceedings of the 22nd Symposium on Theoretical Aspects of
  Computer Science (STACS '05)}, volume 3404 of {\em Lecture Notes in Computer
  Science}, pages 581--592. Springer, 2005.

\bibitem[TKM12]{TanKobMat12TOCT}
Seiichiro Tani, Hirotada Kobayashi, and Keiji Matsumoto.
\newblock Exact quantum algorithms for the leader election problem.
\newblock {\em ACM Transactions on Computation Theory}, 4(1), 2012.

\bibitem[YK96a]{YamKam96IEEETPDS-1}
Masafumi Yamashita and Tsunehiko Kameda.
\newblock Computing on anonymous networks: Part {I} -- characterizing the
  solvable cases.
\newblock {\em IEEE Transactions on Parallel Distributed Systems}, 7(1):69--89,
  1996.

\bibitem[YK96b]{YamKam96IEEETPDS-2}
Masafumi Yamashita and Tsunehiko Kameda.
\newblock Computing on anonymous networks: Part {II} -- decision and
  memobership problems.
\newblock {\em IEEE Transactions on Parallel Distributed Systems}, 7(1):90--96,
  1996.

\end{thebibliography}

\newcommand{\etalchar}[1]{$^{#1}$}

\appendix
\section*{Appendix}

%\section{Proof of Claim~\ref{cl:eval|x|} and Theorem~\ref{th:symmetric}}
\label{appdx:symmetric}
\begin{proofof}{Claim~\ref{cl:eval|x|}}
To compute the value $\abs{\vect{x}}$, the leader first assigns a unique indetifier
$\mathsf{id}_i$
to each party $i$ and then 
every party collects all pairs $(\mathsf{id}_i,x_i)$
for $i\in [n]$ by using Proposition~\ref{pr:ColorCounting},
from which every party can compute the value $\abs{\vect{x}}$ locally.

To assign unique identifiers,
the leader first sends a message ``$j$'' of 
$O(\log \abs{V})$ 
bits
via every out-port 
$j$.
The leader ignores any message it has received.
Suppose then that a follower $i$ has received a message $\mathsf{m}$.
If this message is the second one that the follower has received, it ignores the message;
otherwise, it sets $\mathsf{id}_i:= \mathsf{m}$
and sends a message $\mathsf{m}\circ j$ via every out-port $j$,
where `$\circ$' means concatenation
(when the follower~$i$ receives multiple messages at once,
the follower arbitrarily breaks the tie and chooses one of them 
as the first message).
Since each message is a sequence of out-port numbers,
this chain of messages uniquely determines a directed path starting from the leader
to each party that receives  one of the messages without ignoring it.
Hence, $\mathsf{id}_i$ is not equal to $\mathsf{id}_j$  whenever $i\neq j$.

The message-passings stop in at most $N$ rounds,
since the number of required rounds
is equal to one plus the length of 
the longest path among those 
determined by the chains of messages,
and any such path
includes each party at most once.
Every party thus moves to the next procedure
after $N$ rounds.
The bit complexity is $O(\abs{V}\abs{E}\log \abs{V})$,
since 
the size of each message is $O(\abs{V}\log \abs{V})$
and each communication link
is used for exactly one message.

To collect all pairs $(\mathsf{id}_i,x_i)$
for $i\in [n]$, all parties run
(a slight modification of)
$\COLORCOUNT$
in Proposition~\ref{pr:ColorCounting}
for $C\deq \set{(\mathsf{id}_i,x_i)\colon i\in [n]}$.
Each party then obtains $C$ 
in $O(N)$ rounds.
Notice that,
unlike the statement of Proposition~\ref{pr:ColorCounting},
the size of $C$ is not constant in this case.
Hence, the bit complexity should be multiplied by at most 
the size of a message:
$O(|V|^2\log |V|)$ 
(since the set $C$ has  $\abs{V}$ pairs of $O(\abs{V}\log \abs{V})$ bits). 
Thus, the bit complexity is $O(N|V|^2\abs{E}\log |V|)$.
\end{proofof}

\begin{proofof}{Theorem~\ref{th:symmetric}}
For each recursion level $t\in [\ceil{\log_2 k}]$,
let $\Xi^{(t)}$ be  the collection of \emph{all possible} equivalence classes
of active parties 
such that each class
$\xi^{(t)}$ in $\Xi^{(t)}$ is the subset of active parties that have
obtained the same sequence $r^{(1)}, \dots, r^{(t)}$
of outcomes in the first through
$t$th levels of recursion
[recall that each outcome is obtained by measruing $\reg{R}_i$ at (modified) STEP~3 or STEP~6].
Note that some $\xi^{(t)}\in \Xi^{(t)}$ may be the empty set.
Define $\Xi^{(0)}\deq\set{\set{i\colon x_i=1}}$.
Since each outcome is a two-bit value,
we have $\abs{ \Xi^{(t)}}=4^t$.
More concretely, $\Xi^{(1)}$ is the finer collection obtained by partitioning $\xi^{(0)}\in \Xi^{(0)}$ into four possible equivalence classes associated with four possible outcomes of $r^{(1)}$: $\set{00, 01, 10, 11}$.
For each $t\in [\ceil{\log_2 k}]$,
let $\varphi^{(t)}$ be a bijection that maps each element in  $\Xi^{(t)}$ to 
the corresponding sequence of outcomes $(r^{(1)}, \dots, r^{(t)})
\in \set{00,01,10,11}^t$. 
We also define $\varphi^{(0)}\colon \xi^{(0)}\mapsto \mathsf{null}$ for the unique element $\xi^{(0)}$ in $\Xi^{(0)}$.
For simplicity, we identify each element $\xi^{(t)}\in \Xi^{(t)}$ with $\varphi^{(t)}(\xi^{(t)})$.

Next, we make a slight modification to Algorithm~QSV as follows
(let {QSV$'$} be the modified version):
If  $\abs{\vect{x}}\ge 2$ or $\abs{\vect{x}}=0$,
then {QSV$'$} outputs $r_i$ at each party $i$,
where $r_i$ is the outcome of measurement on $\reg{R}_i$ made at (modified) STEP~3 or STEP~6 in $Q_{h^*,m^*}$;
if  $\abs{\vect{x}}=1$,  {QSV$'$} outputs ``$\true$'' as the original QSV does.

Now we are ready to present
Algorithm QSYM
for exactly computing a given $f\in \calS_n(k)$.

\begin{algorithm}[t]
\small
\SetAlgoLined
\SetKw{Notation}{Notation}
\KwIn{a classical variable $x_{i}$, 
$k\in \Integer^+$, 
the description of $f\in \calS_k$,
and $N\in \Natural$.}
\KwOut{$\true$ or $\false$.}
\BlankLine
\Begin{
set $x_i(\xi^{(0)}):=x_i$\;
\ForEach{$t:=1,\dots, \ceil{\log _2 k}$}{

\lForEach{$\xi^{(t-1)}\in \Xi^{(t-1)}$}{
	$y_i(\xi^{(t-1)}):= \mbox{QSV$'$}(x_i(\xi^{(t-1)}),N)$
	}\;
	\eIf{$y_i(\xi^{(t-1)})=\true$ for some $\xi^{(t-1)}\in \Xi^{(t-1)}$}
		{
		set $\xi^{(t-1)}_{\min} := \min \set{\xi^{(t-1)} \colon y_i ( \xi^{(t-1)} )  =\true}$\;
		\lIf{$x_i(\xi^{(t-1)}_{\min})=1$}
		{set $S_i:=$``$\mathsf{leader}$''\;}
		\lElse{set $S_i:=$``$\mathsf{follower}$''\;}
		compute $\abs{\vect{x}}$
		by Claim~\ref{cl:eval|x|} with $(x_i,S_i)$\; 
		\Return{$f(\vect{x'})$} for arbitrary $\vect{x'}$ 
		with $\abs{\vect{x'}}=\abs{\vect{x}}$.
		}
		{
		\lForEach{$z\in \set{0,1}^2$ and $\xi^{(t-1)}\in \Xi^{(t-1)}$}
		{set $x_i(\xi^{(t-1)} z):= x_i(\xi^{(t-1)})\wedge [z=y_i(\xi^{(t-1)})]$}.
		}
}
\Return{$f(\vect{x'})$} for arbitrary $\vect{x'}$ with $\abs{\vect{x'}}=k+1$.
}
\SetAlgoRefName{$\mbox{\textbf{QSYM}}$}
\caption{Every party $i$ performs the following operations.} \label{sss}
\end{algorithm}
QSYM 
consists of $\ceil{\log _2 k}$ stages defined as follows:
At stage~$1$,
every party $i$ performs {QSV$'$} with input $(x_i(\xi^{(0)}), N)$,
where $x_i(\xi^{(0)})$ means the input bit $x_i$.
Let $y_i(\xi^{(0)})$ be the output of {QSV$'$}.
If $\abs{\vect{x}}$ is one, then
$y_i(\xi^{(0)})$ is ``$\true$'' by the definiton of  {QSV$'$}.
In this case, only the party $i$ with $x_i=1$ sets $S_i\deq $``$\mathsf{leader}$'',
and then every party can compute $\abs{\vect{x}}$ by
the algorithm in Claim~\ref{cl:eval|x|}, from which
it can compute $f(\vect{x})$ locally.
If $\abs{\vect{x}}\ge 2$ or $\abs{\vect{x}}=0$, then {QSV$'$} 
returns the measurement outcome $y_i(\xi^{(0)})$ to every party $i$.
Every party $i$ then decides which class in $\Xi^{(1)}$
it belongs to by using the value of $y_i(\xi^{(0)})$.
Moreover, for each class $\xi^{(1)}\in \Xi^{(1)}$, 
the party sets the input $x_i(\xi^{(1)})$ for the second stage to $1$
if it is a member of $\xi^{(1)}$
(i.e., $\varphi(\xi^{(1)})=y_i(\xi^{(1)})$),
and to $0$ otherwise.
The algorithm then proceeds to the second stage
to further partition the equivalence classes
(actually, all parties can check whether $\abs{\vect{x}}$ is zero or not
by computing $T_0$ at the beginning of QSYM,
but we design the algorithm as above
just to simplify the descriptions).

More generally,  at each stage $t$,
every party $i$
performs {QSV$'$} with $(x_i(\xi^{(t-1)}), N)$ for each $\xi^{(t-1)}\in \Xi^{(t-1)}$.
If {QSV$'$} returns $y_i(\xi^{(t-1)}) =\true$, then every party 
computes $\abs{\vect{x}}$ by
Claim~\ref{cl:eval|x|}
and outputs $f(\vect{x})$.
Otherwise, {QSV$'$} returns 
$y_i(\xi^{(t-1)})\in \set{0,1}^2$.
For each $z\in  \set{0,1}^2$  and each $\xi^{(t-1)}\in \Xi_{t-1}$,
a unique $\xi^{(t)}\in \Xi_{t}$ satisfies $\xi^{(t)}=\xi^{(t-1)} z$.
Every party $i$ then sets
the input for stage $t+1$ as follows:
\[
x_i(\xi^{(t)})=x_i(\xi^{(t-1)} z) := x_i(\xi^{(t-1)})\wedge [z=y_i(\xi^{(t-1)})],
\]
where  $[z=y_i(\xi^{(t-1)})]$ is the predicate,  which is 1 if and only if 
$z=y_i(\xi^{(t-1)})$;
in other worlds,
for each $\xi^{(t-1)} z\in \Xi_{t}$,
$x_i(\xi^{(t-1)} z)$ is 1 if 
the party $i$ is a member of $\xi^{(t-1)} z$
and $0$ otherwise.

If the algorithm runs up to the  $\ceil{\log _2 k}$-th stage
and  does not output ``$\true$'' for any $\xi^{(\ceil{\log _2 k})}\in \Xi^{(\ceil{\log _2 k})}$,
then $\abs{\vect{x}}$ should be larger than $k$.
Every party thus
chooses an arbitrary $\vect{x'}\in \set{0,1}^N$ with
$\abs{\vect{x'}}>k$,
and computes the value of $f$ on $\vect{x'}$.

For each $t$, all the communication is devoted to
running {QSV$'$} and the algorithm in Claim~\ref{cl:eval|x|},
both of which require $O(N)$ rounds and 
a polynomially bounded number of (qu)bits for communication.
\end{proofof}

\end{document}